\newcounter{todocounter}
\newcommand{\todonum}{\stepcounter{todocounter}{(\thetodocounter)}}
\def\shownotes{0}   
\newcommand{\authnote}[2]{{ $\ll$\textsf{\footnotesize \todonum\  #1 notes:  #2}$\gg$}}
\newcommand{\authnote}[2]{}
\newcommand{\gs}[1]{{\color{magenta}\authnote{Gilad}{#1}}}
\newcommand{\IA}[1]{{\color{red}\authnote{Ittai}{#1}}}
\newcommand{\ignore}[1]{}
\newcommand{\hide}[1]{}
\providecommand{\algorithmname}{Algorithm}
\begin{document}

\newtheorem{claim}{Claim}
\newtheorem{introtheorem}{Theorem}
\newtheorem{assumption}{Assumption}

\newcommand{\namedref}[2]{\hyperref[#2]{#1~\ref*{#2}}}
\newcommand{\sectionref}[1]{\namedref{Section}{#1}}
\newcommand{\appendixref}[1]{\namedref{Appendix}{#1}}
\newcommand{\subsectionref}[1]{\namedref{Subsection}{#1}}
\newcommand{\theoremref}[1]{\namedref{Theorem}{#1}}
\newcommand{\defref}[1]{\namedref{Definition}{#1}}
\newcommand{\figureref}[1]{\namedref{Figure}{#1}}
\newcommand{\claimref}[1]{\namedref{Claim}{#1}}
\newcommand{\lemmaref}[1]{\namedref{Lemma}{#1}}
\newcommand{\tableref}[1]{\namedref{Table}{#1}}
\newcommand{\corollaryref}[1]{\namedref{Corollary}{#1}}
\newcommand{\propertyref}[1]{\namedref{Property}{#1}}
\newcommand{\appref}[1]{\namedref{Appendix}{#1}}
\newcommand{\propref}[1]{\namedref{Proposition}{#1}}
\newcommand{\dagree}{{\em Detect/Agree\ }}

\newcommand{\F}{\mathcal{F}}
\renewcommand{\P}{\mathcal{P}}
\newcommand{\false}{\mathit{false}}
\newcommand{\true}{\mathit{true}}
\newcommand{\ack}{\mathit{ack}}
\newcommand{\info}{{\mbox{\it correct}}}
\newcommand\all{N}
\newcommand{\none}{\vec{\tau}'}
\newcommand{\best}{\mathrm{best}}
\newcommand{\punish}{\mathrm{punish}}
\newcommand{\majority}{\mathrm{majority}}
\newcommand\alone{\{i\}}
\newcommand{\secret}{\mathit{sec}}
\newcommand{\sigmact}{\sigma_{\mbox{\footnotesize\sc ct}}}
\newcommand{\Gammact}{\Gamma_{\!\mbox{\footnotesize\sc ct}}}
\newcommand{\vecsigmact}{\vec{\sigma}_{\mbox{\footnotesize\sc ct}}}
\newcommand{\uT}{u^{\mbox{\footnotesize\sc t}}}

\newcommand{\ceil}[1]{\left\lceil{#1}\right\rceil}
\newcommand{\pare}[1]{\left({#1}\right)}
\newcommand{\set}[1]{\left\{{#1}\right\}}
\newcommand{\ang}[1]{\left\langle {#1} \right\rangle}
\newcommand{\range}[2]{\set{{#1},\dots,{#2}}}
\setlength{\parskip}{.5ex}
\def\beginsmall#1{\vspace{-\parskip}\begin{#1}\itemsep-\parskip}
\def\endsmall#1{\end{#1}\vspace{-\parskip}}
\newcommand{\R}{\mathcal{R}}
\newenvironment{RETHM}[2]{\trivlist \item[\hskip 10pt\hskip\labelsep{\bf
#1\hskip 5pt\relax\ref{#2}.}]\it}{\endtrivlist}
\newcommand{\rethm}[1]{\begin{RETHM}{Theorem}{#1}}
\newcommand{\repro}[1]{\begin{RETHM}{Proposition}{#1}}
\newcommand{\relem}[1]{\begin{RETHM}{Lemma}{#1}}
\newcommand{\recor}[1]{\begin{RETHM}{Corollary}{#1}}

\newcommand{\erethm}{\end{RETHM}}
\newcommand{\erepro}{\end{RETHM}}
\newcommand{\erelem}{\end{RETHM}}
\newcommand{\erecor}{\end{RETHM}}

\def\appendix{\par
\section*{APPENDIX}
\setcounter{section}{0}
\setcounter{subsection}{0}
\def\thesection{\Alph{section}} }

\title{Revisiting Asynchronous Fault Tolerant Computation with Optimal Resilience}


\author{Ittai Abraham}
\affiliation{%
  \institution{VMware Research}
}

\author{Danny Dolev}
\affiliation{%
  \institution{Hebrew University}
}

\author{Gilad Stern}
\affiliation{%
  \institution{Hebrew University}
}

\begin{abstract}
The celebrated result of Fischer, Lynch and Paterson is the fundamental lower bound for asynchronous fault tolerant computation: any 1-crash resilient asynchronous agreement protocol must have some (possibly measure zero) probability of not terminating.
In 1994, Ben-Or, Kelmer and Rabin published a \textit{proof-sketch} of a lesser known lower bound for asynchronous fault tolerant computation with optimal resilience against a Byzantine adversary: if $n\le 4t$ then any t-resilient asynchronous verifiable secret sharing protocol must have some \textbf{non-zero} probability of not terminating.

Our main contribution is to revisit this lower bound and provide a rigorous and more general proof.
Our second contribution is to show how to avoid this lower bound. We provide a protocol with optimal resilience that is almost surely terminating for a \textit{strong common coin} functionality. Using this new primitive we provide an almost surely terminating protocol with optimal resilience for asynchronous Byzantine agreement that has a new \textit{fair validity} property. To the best of our knowledge this is the first   asynchronous Byzantine agreement with fair validity in the information theoretic setting.

\end{abstract}

\maketitle

\section{Introduction}

One of the most important models of distributed computing is the \emph{Asynchronous communication model}. Intuitively, this model captures the highest level of network un-reliability. It allows the adversary to delay each message arrival in an adaptive manner up to any finite amount. 
A basic question of distributed computing is:

\emph{Is there a fundamental limit to fault tolerant computation in the Asynchronous model?}

The celebrated Fischer, Lynch, and Merritt (FLP) \cite{FLP85} impossibility result from 1985 is perhaps the most well known such fundamental limitation. It states that reaching \textit{agreement}, even in the face of just one crash failure, is impossible for deterministic protocols. More formally, FLP~\cite{FLP85} prove that any protocol that solves Agreement in the asynchronous model that is resilient to at least one crash failure must have a \emph{non-terminating} execution. 
Thus, no protocol can solve Agreement in this model in finite time, but using randomization, it is possible to define a measure on the number of rounds and obtain protocols that have a finite \emph{expected} termination.
Given the FLP~\cite{FLP85} impossibility it is natural to ask: 

\textit{Is this potentially measure zero event of non-termination the only limitation for fault tolerant computation in the asynchronous model?}

In 1983,  Ben-Or, Canetti, and Goldreich  \cite{BCG93} initiated the study of secure multiparty computation in the asynchronous model. Their fundamental result is that the answer above is \textit{yes} when there are $n >4t$ servers and an adversary that can corrupt at most $t$ parties in a Byzantine (fully malicious) manner. They show that \emph{perfect} security with finite expected run time can be obtained for any functionality.

The BCG \cite{BCG93} work left open the domain of $3t<n \le 4t$ (with  $n=3t$ it is known that Byzantine agreement is impossible, see \cite{FLM}).
In 1993, Canetti and Rabin \cite{CR93} obtained a protocol for Asynchronous Byzantine Agreement with optimal resilience ($3t<n$). Their protocol had an "annoying property": the non-termination event has a \textbf{non-zero} probability measure. This problematic non-zero probability of non-termination came from their verifiable secret sharing protocol.
In 1994, Ben-Or, Kelmer and Rabin \cite{BKR94} addressed this problem. They provided an optimal resilience  asynchronous secure multiparty computation  protocol with the same "annoying property": the non-termination event has a \textbf{non-zero} probability measure \footnote{BCG \cite{BCG93}: "our protocol, as well as the verifiable secret sharing protocol of [CR93], have the following annoying property: the exponentially small error probability includes an exponentially small non-zero probability of not terminating. This should be contrasted with the asynchronous Byzantine Agreement problem where the randomized protocol terminates with probability 1."}. Moreover, BKR \cite{BKR94} claim that this is unavoidable. That is, if $n\le 4t$ then any t-resilient asynchronous verifiable secret sharing protocol $A$ must have some \textbf{non-zero} probability $q_A>0$ of not terminating. Unfortunately, BKR \cite{BKR94} only provided a \textit{proof-sketch} of the proof of this lower bound.

\subsection*{Our contributions: lower bounds on asynchronous verifiable secret sharing with optimal resilience}

25 years after the publication of the proof-stretch of BKR \cite{BKR94}, the main contribution of this paper is a rigorous proof of the lower bound theorem. We believe that our work will help provide clarity and better understanding of the asynchronous model and its impossibility results. In addition, our lower bound proof improves over the BKR proof-sketch in two important ways:

 One weakness of the BKR \cite{BKR94} proof-sketch is that its arguments only imply a lower bound for verifiable secret sharing schemes that have perfect hiding and binding properties. This raises a natural question: Can allowing some error probability in the AVSS scheme remove the need for a non-zero probability of non-termination? Our proof strengthens the BKR lower bound claim and proves this is not the case. We prove that even AVSS schemes with constant error must have a a non-zero probability of non-termination.

A second  weakness of the BKR \cite{BKR94} proof-sketch is that its arguments assume the share (and reconstruct) protocols terminate in a fixed (constant) number of rounds. VSS protocols whose share terminates with probability 1 have been shown to be useful in other contexts \cite{ADRH06}.  
Our proof strengthens the BKR \cite{BKR94} lower bound claim and proves that a non-zero probability of non-termination must occur even if the share and reconstruct protocols only terminate with probability 1.

\subsection*{Our contributions: upper bounds on strong common coin and asynchronous Byzantine agreement with fair validity}

What are the implications of this lower bound? Does it imply that \textit{all} optimal resilience secure computation must have a non-zero probability of non-termination? We know that this is not the case. In fact, Ben-Or \cite{Ben-Or} and Bracha \cite{B87} prove that Byzantine agreement has a measure zero probability of non-termination (it almost surely terminates) with optimal resilience. However the expected time of termination of these protocols is exponential.
The work of \cite{SVSS} shows that almost surely termination is possible even with a polynomial expected number of rounds. This is obtained using  a certain type of a \emph{weak common coin}  functionality that is also almost surely terminating.
This gap raises a natural question: 

\textit{Are there other functionalities (that are stronger than a weak coin, but weaker than verifiable secret sharing) that can be implemented in the asynchronous model for $n=3t+1$ that are almost surely terminating?}

Our first upper bound  contribution is to answer this question in the affirmative. We show that a certain type of a \emph{strong common coin} is possible to implement in an almost surely terminating manner. The difference between a weak common coin and a strong common coin is that in a strong common coin protocol, all parties output the same value while in a weak common coin, with constant probability, different parties may output different values for the coin.

What is the advantage of a strong common coin over a weak common coin? With a strong common coin we know how to obtain asynchronous Byzantine agreement with a  \textit{fair validity} property. We do not know how to obtain this validity property with a weak coin.

\IA{say more about fair validity...}

Our second upper bound contribution is a Byzantine Agreement protocol in the Asynchronous model for $n=3t+1$  with \textit{fair validity} that is almost surely terminating. To the best of our knowledge this is the first Asynchronous Byzantine Agreement protocol  with \textit{fair validity} in the information theoretic setting.

\section{Lower Bound}\label{sec:main}

\begin{definition}
A Byzantine AVSS protocol, comprised of a pair of protocols $\left(S,R\right)$\footnote{$S$ is the protocol for sharing a secret and $R$ is the protocol for reconstructing it.},
has a designated dealer called $D$, which receives a secret $s$
from a finite field $\F$ as input. For $\epsilon>0$, such a protocol
is called an almost-surely terminating $\left(1-\epsilon\right)$-correct $t$-resilient
AVSS protocol if the three following properties hold for every adversary
controlling $t$ parties at most, and any message scheduling:
\begin{enumerate}
\item \textbf{Termination}:
\begin{enumerate}
\item If the dealer is nonfaulty and all nonfaulty parties participate in protocol
$S$, then each nonfaulty party will almost-surely eventually complete protocol $S$.
\item If some nonfaulty party completed protocol $S$, then each nonfaulty party that participates in $S$
will almost-surely eventually complete protocol $S$.
\item If all of the nonfaulty parties finished protocol $S$ and began protocol
$R$, they will all almost-surely complete protocol $R$.
\end{enumerate}
\item \textbf{Correctness}.
Once the first nonfaulty party has completed protocol $S$, there exists
some value $r\in \F$ such that with a probability of at least $\left(1-\epsilon\right)$:
\begin{enumerate}
\item If the dealer is nonfaulty, $r=s$.
\item Every nonfaulty party that completes protocol $R$ outputs the value
$r$.
\end{enumerate}
\item \textbf{Secrecy}.
If the dealer is nonfaulty, and no honest party has began protocol $R$,
no adversary can gain any information about $s$. 
More precisely, denote $V^s$ to be the adversary's view of an execution of $S$ with a nonfaulty dealer sharing $s$ before some nonfaulty party calls protocol $R$. 
If the dealer is nonfaulty, then for any given adversary
and message scheduling, the distribution of $V^s$
is the same for all possible secrets $s$. 
\end{enumerate}
\end{definition}

If some party almost-surely completes the protocol, it must complete the protocol in finite time with probability 1.
This also means that for every $\epsilon>0$ there exists some number $N\in\mathbb{N}$ such that the probability that the party exchanges more than $N$ messages with all parties during protocol $S$ is less than $\epsilon$.
It is important to note that those values might need to be adjusted based on the adversary and scheduling as well.
Similarly, if all parties almost-surely terminate, for every $\epsilon>0$ there exists some $N\in\mathbb{N}$ such that the probability that there exists a nonfaulty party who exchanges more than $N$ messages is no greater than $\epsilon$.
The main result shown in this section is proving the following theorem:
\begin{theorem}\label{thm:impossibility} For any $\epsilon>0$ and $n\leq 4t$ there is no terminating $\left(\frac{1}{2}+\epsilon\right)$-correct $t$-resilient Byzantine AVSS protocol $\left(S,R\right)$.
\end{theorem}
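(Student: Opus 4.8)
\medskip
\noindent\emph{Proof proposal.} The plan is to assume a protocol $(S,R)$ as in the statement exists (we may take $n=4t$; for $n<4t$ the same argument works with smaller groups), fix two secrets $0,1\in\F$, partition the parties into four sets $A_1,A_2,A_3,A_4$ of size $\le t$ with the dealer in $A_1$, and reach a contradiction from a single ``split-brain'' execution in which $A_1$ is the only corrupted group. The first move is to cash in almost-sure termination: for any $\delta>0$ there is a finite $N$ so that, in each of the (finitely many) executions we build, with probability $\ge1-\delta$ every nonfaulty party finishes both $S$ and $R$ after exchanging at most $N$ messages. This is used (i) to compare executions --- ``party $p$ sees the same thing in $X$ and in $Y$'' becomes equality of distributions up to statistical distance $O(\delta)$ after coupling the length-$\le N$ prefixes --- and (ii) to keep $A_2$ and $A_3$ from interacting within the relevant window: by delaying all $A_2\leftrightarrow A_3$ traffic (and, during $R$, $A_3\to A_4$ traffic) until more than $N$ messages have passed, with probability $\ge 1-\delta$ the parties have already produced their reconstructed values before those messages arrive. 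That termination is \emph{almost sure} --- i.e.\ the protocol has no positive-probability stalling ``escape hatch,'' unlike the $n=3t+1$ protocols --- is exactly what the attack exploits: a protocol allowed to hang while $A_2$ and $A_3$ are isolated could evade it.

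Next I would fix the two honest-dealer reference executions. For $b\in\{0,1\}$ let $E_b$ be the execution in which the dealer is honest and shares $b$, the adversary controls $A_4$ but runs it on the honest code (a ``benign'' corruption), and the scheduler follows a fixed schedule delaying $A_2\leftrightarrow A_3$ and (during $R$) $A_3\to A_4$ as above. By Secrecy applied with adversary $A_4$ (and before any nonfaulty party begins $R$), $A_4$'s view of $S$ is identically distributed in $E_0$ and $E_1$, so I couple $E_0$ and $E_1$ so that this view --- hence in particular all $S$-messages exchanged between $A_4$ and $A_1,A_2,A_3$ --- is literally the same in both. In each $E_b$, Termination~1(a) and~1(c) make the nonfaulty parties $A_1,A_2,A_3$ complete $S$ and then $R$ almost surely (each needing only the $\ge n-t$ parties in $A_1\cup A_2\cup A_4$, never $A_3$), and Correctness then pins the value they reconstruct to $b$ with probability $\ge\tfrac{1}{2}+\epsilon$; in particular $\Pr[A_2\text{ reconstructs }0\text{ in }E_0]\ge\tfrac{1}{2}+\epsilon$. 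Let $E_1'$ be $E_1$ with $A_4$ re-cast as a genuinely Byzantine party that replays the $S$-behavior of the benign $A_4$ and, during $R$, replays the $R$-messages $A_4$ sends in $E_0$; the dealer is still honest, so Correctness~2(a) gives $\Pr[A_3\text{ reconstructs }1\text{ in }E_1']\ge\tfrac{1}{2}+\epsilon$.

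Now I would build the split-brain execution $E^\ast$: the adversary controls exactly $A_1$, so the dealer is faulty (Correctness~2(a) is vacuous, but Correctness~2(b) still demands all nonfaulty parties reconstruct the \emph{same} value). Corrupt $A_1$ internally runs the coupled pair $(E_0,E_1)$, feeding the real $A_2$'s messages into the $E_0$-copy, the real $A_3$'s messages into the $E_1$-copy, and the real $A_4$'s messages into both (consistent during $S$ since the copies agree there), while it sends the real $A_2$, $A_3$, $A_4$ the dealer-messages of $E_0$, $E_1$, $E_0$ respectively; the scheduler delays $A_2\leftrightarrow A_3$ and $A_3\to A_4$ as in $E_b$. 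The key claims --- proved by induction on message count, i.e.\ by checking the circular dependencies close up to a unique fixed point --- are that through the end of $R$ the view of $A_2$ in $E^\ast$ equals its view in $E_0$, and the view of $A_3$ equals its view in $E_1'$. With the two displayed probabilities and the $O(\delta)$ coupling loss this yields $\Pr[A_2\text{ reconstructs }0\text{ in }E^\ast]\ge\tfrac{1}{2}+\epsilon-O(\delta)$ and $\Pr[A_3\text{ reconstructs }1\text{ in }E^\ast]\ge\tfrac{1}{2}+\epsilon-O(\delta)$. But $A_2,A_3$ are both nonfaulty in $E^\ast$, so Correctness~2(b) gives a single $r\in\F$ with $\Pr[A_2\text{ and }A_3\text{ both reconstruct }r]\ge\tfrac{1}{2}+\epsilon$; choosing $\delta<\epsilon$, this is impossible for every $r$ --- if $r\ne0$ then $\Pr[A_2\text{ reconstructs }r]<\tfrac{1}{2}+\epsilon$, and if $r\ne1$ then $\Pr[A_3\text{ reconstructs }r]<\tfrac{1}{2}+\epsilon$ --- the desired contradiction.

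The main obstacle, I expect, is making the two view-equality claims for $E^\ast$ fully rigorous: one must pin down $A_1$'s simulator and the schedules so that all the circular dependencies (the message $A_2$ sends $A_4$ depends on what $A_4$ sent $A_2$, which depends on the simulator's output, which depends on both; $A_3$'s sharing view contains an $A_4$-message that must ``look like'' a $0$-world message even though $A_3$ is being driven along the $1$-world; $A_4$ must still be able to finish $S$ and feed $A_2$ enough during $R$ without ever hearing from $A_3$ within the window; every invoked Termination clause must genuinely apply to a \emph{nonfaulty} party) resolve consistently --- and this is precisely where Secrecy is indispensable: without the coupling that makes $A_4$'s sharing view, and hence $A_4$'s cross-messages with $A_1,A_2,A_3$, secret-independent, the $0$-side and $1$-side simulations cannot be stitched together through the single honest party $A_4$. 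A secondary but real technical point is the bookkeeping around the $N$-message truncation (e.g.\ the benign $A_4$ in $E_b$ has no guaranteed progress of its own in $R$, so one must separately argue it still delivers enough $R$-messages to $A_2$ with probability $\ge1-\delta$); this is why the theorem is about almost-sure termination, and why the $\tfrac{1}{2}+\epsilon$ slack rather than perfect correctness is comfortably enough to absorb the losses.
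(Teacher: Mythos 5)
Your proposal has two genuine gaps, and together they are fatal to the route as described. The first is the endgame. In $E^\ast$ the dealer is faulty, so only Correctness 2(b) applies, and the value $r$ it promises is not a single field element fixed over the whole probability space: it is only determined once the first nonfaulty party completes $S$, hence it is a random variable depending on the (adversarially influenced) execution prefix. Under that reading, the marginal bounds $\Pr[A_2\text{ outputs }0]\ge\tfrac12+\epsilon-O(\delta)$ and $\Pr[A_3\text{ outputs }1]\ge\tfrac12+\epsilon-O(\delta)$ do not contradict ``with probability $\ge\tfrac12+\epsilon$ all nonfaulty completers output the same $r$'': for instance, a joint distribution putting mass $\tfrac12-\epsilon$ on $(A_2,A_3)=(0,1)$, mass $2\epsilon$ on $(0,0)$ and mass $\tfrac12-\epsilon$ on $(1,1)$ satisfies all three constraints whenever $\epsilon\le\tfrac16$. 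This is exactly why the paper never derives its contradiction inside the faulty-dealer execution: its final contradiction lives in an execution where the dealer is \emph{honest} and shares $0$, so that clause 2(a) pins $r=s=0$ and the guarantee becomes ``$C$ outputs $0$ with probability $\ge\tfrac12+\epsilon$,'' which the attack then violates.

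The second gap is the pair of view-equality claims for $E^\ast$, which you defer to an ``induction on message count'' but which cannot be established with probability $1-O(\delta)$ as claimed. Your corrupted $A_1$ must keep the joint distribution of ($A_2$'s view, $A_4$'s view) equal to the $0$-world one and simultaneously the joint of ($A_3$'s view, $A_4$'s view) equal to the $1$-world one; but $A_4$ is honest in $E^\ast$, its messages to $A_2$ and to $A_3$ are produced by a single state driven by its fresh private randomness and by honest-to-honest traffic that $A_1$ neither sees nor controls. Secrecy equalizes only the \emph{marginal} of $A_4$'s view across the two worlds; the correlations between $A_4$'s view and each split party's view are world-dependent, and an online adversary cannot realize your coupling without, in effect, guessing the honest parties' private randomness. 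This is precisely the content of the paper's first attack (Claim 1): the faulty dealer samples $s_A,s_B,s_{AB},s_{AD},s_{BD}$ in advance, and the split-brain materializes only on the event $G$ that its guesses of $A$'s and $B$'s randomness are correct --- a nonzero but typically negligible probability, which (also because of the first gap) yields no direct contradiction. The missing idea in your proposal is the paper's second attack (Claim 2): rerun the scenario with an honest dealer sharing $0$ and corrupt one of the \emph{split} parties ($B$), which knows its own realized view and can resample a fake view consistent with secret $1$ conditioned on the actually exchanged $A$--$B$ messages; this simulation succeeds with probability $1-\epsilon'$, transports $C$'s conditional output distribution from the first attack into an honest-dealer execution, and only there collides with validity (choosing $\epsilon'<2\epsilon$). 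Without some analogue of this amplification-and-relocation step, and with the endgame as written, the argument does not go through.
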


Let $n=4$, $t=1$ and assume a binary secret $s\in\left\{ 0,1\right\}$.
Using standard methods, this result can be expanded to any $4t\geq n\geq 3t+1$, and to a multivalued secret.
Let the parties be $A,B,C,D$, and let $D$ be the dealer. 

By way of contradiction, assume the parties run an almost-surely terminating $\left(\frac{1}{2}+\epsilon\right)$-correct $t$-resilient Byzantine AVSS protocol.
The theorem is proven using two main claims. 
The first claim describes possible malicious behaviour by a faulty dealer during protocol $S$. 
The second claim describes possible malicious behaviour by another party in protocol $R$.

Before we state the first claim we define a distribution of views where the system is synchronous, the dealer $D$ and parties $A,B$ are nonfualty and party $C$ has crashed.
In such a setting, from the Termination property of $AVSS$ all nonfaulty parties almost-surely complete protocol $S$.
Set some $1>\epsilon'> 0$ and let $N\in\mathbb{N}$ be a number such that if processors $A,B,D$ participate in protocol $S$ in the setting described above, the probability that one of them runs for more than $N$ rounds throughout protocol $S$ is smaller than $\epsilon'$.
From the Termination property of the protocol, and since the setting is synchronous, such a value $N$ must exist.
Define $long$ to be the event in which either $A,B$ or $D$ run for longer than $N$ rounds throughout protocol $S$, and $\overline{long}$ to be its complement.

\begin{definition}
For every $s\in\left\{0,1\right\},P\in\left\{A,B\right\}$, let $\pi_{s,P}$ be the distribution of $P$'s view when a nonfaulty $D$ shares the value $s$, conditioned upon the event $\overline{long}$.
\end{definition}

In the attack described in the first claim, the dealer, $D$ has a non zero probability of causing parties $A$ and $B$ to complete protocol $S$, but the conditional distribution of party $A$'s view is $\pi_{0,A}$ while the conditional distribution of party $B$'s view is $\pi_{1,B}$.

\begin{claim}\label{claim:firstattack}
A faulty dealer $D$ has some strategy such that with some nonzero probability the following event holds:
\begin{enumerate}
    \item Parties $A$ and $B$ complete protocol $S$.
    \item The conditional distribution of party $A$'s view is $\pi_{0,A}$.
    \item The conditional distribution of party $B$'s view is $\pi_{1,B}$.
\end{enumerate}
\end{claim}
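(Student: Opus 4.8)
The plan is to produce a single behaviour of a corrupted dealer $D$, together with a message schedule, under which --- conditioned on a positive-probability event --- the local executions of $A$, $B$ and $D$ are statistically indistinguishable from a run of the reference world defined just before the statement (synchronous, nonfaulty dealer sharing $0$, $C$ crashed), and then to use the Secrecy property to replace ``sharing $0$'' by ``sharing $1$'' in the distribution of $B$'s view. Concretely, the adversary corrupts $D$, instructs it to faithfully run the honest dealer protocol on input $0$, schedules all messages among $A$, $B$, $D$ exactly as the synchronous reference world would, and delays every message sent by the (nonfaulty) party $C$ indefinitely. In the reference world $C$ has crashed and is never heard from; in this attack $C$ is merely never heard from; so the joint distribution of the views of $A$, $B$ and $D$ in the attack equals the one in the reference world, and in particular $C$'s coins affect nothing that $A$, $B$, $D$ observe.

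From this indistinguishability I would read off the easy consequences. The event $\overline{long}$ is a deterministic function of the round counts of $A$, $B$, $D$, hence of their joint view, so it has the same probability in the attack as in the reference world, namely at least $1-\epsilon'>0$; this is the nonzero-probability event of the claim. The predicate ``$P$ has completed $S$'' is a function of $P$'s own view, and in the reference world --- nonfaulty dealer, all nonfaulty parties participating --- the first Termination clause guarantees that $A$ and $B$ almost surely complete $S$; hence the same holds in the attack, and in particular conditioned on $\overline{long}$ both $A$ and $B$ complete $S$. By construction, conditioned on $\overline{long}$, the view of $A$ is distributed as $\pi_{0,A}$ and the view of $B$ as $\pi_{0,B}$.

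The last step is to identify $\pi_{0,B}$ with $\pi_{1,B}$. Consider the reference worlds indexed by the secret $s\in\{0,1\}$ (synchronous, nonfaulty $D$ sharing $s$, $C$ crashed) against a passive adversary that has corrupted $B$. The Secrecy property says that the distribution of this adversary's view of $S$ before any nonfaulty party starts $R$ --- that is, of $B$'s view of $S$ --- is independent of $s$. So the unconditioned distribution of $B$'s view of $S$ is the same for $s=0$ and $s=1$, which gives $\pi_{0,B}=\pi_{1,B}$; plugging this in, conditioned on $\overline{long}$ the view of $B$ is distributed as $\pi_{1,B}$ while $A$'s is $\pi_{0,A}$ and both complete $S$ --- exactly the statement.

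I expect the delicate point to be precisely this identification $\pi_{0,B}=\pi_{1,B}$. Secrecy as stated pins down only the \emph{unconditioned} marginal of the corrupted party's view, whereas $\pi_{0,B}$ and $\pi_{1,B}$ are conditioned on $\overline{long}$, an event that also involves the round counts of $A$ and $D$, which are not functions of $B$'s view. I would handle this by noting that the only component of $\overline{long}$ visible to $B$ (``$B$ itself runs at most $N$ rounds'') is a function of $B$'s view, while the rest (``$A$ and $D$ run at most $N$ rounds'') shifts $B$'s conditional view distribution by at most $\Pr[long]/\Pr[\overline{long}]\le \epsilon'/(1-\epsilon')$ in total variation; since $\epsilon'$ is a free parameter this makes $\pi_{0,B}$ and $\pi_{1,B}$ as close as desired, and the slack is absorbed into the error budget where the claim is later used. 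A second point to state carefully --- standard, but genuinely needed here --- is that Termination gives no guarantee once $D$ is corrupted, so ``$A$ and $B$ complete $S$'' must be obtained by transferring their local behaviour from the indistinguishable honest-dealer world rather than by applying Termination to the attack itself.
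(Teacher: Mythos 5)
Your attack is not the paper's attack, and the difference is not cosmetic: it loses exactly the content of the claim that the rest of the lower bound consumes. In the paper's proof the corrupted dealer does \emph{not} run the honest protocol on $0$; it first samples a guess $s_A\leftarrow R^0_A|\overline{long}$, messages $s_{AB}\leftarrow M^0_{AB}|r^0_A=s_A,\overline{long}$, then a guess $s_B\leftarrow R^1_B|m^1_{AB}=s_{AB},\overline{long}$, and finally $s_{AD}$ from the $0$-world and $s_{BD}$ from the $1$-world, both conditioned on the same $A$--$B$ transcript $s_{AB}$. Conditioned on the guess event $G=\{s_A=r_A,\,s_B=r_B\}$ (which has nonzero probability), the responses of $A$ and $B$ become deterministic, both complete $S$ (this is where the ``transfer from the indistinguishable honest world'' argument you correctly anticipate is used), and the \emph{joint} view is the stitched one: $A$'s view distributed as $V^0_A|\overline{long}$, $B$'s view as $V^1_B|\overline{long}$, coupled through $s_{AB}$. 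Your attack instead produces an honest $0$-run, and you recover item (3) only by arguing that the marginal $\pi_{0,B}$ equals $\pi_{1,B}$ via Secrecy. That reading trivializes the claim: the conditional on $G$ of the stitched pair is what defines the random variable $O_C$ and the case split $\Pr[O_C=0]\le\tfrac12$ or $\Pr[O_C=1]\le\tfrac12$, and it is what the Claim~\ref{claim:secondattack} adversaries simulate --- a corrupted $B$ with an honest dealer sharing $0$ re-samples $\hat s_B,\hat s_{BD}$ from the $1$-world (Lemma~\ref{lem:Battack}), and symmetrically a corrupted $A$ with an honest dealer sharing $1$ re-samples from the $0$-world (Lemma~\ref{lem:Aattack}). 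With your honest-share-of-$0$ attack there is no such two-sided structure to simulate: following it by $R$ (with $D$ silent) nothing stops $C$ from outputting $0$ with high probability, so no contradiction with $(\tfrac12+\epsilon)$-correctness can be extracted. So even if one accepts the literal marginal reading of the statement, the proof you give does not establish what the claim is there to establish, and the key idea of the paper's proof --- the dealer guessing $r_A,r_B$ and gluing a $0$-consistent view for $A$ to a $1$-consistent view for $B$ along common $A$--$B$ messages --- is missing.

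A secondary issue is your handling of $\pi_{0,B}=\pi_{1,B}$. You propose only an approximate identification, with total-variation slack $\epsilon'/(1-\epsilon')$ to be ``absorbed later''; but the claim asserts the conditional distributions exactly, and the downstream equalities in Lemmas~\ref{lem:Battack} and~\ref{lem:Aattack} are exact equalities of probabilities, not approximations. The paper gets exactness for free: under its simplifying assumptions (synchrony among $A,B,D$ and explicit completion announcements), whether $\overline{long}$ occurred is a deterministic function of each party's own view, so the Secrecy-based indistinguishability argument of Lemma~\ref{lem:indlong} applies verbatim to the distributions conditioned on $\overline{long}$, for $B$ just as for $A$. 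If you want to keep your route to the identity $\pi_{0,B}=\pi_{1,B}$, you should prove it exactly this way rather than by a TV bound --- but that alone will not repair the structural gap above.
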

Intuitively, the dealer will try to make party $A$ and party $B$ complete protocol $S$ while seeing contradictory worldviews. Conditioned on this nonzero probability event the following happens.
On the one hand, in $A$'s view, the execution of $S$ looks like one in which $D$ shared the value $0$ and $C$ was faulty and silent (corresponding to the distribution $\pi_{0,A}$).
On the other hand, in $B$'s view the execution of $S$ looks like one in which $D$ shared the value $1$ and $C$ was faulty and silent (corresponding to the distribution $\pi_{1,B}$).
After $A$ and $B$ complete protocol $S$, $C$ will start participating in the protocol, and complete protocol $S$ as well. 
Then all three parties will participate in protocol $R$, and eventually complete it and output some value $r$. This value $r$ will be used in the next attack in the second claim.

The crux of this attack is that parties $A$ and $B$ should not know whether $0$ or $1$ is shared during $S$ in a run in which $D$ is nonfaulty because of the Secrecy property.
Leveraging this ambiguity, the faulty dealer tries to make them complete $S$ with incompatible views, which they will have to resolve during protocol $R$.

Afterwards, in order to prove the main result, we prove the following claim for every $1>\epsilon'>0$ (arbitrarily close to 0):
\begin{claim}\label{claim:secondattack}
Without loss of generality, the adversary has a strategy controlling party $B$ such that when a nonfaulty $D$ shares the value $0$, with probability at least $1-\epsilon'$:
\begin{itemize}
    \item $A$'s view during protocol $S$ is distributed according to the distribution $\pi_{0,A}$,
    \item $C$ outputs $0$ at the end of protocol $R$ with probability $\frac{1}{2}$ or less.
\end{itemize}
\end{claim}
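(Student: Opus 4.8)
The plan is to reduce Claim~\ref{claim:secondattack} to Claim~\ref{claim:firstattack}: the faulty $B$, together with the scheduler, will replay the event promised by Claim~\ref{claim:firstattack} inside an execution with an \emph{honest} dealer sharing $0$, so that $C$ inherits the ``wrong'' reconstruction from that event while $A$ sees nothing unusual. Fix $\epsilon'$ together with the associated bound $N$ and event $\overline{long}$, let $\sigma$ be the faulty-dealer strategy from Claim~\ref{claim:firstattack}, and let $G$ be the event it guarantees, so $\Pr[G]>0$ and, conditioned on $G$, both $A$ and $B$ complete $S$ with $A$'s view distributed as $\pi_{0,A}$ and $B$'s view distributed as $\pi_{1,B}$.

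The first step is to identify the reconstructed value. Continue the scenario of Claim~\ref{claim:firstattack}: on $G$, once $A$ and $B$ have completed $S$, let $C$ (silent during $S$) begin participating; by the Termination property $C$ almost-surely completes $S$, after which $A,B,C$ almost-surely complete $R$, so $C$ outputs some value. Since the events ``$C$ outputs $0$'' and ``$C$ outputs $1$'' are disjoint, at least one of $\Pr[C\text{ outputs }0\mid G]$ and $\Pr[C\text{ outputs }1\mid G]$ is at most $\frac{1}{2}$. The whole argument is symmetric under simultaneously swapping the names $A,B$ and the values $0,1$ --- this is what ``without loss of generality'' means in the statement --- so we assume $\Pr[C\text{ outputs }0\mid G]\le\frac{1}{2}$, the remaining case running identically with the adversary on $A$ and the dealer sharing $1$.

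Now the attack. Let $D$ be honest and share $0$, let $A$ and $C$ be honest, and have the scheduler keep $C$ silent throughout $S$ (delivering $C$'s messages only after $A$ has completed $S$). The faulty $B$ runs honest $B$'s code on a fabricated history: internally it simulates the event-$G$ execution from $B$'s local standpoint --- generating the dealer-to-$B$ messages itself (equivalently, running an internal honest dealer that shares $1$), discarding the messages the real dealer sends it, and relaying the $A$--$B$ channel with the real $A$ --- and towards $A$, towards $C$ and in $R$ it emits exactly what honest $B$ emits in that simulated continuation of $G$. The reason this succeeds is Secrecy: the behavior of the real $A$, of the real $C$, and the externally visible messages of honest $B$ are each independent of whether the dealer shares $0$ or $1$, so an honest dealer sharing $0$ automatically plays, towards $A$ and $C$, the same role $\sigma$ played on $G$, and $B$ only has to fake its own part. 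Conditioned on the event that $A$, $B$ and $D$ all stay within the round bound $N$ --- which holds with probability at least $1-\epsilon'$, since $B$'s simulated behavior is that of an honest party --- the resulting execution is distributed exactly like the continuation of $G$. Hence $A$'s view in $S$ is $\pi_{0,A}$, and the messages $C$ receives from $A$, $B$ and $D$ have the same joint distribution as in $G$, so $C$ outputs $0$ with probability at most $\frac{1}{2}$. Both conclusions of the claim follow.

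The main obstacle is this last indistinguishability step: showing that honest $A$ and honest $C$ genuinely cannot tell the spliced execution from the continuation of $G$. This needs Secrecy invoked separately for $A$, for $B$, and for $C$ --- each presented as a semi-honest corruption with the appropriate scheduling --- to decouple their behaviors from the secret; careful bookkeeping of the conditioning on $\overline{long}$ and on $G$ so the $1-\epsilon'$ bound is genuinely attained; a check that $\sigma$ presents to $C$ exactly what an honest dealer sharing $0$ would (or else that the dealer plays no active part after $S$), so that $C$'s view really does match; and, because $B$'s simulation is interactive, an argument that $B$ can maintain a consistent internal state round by round while relaying with the real $A$ and later $C$, rather than merely sampling a full transcript in advance. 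Verifying that this relaying never stalls and reproduces the target distribution exactly is the bulk of the work.
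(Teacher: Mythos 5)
Your high-level plan is the same as the paper's: continue the attack of Claim~\ref{claim:firstattack} to define $C$'s output $O_C$ conditioned on $G$, split by whether $\Pr[O_C=0]\le\frac12$ or $\Pr[O_C=1]\le\frac12$ (this is exactly the paper's ``without loss of generality''), and then have a faulty $B$ splice a fake ``dealer shared $1$'' view into a real run with an honest dealer sharing $0$, while the scheduler delays $D$'s messages in $R$ so that $A$ and $C$ decide as in the original continuation. However, the step that carries all the weight is exactly the one you defer: showing that the spliced execution, conditioned on $\overline{long}$, has the \emph{same joint distribution} over $(\hat{m}_{AB},\hat{m}_{AD},\hat{r}_A,\hat{s}_B,\hat{s}_{BD})$ as the first attack has over $(s_{AB},s_{AD},r_A,r_B,s_{BD})$ conditioned on $G$. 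In the paper (Lemma~\ref{lem:Battack}) this is not hand-waved: $B$ behaves \emph{honestly throughout all of $S$} (so the real transcript $\hat{m}_{AB}$ is automatically distributed as in an honest sharing of $0$ with $C$ silent), and only after $S$ completes does $B$ resample, offline, $\hat{s}_B\leftarrow R^1_B|m^1_{AB}=\hat{m}_{AB},\overline{long}$ and $\hat{s}_{BD}\leftarrow M^1_{BD}|m^1_{AB}=\hat{m}_{AB},r^1_B=\hat{s}_B,\overline{long}$; well-definedness comes from Corollary~\ref{col:indmarginallong}, and the distribution match is verified by an explicit factorization of both joint laws using Lemma~\ref{lem:distribution}. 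Invoking ``Secrecy for $A$, $B$, $C$ separately'' does not substitute for this computation: what is needed is the coupling through the common $A$--$B$ transcript, which is precisely the content of Lemma~\ref{lem:indlong} and the displayed chains of equalities.

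Moreover, your specific mechanism --- $B$ discarding the real dealer's messages during $S$ and interacting with $A$ while running an internal dealer that shares $1$ --- is genuinely more problematic than the paper's, not just less detailed. First, $B$'s internal dealer cannot be coupled with the real $A$--$D$ interaction it never sees, so $B$ would have to resample its fabricated state conditionally after every real message from $A$; second, the target distribution is conditioned on the global event $\overline{long}$, which cannot be conditioned on round-by-round during an online simulation (this is exactly why the paper's footnote treats the unconditioned variant separately); third, your claim that the spliced run stays within the $N$-round bound ``with probability at least $1-\epsilon'$, since $B$'s simulated behavior is that of an honest party'' presupposes the very indistinguishability you are trying to establish, whereas in the paper it is immediate because $A,B,D$ really do run the honest code in $S$. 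So while the reduction and case analysis are right, the proof as proposed has a genuine gap at its core step, and the online-simulation route you sketch would need to be replaced by (or reworked into) the paper's honest-in-$S$, resample-afterwards argument to close it.
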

The value $0$ in the claim is used when in the previous attack party $C$ outputs the value $r=0$ with probability $\frac{1}{2}$ or less. The claim is "without loss of generality", in the sense that if $C$ outputs the value $r=1$ with probability $\frac{1}{2}$ or less then we switch $A$ with $B$ as well as $0$ with $1$. 

In this attack, $B$ acts normally throughout $S$, and then throughout $R$ acts as if the attack in claim~\ref{claim:firstattack} took place.
As opposed to the previous attack which can succeed with some nonzero probability, $B$'s attack will succeed (i.e. it will look as if the attack in claim ~\ref{claim:firstattack} took place) with a probability arbitrarily close to $1$.
This means that the event that C outputs 0 with probability $\frac{1}{2}$ or less can occur with a probability arbitrarily close to 1.
Note that if the dealer shares $0$, then every nonfaulty party should output $0$ with probability $\frac{1}{2}+\epsilon$ or greater, and thus $C$ can only fail to output $0$ with probability $\frac{1}{2}-\epsilon$.
Therefore, for a small enough $\epsilon'$ we  reach a contradiction. This concludes the proof for Theorem \ref{thm:impossibility}.

Several random variables are defined in order to prove the aforementioned claims.
Technically, the distribution of the random variables could depend on the dealer, the adversary and on the message scheduling. 
Throughout all of the analysis these factors are strictly defined, and are therefore omitted from the definitions of the random variables.

Let $M^{s}_{XY}$ be the distribution of messages exchanged between
party $X$ and party $Y$ during the sharing protocol $S$
if the network is synchronous, party $D$ is a nonfaulty dealer sharing the value $s$, party $C$ is faulty and silent, and no nonfaulty party calls protocol $R$.
Let $R^{s}_{X}$ be the distribution of the internal randomness of party
$X$ throughout the sharing protocol in the described setting.
Let $V_{X}^{s}$  be the distribution of party $X$'s view of the share
phase in the setting described above.
For every pair of parties $X,Y$ and value $s$ let $r^s_X\sim R^s_X$ be a random variable describing $X$'s randomness in a given run, $m^s_{XY}\sim M^s_{XY}$ be a random variable describing the messages between parties $X$ and $Y$ with a non faulty dealer sharing $s$ in the described setting, and $v^s_X\sim V^s_X$ be a random variable describing $X$'s view in the run.
Note that $r^s_{X}$ is necessarily part of $v_{X}^{s}$ in some way, as well as $m_{XY}^{s}$
for every party $Y$. 

For any distribution $X$, $x\in X$ means that $x$ has a nonzero probability under $X$.
Party $X$'s view $v_X$ is \textit{consistent} with $s$ if $v_{X}\in V_{X}^{s}$. 
Similarly, a set of messages $m_{XY}$ exchanged between party $X$ and party $Y$ is \textit{consistent}
with the secret $s$ if $m_{XY}\in M_{XY}^{s}$.

For the first part of the analysis, assume $D$ is corrupted by the
adversary.
The overarching goal is to prove claim \ref{claim:firstattack} while processor $A$ sees a view consistent with $D$ sharing the value $0$ and processor $B$ sees a view consistent with $D$ sharing the value $1$.
Intuitively, from the Secrecy property, neither party $A$ nor party $B$ should be able to tell which value was shared throughout protocol $S$, and thus $D$ could send messages in that manner and neither party would notice.

The adversary's strategy is as follows:
Party $D$ samples $s_{A}\leftarrow R^0_{A}|\overline{long}$ and $s_{AB}\leftarrow M^0_{AB}|r^0_A=s_A,\overline{long}$ and then $s_B\leftarrow R^1_B|m^1_{AB}=s_{AB},\overline{long}$.
Afterwards it samples $s_{AD}\leftarrow M_{AD}^{0}|m_{AB}^{0}=s_{AB},r^0_{A}=s_{A},\overline{long}$
and $m_{BD}\leftarrow M_{BD}^{1}|m_{AB}^{1}=s_{AB},r^1_{B}=s_{B},\overline{long}$.
The random variables $s_A,s_B$ are $D$'s guesses of $A$ and $B$'s randomness, and the variables $s_{AB},s_{AD},s_{BD}$ are the messages $D$ predicts will be sent.
Define party $X$'s randomness throughout this run to be $r_X$ and the messages exchanged between parties $X$ and $Y$ in this run to be $m_{XY}$.
Finally, define the event $G$, in which $s_{A}=r_{A},s_{B}=r_{B}$.

Before showing this behaviour can be used as part of claim \ref{claim:firstattack}, we need to show that these distributions are well-defined and samplable. 
Note that the setting is entirely synchronous.
For simplicity, assume that the number of bits in a message and the amount of randomness needed in each round are bounded.
In addition, assume that every party sends some message indicating that it completes protocol $S$.
These assumptions are used in order to simplify the proof of the next lemma.
\footnote{In order to prove the general case, the dealer can simulate the entire run for parties $A,B,D$ round-by-round twice, once sharing the value $0$ and once sharing the value $1$. The dealer will only accept pairs of runs in which the messages exchanged between processors $A$ and $B$ are the same.
Proving that there must exist such a pair of runs requires proving a lemma similar to the following lemma without conditioning upon the event $\overline{long}$.
Note that since the rounds almost-surely terminate, the sampling process will also terminate with probability $1$.
This will result in slight differences in the attacks and proofs, but with very similar techniques and ideas.
The main difference is that all of the sampled probabilities will not be conditioned upon the event $\overline{long}$.}

\begin{lemma}\label{lem:indlong} For every values $m'_{AD},m'_{AB},r'_A$ such that 
$\Pr [ m^0_{AD} = m'_{AD} , m^0_{AB} = m'_{AB} , r^0_A = r'_A,\overline{long}]\neq 0$, $
\Pr[ m_{AD}^{0} = m'_{AD} , m_{AB}^{0} = m'_{AB} , r^0_A = r'_A | \overline{long}]{=}
\Pr [m_{AD}^{1} = m'_{AD} , m_{AB}^{1} = m'_{AB} , r^1_A = r'_A | \overline{long}].
$
\end{lemma}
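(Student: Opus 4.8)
The statement says that, conditioned on $\overline{long}$, the joint distribution of $(m_{AD}, m_{AB}, r_A)$ for party $A$ is identical whether $D$ shares $0$ or $1$. The plan is to reduce this to the Secrecy property of the AVSS, applied to the adversary who corrupts only party $A$. First I would observe that $A$'s view $v_A^s$ in the synchronous, $C$-silent setting determines exactly the tuple $(m_{AD}^s, m_{AB}^s, r_A^s)$: party $A$'s randomness is part of its view, and the messages $A$ exchanges with $D$ and with $B$ are part of its view, while conversely this tuple is all of $A$'s local data in this setting (since $C$ is silent, $A$ exchanges nothing with $C$). So it suffices to show that the distribution of $v_A^s$, conditioned on $\overline{long}$, is independent of $s\in\{0,1\}$.

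Next I would invoke Secrecy with the adversary that corrupts only $A$ and simply follows the protocol honestly (a semi-honest corruption suffices), while the scheduler is fixed to be the synchronous one with $C$ crashed and no party calling $R$. By Secrecy, the adversary's view $V^s$ — which includes $v_A^s$ in this setup — is identically distributed for $s=0$ and $s=1$ \emph{before any nonfaulty party calls $R$}, which is exactly our setting. Hence the \emph{unconditioned} laws of $v_A^0$ and $v_A^1$ agree, and therefore so do the unconditioned laws of $(m_{AD}^0,m_{AB}^0,r_A^0)$ and $(m_{AD}^1,m_{AB}^1,r_A^1)$.

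The remaining point is to pass from the unconditioned equality to the equality conditioned on $\overline{long}$. Here the key observation is that the event $long$ is itself a deterministic function of the same view data: $long$ occurs iff $A$, $B$, or $D$ runs for more than $N$ rounds in this synchronous $C$-silent execution, and the number of rounds each of $A,B,D$ runs is determined by the joint transcript $(r_A, r_B, r_D, m_{AB}, m_{AD}, m_{BD})$ of the execution — equivalently, by a quantity that is a function of the adversary's (i.e.\ the full honest) view of the execution, which Secrecy says is $s$-independent. More carefully: by the same Secrecy argument applied to the adversary corrupting \emph{all} of $A,B,D$ (or just observing all of them), the joint law of $\big((m_{AD},m_{AB},r_A),\,\mathbf{1}_{long}\big)$ is the same for $s=0$ and $s=1$, because $\mathbf 1_{long}$ is a fixed measurable function of that full transcript. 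Since $\Pr[\overline{long}]=\Pr[\overline{long}\mid s=0]=\Pr[\overline{long}\mid s=1] > 1-\epsilon'>0$ is the same positive number in both cases, dividing the joint law by this common constant gives the claimed conditional equality. I would write this as: for any target tuple $(m'_{AD}, m'_{AB}, r'_A)$,
\[
\Pr[m_{AD}^0=m'_{AD}, m_{AB}^0=m'_{AB}, r_A^0=r'_A \mid \overline{long}]
= \frac{\Pr[\,\cdot\,, \overline{long}\,]}{\Pr[\overline{long}\,]},
\]
and both numerator and denominator are $s$-independent by Secrecy, hence so is the ratio.

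The main obstacle is the bookkeeping in the last step: one must be careful that $long$ is measurable with respect to data whose distribution Secrecy controls. This is fine because, in the fully synchronous $C$-silent execution, \emph{everything} about the run of $A,B,D$ — their randomness, the messages among them, and in particular how many rounds each runs before signalling completion of $S$ — is part of the joint honest view, and Secrecy asserts this entire view is distributed independently of $s$. The simplifying assumptions stated just before the lemma (bounded message length and per-round randomness, and an explicit "done with $S$" message) are exactly what make "number of rounds" a clean measurable function of the transcript, so no additional argument is needed there. One should also note that $\Pr[\overline{long}]>0$ (indeed $>1-\epsilon'$) by the choice of $N$, so the conditioning is legitimate, and that the hypothesis $\Pr[m_{AD}^0=m'_{AD},m_{AB}^0=m'_{AB},r_A^0=r'_A,\overline{long}]\neq 0$ guarantees we are not dividing a zero by the (positive) constant in a vacuous way on the right-hand side — though in fact the identity holds for all target tuples regardless.
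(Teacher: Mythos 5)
Your first two steps (reducing the lemma to the $s$-independence of $A$'s view, and invoking Secrecy against a passive adversary corrupting only $A$ to get the \emph{unconditioned} equality) are sound and match the paper's use of Secrecy. The gap is in the step where you handle the conditioning on $\overline{long}$: you justify the $s$-independence of the joint law of $\bigl((m_{AD},m_{AB},r_A),\mathbf{1}_{long}\bigr)$ by applying Secrecy ``to the adversary corrupting all of $A,B,D$ (or just observing all of them).'' That application is not available. The Secrecy property is only guaranteed against adversaries controlling at most $t=1$ party, and it presupposes a \emph{nonfaulty} dealer, so a coalition containing $D$ (or even just $A$ and $B$ together) is outside its scope. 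Worse, the conclusion you draw for the full transcript is in general false: by Correctness/Binding, once $S$ completes the honest parties' joint information must essentially determine the shared value, so the joint distribution of $(r_A,r_B,r_D,m_{AB},m_{AD},m_{BD})$ does depend on $s$. Indeed, the whole point of the surrounding argument is that only \emph{single-party} views are $s$-independent, which is what lets the faulty dealer give $A$ a view consistent with $0$ and $B$ a view consistent with $1$.

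The statement you actually need is weaker and is exactly what the paper's simplifying assumptions are for: in the synchronous $C$-silent schedule, every party announces when it completes $S$, so $A$'s own view records in which round each of $A$, $B$, $D$ finished; hence $\mathbf{1}_{long}$ is a measurable function of $v_A$ alone, not merely of the full transcript. With that observation, Secrecy applied to the passive corruption of $A$ by itself already yields the $s$-independence of the joint law of $(m_{AD},m_{AB},r_A,\mathbf{1}_{long})$, and your final ratio computation (dividing by the common positive $\Pr[\overline{long}]$, which exceeds $1-\epsilon'$) goes through unchanged. This is precisely the paper's argument, phrased there contrapositively: if the conditional laws differed, the adversary corrupting $A$ alone---whose view includes its messages, its randomness, and its knowledge of whether $long$ occurred---would distinguish $s=0$ from $s=1$, violating Secrecy.
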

\begin{proof}
First note that from the Termination property of $AVSS$,
if all nonfaulty parties participate in protocol $S$ they will all complete it. 
Observe a scheduling in which the communication between parties $A,B$ and $D$ is synchronous, party $C$ is silent throughout all of protocol $S$ and no nonfaulty party calls protocol $R$ at all.
In this case, since no nonfaulty party calls protocol $R$,
the Secrecy property must hold at the time the parties complete protocol $S$.

Seeking a contradiction, assume the lemma doesn't hold and show a violation of the Secrecy property.
In that case, observe the scenario in which the adversary controls party $A$ and the nonfaulty dealer shares the value $0$.
Party $A$ acts like a nonfaulty party would throughout all of protocol $S$.\IA{not clear} \gs{Is this any better?}
From $D$ and $B$'s point of view, party $A$ acts as a nonfaulty party and $C$ acts as a faulty party which doesn't send any messages. 
Since they cannot distinguish between the scenarios, the messages must be distributed according to the distributions $M_{AD}^{0},M_{AB}^{0}$ and $A$'s randomness must be distributed according to $R^0_A$. 
Since
$\Pr\left[m_{AD}^{0}=m'_{AD},m_{AB}^{0}=m'_{AB},r^0_A=r'_A,\overline{long}\right]\neq0$,
there is a nonzero probability that
parties $A,B,D$ complete protocol $S$ with $A$ having exchanged those messages in fewer than $N$ rounds.
All processors also know when other processors complete the protocol because they send a message indicating they completed protocol $S$.
No nonfaulty party called $R$ yet,
and $\Pr [ m_{AD}^{0} = m'_{AD} , m_{AB}^{0} = m'_{AB} , r^0_A =r'_A | \overline{long}] \neq \Pr [m_{AD}^{1} = m'_{AD} , m_{AB}^{1} = m'_{AB} , r^1_A = r'_A | \overline{long}]$.
In other words, if party $A$ acted in the exact same way, and $D$ were sharing the value $s=1$ instead, $A$ would have had a different probability of seeing these values if the event $\overline{long}$ took place.
Since $A$'s random values and the messages it exchanges are a part of its view, as well as its knowledge as to whether the event $long$ happened,
this means that this adversary's view in the case $s=0$ is distributed differently than it would be in the case $s=1$ reaching a contradiction. 
For completeness, all messages to and from $C$ can be sent and received after
parties $A,B$ and $D$ complete protocol $S$ in order for the scheduling to be valid.
\end{proof}
The probabilities are equal for every nonzero-probability event in the case that the dealer is sharing the value $0$.
Since both must be probability spaces $\Pr\left[m_{AD}^{0}=m'_{AD},m_{AB}^{0}=m'_{AB},r^0_A=r'_A|\overline{long}\right]=0$ if and only if it is also true that $\Pr[m_{AD}^{1}=m'_{AD},m_{AB}^{1}=m'_{AB},r^1_A=r'_A|\overline{long}]=0$, and thus the distributions must be identical. 
In addition, the exact same arguments can be made for $B$ instead or if party $D$ shared the value $s=1$. 

A direct corollary is that any of the marginal and conditional probabilities are also the same. For example:

\begin{restatable}{corollary}{indmarginallong}
\label{col:indmarginallong}
For every $m'_{AD}\in M^0_{AD}|\overline{long}$,  $\Pr[m^0_{AD} = m'_{AD} | \overline{long}] = \Pr[m^1_{AD}=m'_{AD} | \overline{long}]$.
Also, for every $m'_{AB}\in M^0_{AB}|\overline{long}$,
$\Pr[m^0_{AD}=m_{AD}|m^0_{AB}=m'_{AB},\overline{long}]=
\Pr[m^1_{AD}=m_{AD}|m^1_{AB}=m'_{AB},\overline{long}]$.
\end{restatable}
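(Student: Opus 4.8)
The plan is to obtain both equalities of Corollary~\ref{col:indmarginallong} from Lemma~\ref{lem:indlong} by nothing more than summing the joint distribution over the appropriate coordinates. Recall that the lemma, together with the remark immediately following its proof, gives that for all values $m'_{AD},m'_{AB},r'_A$,
\[
\Pr[m^0_{AD}=m'_{AD},\,m^0_{AB}=m'_{AB},\,r^0_A=r'_A\mid\overline{long}]
=\Pr[m^1_{AD}=m'_{AD},\,m^1_{AB}=m'_{AB},\,r^1_A=r'_A\mid\overline{long}],
\]
and in particular that one side is zero exactly when the other is. All the relevant sample spaces are countable, since messages and per-round randomness have bounded length and on $\overline{long}$ the number of rounds is finite, so these are honest countable sums with no measure-theoretic subtlety.

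For the first equality, I would write the $AD$-marginal as
\[
\Pr[m^0_{AD}=m'_{AD}\mid\overline{long}]
=\sum_{m'_{AB},\,r'_A}\Pr[m^0_{AD}=m'_{AD},\,m^0_{AB}=m'_{AB},\,r^0_A=r'_A\mid\overline{long}],
\]
and the same with superscript~$1$; Lemma~\ref{lem:indlong} makes these sums equal term by term, hence equal. The same argument summed over $m_{AD}$ and $r_A$ shows that the $AB$-marginals agree, so the range condition $m'_{AB}\in M^0_{AB}\mid\overline{long}$ used in the second part is equivalent to $m'_{AB}\in M^1_{AB}\mid\overline{long}$.

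For the second equality, I would apply Bayes' rule,
\[
\Pr[m^0_{AD}=m_{AD}\mid m^0_{AB}=m'_{AB},\overline{long}]
=\frac{\Pr[m^0_{AD}=m_{AD},\,m^0_{AB}=m'_{AB}\mid\overline{long}]}{\Pr[m^0_{AB}=m'_{AB}\mid\overline{long}]},
\]
where the denominator is nonzero precisely because $m'_{AB}\in M^0_{AB}\mid\overline{long}$. The numerator is a marginal of the joint over $r_A$ and the denominator a marginal over $m_{AD}$ and $r_A$, so both equal their $s=1$ analogues by the term-by-term argument above; moreover the $s=1$ denominator is also nonzero, by the ``zero iff zero'' half of the lemma. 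Dividing yields the claimed equality of conditional distributions.

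I do not expect a genuine obstacle: the corollary is a routine marginalization of Lemma~\ref{lem:indlong}. The only points needing a sentence of care are invoking the ``zero iff zero'' statement so that the $s=1$ denominator in the conditional case is well-defined, and recording that the range condition $m'_{AB}\in M^0_{AB}\mid\overline{long}$ transfers to the $s=1$ world, which is itself a special case of the first equality.
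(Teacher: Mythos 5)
Your proposal is correct and matches the paper's own proof essentially step for step: the first equality is obtained by summing the joint equality of Lemma~\ref{lem:indlong} term by term over $m_{AB}$ and $r_A$, and the second by writing the conditional as a ratio of marginals and equating numerator and denominator separately, using the ``zero iff zero'' consequence of the lemma to justify that the summation ranges coincide and that the $s=1$ denominator is nonzero. No further comment is needed.
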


\begin{restatable}{corollary}{samplable}
\label{col:samplable}
The values sampled by $D$ in the described attack are sampled from well-defined, samplable distributions.
\end{restatable}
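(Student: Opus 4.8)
The plan is to check the corollary's two assertions in turn. \emph{Well-defined} means that each distribution $D$ samples from is a genuine conditional distribution, i.e.\ that the event it is conditioned on has nonzero probability; \emph{samplable} means that $D$ can actually produce such samples, which I will do by internal simulation with rejection. I work throughout in the fixed synchronous setting underlying the definitions of $M^s_{XY},R^s_X,V^s_X$ --- $D$ a nonfaulty dealer of $S$, $C$ silent, no nonfaulty party calling $R$ --- on whose common probability space all the relevant random variables live. Conditioning on $\overline{long}$ restricts attention to executions of at most $N$ rounds, so, using the standing assumptions that per-round message lengths and randomness are bounded and that every party announces its completion of $S$, each such execution is a finite object that $D$ can simulate and from which $D$ can read off whether $\overline{long}$ held and whether prescribed values of $r_A,m_{AB},m_{AD}$ were realized.

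For well-definedness I process the five sampled quantities in the order $D$ draws them. We have $\Pr[\overline{long}]\ge 1-\epsilon'>0$ by the choice of $N$, so $R^0_A\mid\overline{long}$ is a genuine distribution and $s_A$ has positive probability under it; hence $\Pr[r^0_A=s_A,\overline{long}]\neq 0$, making $M^0_{AB}\mid r^0_A=s_A,\overline{long}$ genuine, and $s_{AB}$ has positive probability under it, so $\Pr[r^0_A=s_A,m^0_{AB}=s_{AB},\overline{long}]\neq 0$; this makes $M^0_{AD}\mid m^0_{AB}=s_{AB},r^0_A=s_A,\overline{long}$ genuine, which handles $s_{AD}$. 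The only step that is not pure bookkeeping is $s_B$, where the transcript $s_{AB}$, produced in the world where $D$ shares $0$, is re-used in the world where $D$ shares $1$: since $\Pr[m^0_{AB}=s_{AB}\mid\overline{long}]\neq 0$, \lemmaref{lem:indlong} (marginalizing its equality of the joint laws of $(m_{AD},m_{AB},r_A)$ down to $m_{AB}$) gives $\Pr[m^1_{AB}=s_{AB}\mid\overline{long}]\neq 0$, hence $\Pr[m^1_{AB}=s_{AB},\overline{long}]\neq 0$, so $R^1_B\mid m^1_{AB}=s_{AB},\overline{long}$ is genuine; then $s_B$ has positive probability under it, $\Pr[m^1_{AB}=s_{AB},r^1_B=s_B,\overline{long}]\neq 0$, and $M^1_{BD}\mid m^1_{AB}=s_{AB},r^1_B=s_B,\overline{long}$ is genuine, which handles $m_{BD}$.

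For samplability, note (by the chain rule for conditional probabilities) that the prescribed sequential sampling is equivalent to drawing just two runs: one from the $s=0$ execution law conditioned on $\overline{long}$, from which $D$ takes $s_A=r_A$, $s_{AB}=m_{AB}$, $s_{AD}=m_{AD}$; and one from the $s=1$ execution law conditioned on $\{m_{AB}=s_{AB}\}\cap\overline{long}$, from which $D$ takes its coins for $B$ as $s_B$ and its $B$--$D$ transcript as $m_{BD}$. Each such run $D$ obtains by rejection: it simulates $S$ in the synchronous setting --- itself as the nonfaulty dealer of $0$ (resp.\ $1$), fresh coins for the simulated $A$ and $B$, $C$ silent, at most $N$ rounds --- and accepts iff the conditioning event holds, retrying otherwise. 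By the previous paragraph each conditioning event has nonzero probability, so the number of retries is stochastically dominated by a geometric variable and the sampler halts with probability $1$ (indeed in finite expectation), while each attempt is finite by the boundedness and truncation noted above. The main obstacle is thus isolated in the single $s_B$ step --- transporting a feasible $A$--$B$ transcript from the ``share $0$'' world into the ``share $1$'' world --- which is exactly what the Secrecy-derived \lemmaref{lem:indlong} delivers; everything else is a chain of ``this conditioning event is non-null'' deductions together with a routine rejection-sampling argument. (The footnoted unbounded case replaces this by the analogous round-by-round double simulation without the $\overline{long}$ conditioning.)
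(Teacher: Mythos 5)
Your proposal is correct and follows essentially the same route as the paper: a chain of ``the conditioning event is non-null'' deductions for $s_A,s_{AB},s_{AD},s_B,s_{BD}$, with the one nontrivial step---transporting $\Pr[m^0_{AB}=s_{AB}\mid\overline{long}]\neq 0$ to $\Pr[m^1_{AB}=s_{AB}\mid\overline{long}]\neq 0$---handled via \lemmaref{lem:indlong}/\corollaryref{col:indmarginallong}, and samplability justified by simulation of the at-most-$N$-round synchronous runs under the boundedness assumptions. The only (immaterial) difference is that you realize the sampler by rejection with geometrically many retries, whereas the paper has the dealer enumerate all bounded runs directly.
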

The proofs of the corollaries are provided in the appendix.

In general the strategy from this point on is to prove that if any party's view is consistent with some secret it must complete protocol $S$.
The next step is to show that if event $G$ occurs, $A$ and $B$'s view must be consistent with some secret $s$ and the event $\overline{long}$ must take place.
In the runs by the random variables party $C$ is faulty and doesn't send any messages, and thus we prove that there are runs in which parties $A$ and $B$ must complete protocol $S$ even without receiving any messages from $C$.

\begin{lemma}\label{lem:termination} If party $A$'s view is consistent with some secret $s\in\left\{0,1\right\}$ then it must almost-surely complete the protocol, even without receiving any messages from party $C$.
\end{lemma}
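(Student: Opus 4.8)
The plan is to exploit the fact that a party's local behaviour during $S$ — including its very decision to complete $S$ — is a deterministic function of its \emph{view}, i.e.\ of its input, its internal randomness, and the sequence of messages it has received (the excerpt already notes that $r^s_X$ and each $m^s_{XY}$ are part of $v^s_X$). Consequently it suffices to establish two facts about the view $v_A$ in question: (i) $v_A$ is a view at which $A$ completes protocol $S$, and (ii) $v_A$ records no message received from $C$. Granting these, in \emph{any} execution whatsoever — and in particular in the execution of the attack of \claimref{claim:firstattack}, where $D$ is faulty and $C$ is silent throughout $S$ — if $A$'s received messages together with its randomness coincide with $v_A$, then $A$ runs exactly as it did when it held that view in the honest setting and hence completes $S$; and by (ii) it reaches that view, and completes, without ever having heard from $C$.

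To obtain (i) and (ii): since $A$'s view is consistent with $s$ we have $v_A \in V_A^s$, so $v_A$ occurs with positive probability in the honest setting that defines $V_A^s$ — synchronous network, nonfaulty dealer $D$ sharing $s$, $C$ faulty and silent, and no nonfaulty party invoking $R$. There $D$ is nonfaulty and all nonfaulty parties $A,B,D$ participate in $S$, so Termination property~1(a) of the AVSS says $A$ almost-surely completes $S$; that is, the set of runs in which $A$ never completes $S$ is null. As $V_A^s$ is by definition the distribution of $A$'s view \emph{of the share phase}, i.e.\ $A$'s view at the instant it completes $S$, every positive-probability element of $V_A^s$ — in particular $v_A$ — is such a completion view, which is (i). For (ii), in this setting $C$ sends no messages at all, so $v_A$ contains no incoming message from $C$. (Since the protocol almost-surely terminates and $v_A$ has positive probability, $v_A$ is moreover a finite view, so there is no difficulty in ``$A$ reaching'' it.) Combining this with the determinism observation yields the lemma: whenever $A$'s view is $v_A$, its program is forced to take the same step it took at that view in the honest run, namely to complete $S$, and this happens regardless of whether $C$ ever speaks.

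I expect the delicate point to be the passage from ``$v_A$ has positive probability'' to ``$A$ completes $S$ whenever its view is $v_A$'': a priori one only knows that completion occurs with probability $1$, so one must exclude the possibility that $v_A$ lies in the measure-zero set of non-completing runs. This is exactly what the two observations above handle — completion is a deterministic function of the view, and $V_A^s$ is supported on views taken at completion, so no positive-probability view can be a non-completing one. It is also worth flagging explicitly that this is precisely where we must avoid appealing to Termination property~1(b): in the \claimref{claim:firstattack} attack the dealer is faulty and no nonfaulty party has yet completed $S$, so $A$'s completion can only be deduced from its local view being identical to one arising in an honest execution, not from any global progress condition.
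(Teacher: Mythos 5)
Your proposal is correct and takes essentially the same route as the paper: consistency means $v_A$ could arise in an honest execution (nonfaulty dealer sharing $s$, $C$ faulty and silent) where Termination property 1(a) applies, and since $A$'s actions are determined by its local view it must complete $S$ whenever it holds that view, with no scheduling difficulty from $C$ because the view is finite. Your explicit handling of the measure-zero subtlety (completion is a deterministic function of the view, and $V_A^s$ is supported on completion views) is simply a more careful rendering of the paper's ``$A$ can't tell the difference'' step.
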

\begin{proof}
Since party $A$'s view is consistent with the secret $s$, $A$ could have had this exact view with a nonfaulty dealer $D$ sharing $s$ if $C$ were faulty and silent. 
Since in that run $A$ and $B$ are nonfaulty and $D$ is a nonfaulty dealer, 
from the Termination property of $AVSS$, $A$ must almost-surely complete protocol $S$ in that run. 
$A$ can't tell the difference between its view in this run and the view in which $C$ was faulty, 
and thus $A$ must complete protocol $S$ if its view is merely consistent with $s$.
Party $C$'s messages won't be infinitely delayed in this run because the probability that party $A$ has an infinitely large view is $0$.
\end{proof}

This exact argument can be made for party $B$ as well.
We now turn to show that if $D$ acts according to the described strategy, there is a nonzero probability that $A$ and $B$ will complete protocol $S$ with the desired distribution of views, conditioned upon the event $\overline{long}$.

\begin{restatable}{lemma}{distribution}
\label{lem:distribution} If the dealer is faulty, $s_{A}=r_{A},s_{B}=r_{B}$,
$D$ exchanges the messages $s_{AD}$ and $s_{BD}$ with parties $A$ and $B$ respectively, 
parties $A$ and $B$ exchange the messages $s_{AB}$ between them,
and the scheduling is as described above, then party $A$'s view is distributed according to $V_{A}^{0}|\overline{long}$ and party $B$'s view is distributed according to $V_{B}^{1}|\overline{long}$.
\end{restatable}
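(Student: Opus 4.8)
The strategy is to first show that, conditioned on the event $G$ in the hypothesis, the attack run is \emph{forced} to have exactly the transcripts $s_{AB}$ between $A$ and $B$, $s_{AD}$ between $A$ and $D$, and $s_{BD}$ between $B$ and $D$, with $C$ silent throughout --- so the remaining clauses of the hypothesis are in fact consequences of $G$ --- and then to read off the distribution of each view from the distribution the dealer sampled it from. Note first that all the conditioning events used by the dealer have nonzero probability (\corollaryref{col:samplable}), so there is an honest execution $\rho_0$ in which a nonfaulty dealer shares $0$, party $C$ is silent, $A$'s random tape is $s_A$, the $A$--$B$ transcript is $s_{AB}$, and the $A$--$D$ transcript is $s_{AD}$ (and $\overline{long}$ holds); likewise, since $s_{AB}$ lies in the support of $M^1_{AB}|\overline{long}$ by \lemmaref{lem:indlong}, there is an honest execution $\rho_1$ sharing $1$ with the \emph{same} $A$--$B$ transcript $s_{AB}$, in which $B$'s tape is $s_B$ and the $B$--$D$ transcript is $s_{BD}$.

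For the forcing claim I would induct on the synchronous rounds. Suppose the three transcripts of the attack run agree with $s_{AB},s_{AD},s_{BD}$ through round $k-1$. Using $r_A=s_A$ (from $G$), party $A$'s local state at the start of round $k$ coincides with its state at that point in $\rho_0$ --- same tape, same messages received so far, same round counter --- so, since an honest party's round-$k$ messages are a deterministic function of its tape and the messages received so far, $A$ sends in round $k$ precisely the round-$k$ portions of $s_{AB}$ (to $B$) and of $s_{AD}$ (to $D$). Symmetrically, using $r_B=s_B$ and $\rho_1$, party $B$ sends the round-$k$ portions of $s_{AB}$ and $s_{BD}$, while $D$ replays the round-$k$ portions of $s_{AD}$ and $s_{BD}$ by construction; this closes the induction. (Since $\rho_0$ and $\rho_1$ satisfy $\overline{long}$, the replayed messages span at most $N$ rounds and $A,B$ complete $S$ within $N$ rounds, consistent with the $\overline{long}$-conditioned distributions below.)

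Given the forcing claim, $A$'s view in the attack run is a fixed function of $(r_A,m_{AB},m_{AD})=(s_A,s_{AB},s_{AD})$ together with the empty messages from $C$, and it is exactly the function assembling $A$'s view in an execution where an honest dealer shares $0$ and $C$ is silent; similarly $B$'s view is that same kind of function of $(r_B,m_{AB},m_{BD})=(s_B,s_{AB},s_{BD})$. By the chain rule applied to the dealer's samplings $s_A\leftarrow R^0_A|\overline{long}$, then $s_{AB}\leftarrow M^0_{AB}|r^0_A=s_A,\overline{long}$, then $s_{AD}\leftarrow M^0_{AD}|m^0_{AB}=s_{AB},r^0_A=s_A,\overline{long}$, the triple $(s_A,s_{AB},s_{AD})$ has exactly the joint law $(R^0_A,M^0_{AB},M^0_{AD})|\overline{long}$, which by definition induces $V^0_A|\overline{long}$. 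For $B$, the marginal law of $s_{AB}$ is $M^0_{AB}|\overline{long}=M^1_{AB}|\overline{long}$ (\lemmaref{lem:indlong}), so chaining with $s_B\leftarrow R^1_B|m^1_{AB}=s_{AB},\overline{long}$ and $s_{BD}\leftarrow M^1_{BD}|m^1_{AB}=s_{AB},r^1_B=s_B,\overline{long}$ yields $(s_{AB},s_B,s_{BD})\sim(M^1_{AB},R^1_B,M^1_{BD})|\overline{long}$, which induces $V^1_B|\overline{long}$. It then remains to check that conditioning on $G=\{s_A=r_A\}\cap\{s_B=r_B\}$ does not disturb these joint laws: the dealer's sampling randomness is independent of the honest parties' random tapes, and --- using the simplifying assumptions (in particular that the tapes are uniform of a fixed length within the $\overline{long}$ regime) --- for \emph{every} outcome of the dealer's sampling one has $\Pr[G \mid \text{that outcome}]=2^{-|r_A|-|r_B|}$, a constant; by Bayes' rule the sampling outcome is then independent of $G$, and both distributional identities survive the conditioning.

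I expect the round-by-round forcing step to be the main obstacle: it requires checking that the honest behaviors of $A$ and $B$, fed the pre-sampled messages, regenerate exactly the pre-sampled transcripts, which hinges on the samples having been drawn conditionally on one another and, crucially, on $\rho_0$ and $\rho_1$ agreeing on the full $A$--$B$ transcript $s_{AB}$ (via \corollaryref{col:indmarginallong} / \lemmaref{lem:indlong}). The conditioning-on-$G$ argument is short but needs the uniform-tape convention to be spelled out.
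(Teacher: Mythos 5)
Your proposal is correct and takes essentially the same route as the paper: identify $A$'s view with the triple $(r_A,s_{AB},s_{AD})$ and $B$'s with $(r_B,s_{AB},s_{BD})$ (since $C$ is silent), then use the chain rule over the dealer's sampling order together with \lemmaref{lem:indlong}/\corollaryref{col:indmarginallong} to match these joint laws, conditioned on $G$, with $(R^0_A,M^0_{AB},M^0_{AD})|\overline{long}$ and $(R^1_B,M^1_{AB},M^1_{BD})|\overline{long}$. The two additions you make are fine but not new content of this lemma: the round-by-round forcing claim is hypothesized here and is proved separately by the paper as \lemmaref{lem:firstattacksuccessful}, while your explicit uniform-tape argument for why conditioning on $G$ does not perturb the sampled distributions spells out a step the paper passes over by appeal to the definitions of $s_A,s_{AB},s_{AD}$.
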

The proof of this lemma can be found in the appendix.

\begin{lemma}\label{lem:firstattacksuccessful} If the dealer is faulty, $s_{A}=r_{A},s_{B}=r_{B}$, 
$D$ sends messages to parties $A$ and $B$ according to $s_{AD}$ and $s_{BD}$,
and the scheduling is as described above, 
then parties $A$ and $B$ complete protocol $S$ having exchanged $s_{AD},s_{BD}$ with $D$ respectively and $s_{AB}$ between them.
\end{lemma}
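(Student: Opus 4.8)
The plan is to pin down the single deterministic execution forced by the stated hypotheses and compare it, round by round, with two ``reference'' executions that have a nonfaulty dealer. By \corollaryref{col:samplable} the triple $(s_A,s_{AB},s_{AD})$ has positive probability, so there is an execution $\rho_0$ in the synchronous setting with $C$ faulty and silent and no nonfaulty party calling $R$, in which a nonfaulty dealer shares $0$, $A$'s randomness is $s_A$, the messages exchanged between $A$ and $B$ are $s_{AB}$, the messages exchanged between $A$ and $D$ are $s_{AD}$, and $\overline{long}$ holds; similarly $(s_B,s_{AB},s_{BD})$ has positive probability, giving an execution $\rho_1$ of the same setting but with a nonfaulty dealer sharing $1$, $B$'s randomness $s_B$, the $A$--$B$ messages equal to $s_{AB}$, the $B$--$D$ messages equal to $s_{BD}$, and $\overline{long}$. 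Since $s_{AB}$ and $s_{AD}$ record both directions of communication and $C$ stays silent, the stream of messages $A$ receives in $\rho_0$ is completely determined (the $B\to A$ messages of $s_{AB}$, the $D\to A$ messages of $s_{AD}$, nothing from $C$), so $A$'s whole behaviour in $\rho_0$ --- which messages it sends, how many rounds it runs, and whether it completes $S$ --- is a fixed function of $s_A$, well-defined regardless of the other parties' randomness in that execution; likewise $B$'s behaviour in $\rho_1$ is a fixed function of $s_B$.

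First I would fix the execution $\rho$ produced by the attack: $D$ is faulty and, in each round, sends the messages prescribed by $s_{AD}$ to $A$ and by $s_{BD}$ to $B$; the scheduling is the synchronous $C$-silent one described above; and we condition on $G=\set{s_A=r_A,\ s_B=r_B}$, so that $\rho$ is deterministic. The heart of the proof is the following claim, proved by simultaneous induction on the round index $k$: at the end of round $k$ of $\rho$, party $A$'s local state equals its local state at the end of round $k$ of $\rho_0$, and party $B$'s local state equals its local state at the end of round $k$ of $\rho_1$. The base case $k=0$ holds because $A$'s initial state is just its randomness $r_A=s_A$ and $B$'s is $r_B=s_B$. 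For the inductive step, synchrony means that the round-$(k+1)$ messages each party sends are a function of its state after round $k$ only; so by the induction hypothesis $A$ sends in round $k+1$ of $\rho$ exactly what it sends in round $k+1$ of $\rho_0$ (in particular the round-$(k+1)$ messages of $s_{AB}$ to $B$ and of $s_{AD}$ to $D$), $B$ sends exactly what it sends in round $k+1$ of $\rho_1$ (in particular the round-$(k+1)$ messages of $s_{AB}$ to $A$ and of $s_{BD}$ to $D$), $D$ sends the round-$(k+1)$ messages of $s_{AD}$ and $s_{BD}$ to $A$ and $B$ respectively by definition of the attack, and $C$ sends nothing. Therefore the messages $A$ receives in round $k+1$ of $\rho$ coincide with those it receives in round $k+1$ of $\rho_0$, and symmetrically for $B$ and $\rho_1$, so the states after round $k+1$ match again, closing the induction.

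Given the claim, $A$ behaves throughout $\rho$ exactly as in $\rho_0$. Since $A$ completes protocol $S$ in $\rho_0$ --- for instance by the Termination property, as $\rho_0$ has a nonfaulty dealer and all nonfaulty parties participate in $S$, or by \lemmaref{lem:termination} since $A$'s view is consistent with $0$ and $A$ hears nothing from $C$ --- and does so within $N$ rounds because $\overline{long}$ holds in $\rho_0$, $A$ also completes $S$ in $\rho$, having exchanged exactly $s_{AD}$ with $D$ and exactly $s_{AB}$ with $B$; symmetrically $B$ completes $S$ in $\rho$ having exchanged exactly $s_{BD}$ with $D$ and exactly $s_{AB}$ with $A$. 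This is the stated conclusion. The main obstacle is the apparent circularity in the comparison above: what $A$ transmits to $B$ depends on what $A$ has received from $B$, which depends on what $B$ has received from $A$. The synchronous round structure is exactly what resolves it, since each round's messages are fixed from the pre-round states before that round's messages are delivered, which is why the comparison must be organized as an induction over rounds rather than a single identification of transcripts. (For the general unbounded-message case one would instead use the two coupled round-by-round simulations mentioned in the footnote and drop the conditioning on $\overline{long}$, but the skeleton of the argument is the same.)
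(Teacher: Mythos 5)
Your proposal is correct, and its core idea is the same as the paper's: once $D$ guesses $r_A,r_B$ correctly, the behaviour of $A$ and $B$ is deterministic and forced to reproduce the sampled transcripts $s_{AB},s_{AD},s_{BD}$. Where you differ is in how much of this is actually argued and how completion is concluded. The paper's proof states the determinism claim in two sentences (a step its own margin note concedes is hand-wavy), then routes the conclusion through \lemmaref{lem:distribution} (the views are distributed as $V^0_A|\overline{long}$ and $V^1_B|\overline{long}$, hence consistent with $0$ and $1$) and \lemmaref{lem:termination} (consistent views imply almost-sure completion). You instead make the determinism claim rigorous: \corollaryref{col:samplable} gives positive-probability reference executions $\rho_0,\rho_1$ realizing the sampled values together with $\overline{long}$, and a round-by-round induction couples the attacked run to $\rho_0$ (for $A$) and $\rho_1$ (for $B$), exploiting that both reference runs share the same $A$--$B$ transcript $s_{AB}$ and that under the fixed synchronous schedule each round's outgoing messages depend only on the pre-round state. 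This resolves exactly the circularity ($A$'s messages to $B$ depend on $B$'s messages to $A$) that makes the paper's one-line version unsatisfying, and it buys a slightly stronger conclusion: since $\overline{long}$ holds in $\rho_0$ and $\rho_1$, completion within $N$ rounds is deterministic given $G$, rather than "almost sure" as in the paper's phrasing, and neither \lemmaref{lem:distribution} nor \lemmaref{lem:termination} is needed for this lemma (you correctly list them only as alternatives). The one implicit assumption you should flag is that the transcript random variables $m_{XY}$ carry their per-round decomposition, so that "the round-$(k+1)$ portion of $s_{AB}$" is well defined and identical in $\rho_0$ and $\rho_1$; the paper makes the same assumption tacitly (its remark that the deterministic scheduling lets ordering be ignored), so this is a modelling convention rather than a gap.
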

\begin{proof}
If party $D$ correctly guesses $s_A=r_A,s_B=r_B$, then the messages $A$ and $B$ exchange with each other and with $D$ in response to each of $D$'s messages are going to become entirely deterministic and dictated only by $D$'s messages.
This means that since $D$'s messages are always going to be consistent with the sampled values $s_{AD},s_{BD}$, parties $A$ and $B$ are going to send the appropriate responses to $D$, as well as exchange the messages $s_{AB}$ sampled by $D$ between them.
In that case, from lemma \ref{lem:distribution} party $A$'s view is distributed according to $V_A^0|\overline{long}$ and party $B$'s view is distributed according to $V_B^1|\overline{long}$.
This means that party $A$'s view is consistent with $s=0$ and party $B$'s view is consistent with $s=1$.
From lemma \ref{lem:termination} this means that parties $A$ and $B$ almost-surely complete protocol $S$ in finite time.
\end{proof}
\gs{This feels extremely hand-wavey. I don't know how to make this more rigorous.}

In order for the scheduling to be valid, 
once parties $A$ and $B$ complete protocol $S$,
all messages to and from party $C$ are instantly delivered.
Note that party $D$ hasn't sent any messages to party $C$.
There is a nonzero probability of $s_A=r_A,s_B=r_B$, and thus claim~\ref{claim:firstattack} is proven by combining lemma~\ref{lem:distribution} and lemma~\ref{lem:firstattacksuccessful}

Now observe the following behaviour and scheduling after protocol $S$: processor $D$ now stays silent throughout all of protocol $R$, 
and all of the messages to and from parties $A,B$ and $C$ are synchronously delivered. 
Since all nonfaulty parties participate in protocol $S$, and some nonfautly party completed protocol $S$, 
all nonfaulty parties almost-surely complete it as well.
Similarly, since all nonfaulty parties completed protocol $S$ and participate in protocol $R$, they all almost-surely complete it as well.
Define $O_{C}$ to be the random variable describing the output of party $C$ during these runs, conditioned upon the event $G$. 
In other words, only observe the runs in which party $D$ correctly guessed the other parties' randomness.
Now, it is either the case that $\Pr\left[O_C=0\right]\leq \frac{1}{2}$ or the case that $\Pr\left[O_C=1\right]\leq \frac{1}{2}$.

The rest of the section proves claim~\ref{claim:secondattack} by describing attacks in which the adversary controls either party $A$ or party $B$ and simulates the previous adversary's behaviour conditioned upon the event $G$.
It is possible for the adversary to simulate that event with probability $1-\epsilon'$ even though the event has a negligible probability of occurring in the original attack, gaining a significant advantage.
First assume that $\Pr\left[O_C=0\right]\leq \frac{1}{2}$.
In that case, the adversary can control party $B$ with some specific scheduling in such a way that if a nonfaulty dealer shares the value $0$ and the event $\overline{long}$ takes place, party $A$'s view throughout the protocol must be distributed according to $V^0_A|\overline{long}$.
Party $B$ also acts in a way similar to the way it would have acted in the previous attack. 
This means that all parties act in the same way they would have acted in the original attack, and thus party $C$ outputs $0$ with probability $\frac{1}{2}$ or less if the event $\overline{long}$ takes place.
Since the event $\overline{long}$ takes place with at least a probability of $1-\epsilon'$, this proves the claim.  

\begin{lemma}\label{lem:Battack}
If $\Pr\left[O_C=0\right]\leq\frac{1}{2}$, there exist an adversary controlling party $B$ and a scheduling such that with probability $1-\epsilon'$ or more the following things hold when a nonfaulty dealer $D$ shares the value $0$:
\begin{itemize}
    \item party $A$'s view during protocol $S$ is distributed according to $V^0_A|\overline{long}$,
    \item party $C$ outputs $0$ at the end of protocol $R$ with probability $\frac{1}{2}$ or less.
\end{itemize}
\end{lemma}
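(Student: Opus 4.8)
The plan is to re-run the attack behind \claimref{claim:firstattack} in mirrored form: now $D$ is honest and shares $0$, the adversary corrupts $B$ instead of $D$, and it is $B$ — not $D$ — who carries the fake ``sharing-$1$'' view. The payoff is that the negligible-probability event $G$ of the first attack is replaced by the event $\overline{long}$, whose probability is at least $1-\epsilon'$ by the choice of $N$. First I would fix the scheduling and $B$'s conduct during $S$: the channels among $A,B,D$ are synchronous and all of $C$'s messages (in both directions) are delayed, while $B$ simply follows the honest protocol during $S$. Since $D$ is honest and shares $0$, the run of $S$ among $A,B,D$ then has exactly the distribution of the synchronous, $D$-nonfaulty-sharing-$0$, $C$-crashed setting used to define $N$; hence $\overline{long}$ holds with probability at least $1-\epsilon'$, and conditioned on $\overline{long}$ party $A$'s view is by definition distributed as $V^0_A|\overline{long}$, which is the first bullet. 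When $B$ completes $S$ it holds a genuine view containing the $A$--$B$ transcript $m_{AB}$; copying the dealer's sampling from the first attack, $B$ now draws $\tilde r_B\leftarrow R^1_B|m^1_{AB}=m_{AB},\overline{long}$ and $\tilde m_{BD}\leftarrow M^1_{BD}|m^1_{AB}=m_{AB},r^1_B=\tilde r_B,\overline{long}$ and assembles the fake post-$S$ view $\tilde v_B=(\tilde r_B,m_{AB},\tilde m_{BD})$, with $C$ silent. These conditionings are nonempty and samplable by the reasoning of \corollaryref{col:samplable}: since $m_{AB}$ arose in a $\overline{long}$ run we have $\Pr[m^0_{AB}=m_{AB},\overline{long}]\neq 0$, and then \corollaryref{col:indmarginallong} gives $m_{AB}\in M^1_{AB}|\overline{long}$.

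The key step is to check that, conditioned on $\overline{long}$, the pair consisting of $A$'s real view $v_A$ and $B$'s fake view $\tilde v_B$ has precisely the joint law that the pair of views of $A$ and $B$ has in the first attack conditioned on $G$ — which, since all the dealer's samples there were conditioned on $\overline{long}$ and are matched by the real run once $G$ holds, also forces $\overline{long}$. Tracing the samplings, both pairs are produced by the same chain: draw $(r_A,m_{AB},m_{AD})$ from $V^0_A|\overline{long}$ — in the new attack this is simply the real execution of $S$ — and then draw $(\tilde r_B,\tilde m_{BD})$ from $(R^1_B,M^1_{BD})$ conditioned on $m^1_{AB}=m_{AB}$ and $\overline{long}$, assembling $B$'s view with $m_{AB}$ as the shared coordinate. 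That the two conditionings match up across the change of secret is exactly what \lemmaref{lem:indlong} and \corollaryref{col:indmarginallong} ($M^0_{AB}|\overline{long}=M^1_{AB}|\overline{long}$) provide; this is the computation already performed in \lemmaref{lem:distribution} and \lemmaref{lem:firstattacksuccessful}, reused here with the roles of $D$ and $B$ swapped. In particular, as in \lemmaref{lem:termination}, $v_A$ is consistent with $0$ and $\tilde v_B$ is consistent with $1$, so both parties complete $S$ in finite time.

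From here $B$ discards its real state and, for the rest of $S$ after $C$ joins and throughout $R$, behaves exactly as an honest party whose post-$S$ view is $\tilde v_B$ (drawing fresh continuation randomness just as an honest party would); the scheduler delivers all messages among $A,B,C$ synchronously and delays every message of $D$ — its $S$-messages to $C$ and all of its $R$-messages — until after $A$ and $C$ have produced their outputs, a finite and hence valid delay on the probability-one set of terminating runs. Then $C$ receives nothing from $D$ before it outputs, exactly as in the first attack where $D$ was silent, and the continuations of the two experiments are driven by identically distributed post-$S$ states $(v_A,\tilde v_B)$, fresh honest randomness of $A,B,C$, and the same scheduling. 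Hence the conditional law of $C$'s output given $\overline{long}$ equals that of $O_C$, so $C$ outputs $0$ with conditional probability $\Pr[O_C=0]\le\frac{1}{2}$, and since $\Pr[\overline{long}]\ge 1-\epsilon'$ both bullets hold on an event of probability at least $1-\epsilon'$, which is the lemma.

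The main obstacle is making the joint-law identity airtight: one has to be careful that $m_{AB}$ really is a shared coordinate of both views, that conditioning on $\overline{long}$ is consistent on the ``$A$-side'' and the ``$B$-side'' (precisely what \lemmaref{lem:indlong} buys), and that $B$'s switch from its real state to $\tilde v_B$ is invisible to $A$ and $C$, since they only ever see the messages $B$ sends and those are generated from a view with the correct conditional distribution. The scheduling technicality — postponing $D$'s messages past the almost-sure termination of $A$ and $C$ — is a measure-zero issue of the same kind already handled in \lemmaref{lem:termination} and in the footnote on the general case.
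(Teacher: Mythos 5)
Your proposal is correct and follows essentially the same route as the paper's proof: $B$ plays honestly during $S$ while $C$ is delayed, then samples a fake ``sharing-$1$'' view conditioned on the real $A$--$B$ transcript and $\overline{long}$ (justified via \lemmaref{lem:indlong} and \corollaryref{col:indmarginallong}), the joint law of $(v_A,\tilde v_B)$ given $\overline{long}$ is matched to the first attack's law given $G$, and delaying all of $D$'s messages past $A$'s and $C$'s outputs transfers the bound $\Pr[O_C=0]\le\frac{1}{2}$ to this run. The only cosmetic difference is that the paper also explicitly specifies $B$'s fallback behaviour when no $\overline{long}$-run matches $\hat m_{AB}$, which is immaterial since the guarantee is only claimed on the event $\overline{long}$.
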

\begin{proof}
The scheduling is described only in case the dealer shares the value $s=0$ and no party runs for longer than $N$ round.
Any other valid scheduling can take place if those conditions don't hold.
The adversary takes control of party $B$, and makes it act as a nonfaulty party would throughout all of protocol $S$.
All communications between parties $A,B$ and $D$ are synchronous throughout protocol $S$.
In addition, all messages to and from $C$ are delayed until parties $A,B$ and $D$ complete protocol $S$.
Since party $B$ is acting as a nonfaulty party would, parties $A$ and $D$ can't tell the difference between this run and a run in which party $C$ is faulty and silent.
As discussed above, in this situation parties $A,B$ and $D$ must complete protocol $S$.

Let $\hat{m}_{XY}$ be the messages party $X$ exchanged with party $Y$ throughout protocol $S$, and let $\hat{r}_X$ be party $X$'s randomness throughout the protocol.
After completing protocol $S$, party $B$ simulates all runs in which $\overline{long}$ takes place when a nonfaulty dealer shares the value $1$.
If there is no such run in which the messages $\hat{m}_{AB}$ are exchanged between parties $A$ and $B$, party $B$ acts as a nonfaulty processor throughout protocol $R$.
Otherwise, party $B$ samples some random values $\hat{s}_B\leftarrow R^1_B|m^1_{AB}=\hat{m}_{AB},\overline{long}$ and some messages $\hat{s}_{BD}\leftarrow M^1_{BD}|m^1_{AB}=\hat{m}_{AB},r^1_B=\hat{s}_B,\overline{long}$.
Note that clearly $\Pr\left[m^0_{AB}=\hat{m}_{AB}|\overline{long}\right]\neq 0$, and thus also $\Pr\left[m^1_{AB}=\hat{m}_{AB}|\overline{long}\right]\neq 0$ from corollary \ref{col:indmarginallong}.
This means that the above distributions are well-defined.
From this point on, party $B$ acts as a nonfaulty party would act with a view consisting of $\hat{m}_{AB},\hat{s}_{BD},\hat{s}_B$.
After parties $A$ and $B$ complete protocol $S$ all messages between parties $A,B$ and $C$, including the messages previously sent, are synchronously delivered.
All messages to and from party $D$ are delayed until the rest of the parties complete protocol $R$.
It is important to note that in this scheduling, all the messages party $D$ sent to party $C$ throughout protocol $S$ are also delayed until after all nonfaulty parties complete protocol $R$.

Recall that $m_{XY}$ is defined as the messages exchange by parties $X$ and $Y$ and $r_X$ is defined as $X$'s randomness throughout the attack described in claim \ref{claim:firstattack}. 
Now observe a snapshot of the values party $A$ saw throughout protocol $S$ and the values party $B$ claims it saw throughout the protocol.
For any values $r'_A,r'_B,m'_{AB},m'_{BD},m'_{AD}$ such that $\Pr[m_{AB}=m'_{AB},m_{AD}=m'_{AD},m_{BD}=m'_{BD},r_A=r'_A,r_B=r'_B|G]\neq 0$
the following also holds:

\begin{align*}
    &Pr\left[m_{AB}{=}m'_{AB},m_{AD}=m'_{AD},m_{BD}=m'_{BD},r_A=r'_A,r_B=r'_B|G\right]
\end{align*}
\begin{align*}
    &=\Pr\left[s_{AB}{=}m'_{AB},s_{AD}=m'_{AD},s_{BD}=m'_{BD},r_A=r'_A,r_B=r'_B|G\right] \\
    &=\Pr\left[s_{AB}=m'_{AB},s_{AD}=m'_{AD},s_A=r'_A|G\right] \\
    &\cdot\Pr\left[s_B=r'_B|s_{AB}=m'_{AB},s_{AD}=m'_{AD},s_A=r'_A\right]\\
    &\cdot\Pr\left[s_{BD}=m'_{BD}|s_{AB}=m'_{AB},s_{AD}=m'_{AD},s_A=r'_A,s_B=r'_B\right] \\
    &=\Pr\left[m^0_{AB}=m'_{AB},m^0_{AD}=m'_{AD},r^0_A=r'_A|\overline{long}\right] \\
    &\cdot\Pr\left[r^1_B=r'_B|m^1_{AB}=m'_{AB},\overline{long}\right]  \\
    &\cdot\Pr\left[m^1_{BD}=m'_{BD}|m^1_{AB}=m'_{AB},r^1_B=r'_B,\overline{long}\right]
\end{align*}
Where the last equality stems from several facts.
From lemma~\ref{lem:distribution} $\Pr[s_{AB}=m'_{AB},s_{AD}=m'_{AD},r_A=r'_A|G]=\Pr[m^0_{AB}{=}m'_{AB},m^0_{AD}=m'_{AD},r^0_A=r'_A|\overline{long}]$.
From the way the random variable $s_B$ is sampled, given $s_{AB}$ the variable $s_B$ is independent of the variables $s_{AD},s_A$.
Now, $\Pr[s_B=r'_B|s_{AB}=m'_{AB}]=\Pr[r^1_B=r'_B|m^1_{AB}=m'_{AB},\overline{long}]$ from the definition of $s_B$.
A similar argument can be made for the final expression.

On the other hand, note that since parties $A,B$ and $D$ are acting as nonfaulty parties throughout $S$, their actions are distributed identically to the setting in which $C$ is faulty and silent.
In this setting, the event $\overline{long}$ takes place with probability $1-\epsilon'$ at the very least.
Note that if this event takes place, then processor $B$ sees that the messages $\hat{m}_{AB}$ can be exchanged in some run in which the event $\overline{long}$ takes place, and thus sample some values.
Therefore, conditioned upon the event $\overline{long}$:
\begin{align*}
    &\Pr\left[\hat{m}_{AB}{=}m'_{AB},\hat{m}_{AD}=m'_{AD},\hat{s}_{BD}=m'_{BD},\hat{r}_A=r'_A,\hat{s}_B=r'_B|\overline{long}\right] \\
    &=\Pr\left[\hat{m}_{AB}=m'_{AB},\hat{m}_{AD}=m'_{AD},\hat{r}_A=r'_A|\overline{long}\right] \\
    &\cdot\Pr\left[\hat{s}_{B}=r'_B|\hat{m}_{AB}=m'_{AB},\hat{m}_{AD}=m'_{AD},\hat{r}_A=r'_A,\overline{long}\right]  \\
    &\cdot\Pr\left[\hat{m}_{BD}{=}m'_{BD}|\hat{m}_{AB}{=}m'_{AB},\hat{m}_{AD}{=}m'_{AD},\hat{r}_A{=}r'_A,\hat{s}_B=r'_B,\overline{long}\right]  \\
    &=\Pr\left[m^0_{AB}=m'_{AB},m^0_{AD}=m'_{AD},r^0_A=r'_A|\overline{long}\right] \\
    &\cdot\Pr\left[r^1_B=r'_B|m^1_{AB}=m'_{AB},\overline{long}\right]  \\
    &\cdot\Pr\left[m^1_{BD}=m'_{BD}|m^1_{AB}=m'_{AB},r^1_B=r'_B,\overline{long}\right]
\end{align*}
Where the last equality stems from similar arguments.
First of all, note that from $A$'s point of view, party $C$ is acting like a faulty party which is staying silent throughout protocol $S$ and parties $B,D$ are acting as nonfaulty parties with $D$ sharing the value $0$.
Therefore, $\Pr[\hat{m}_{AB}=m'_{AB},\hat{m}_{AD}=m'_{AD},\hat{r}_A=r'_A|\overline{long}]=\Pr[m^0_{AB}=m'_{AB},m^0_{AD}=m'_{AD},r^0_A=r_A|\overline{long}]$.
This also means that if the event $\overline{long}$ takes place, party $A$'s view is distributed according to $V^0_A|\overline{long}$.
From the way $\hat{s}_B$ is sampled, given $\hat{m}_{AB}$, the random variable $\hat{s}_B$ is entirely independent of $\hat{m}_{AD},\hat{r}_A$.
Taking that fact into consideration, and looking at the definition of $\hat{s}_B$, $\Pr[\hat{s}_B=r'_B|\hat{m}_{AB}=m'_{AB},\hat{m}_{AD}=m'_{AD},\hat{r}_A=r'_A,\overline{long}]=\Pr[r^1_B=r'_B|m^1_{AB}=m'_{AB},\overline{long}]$.
A similar argument can be made for $\hat{s}_{BD}$.

Party $B$'s behaviour is identical to the behaviour it would have in the attack described in the first part, and party $A$'s view is identical to that view as well.
From this point on, protocol $R$ is run in the exact same way, and neither party $A$ nor party $C$ can tell the difference between the runs in which party $B$ was faulty and event $\overline{long}$ occurred, and the runs in which party $D$ was faulty, given that event $G$ occurred.
First of all note that in the previous attack, parties $A$ and $C$ output some value before receiving messages from party $D$ during protocol $R$, and thus must do so in this scenario as well.
In order for the scheduling to be valid, all of the messages to and from party $D$ are received some finite time after party $A$ and party $C$ output a value.
The distribution of $A$ and $C$'s views in the beginning of protocol $R$ is identical to the distribution of their views in the previous attack.
Furthermore, party $B$'s actions are defined by the view it is simulating in the beginning of protocol $R$ as well.
Since parties $A,B$ and $C$'s actions are determined by their view at any point in time, the distribution of their views throughout the rest of protocol $R$ is identical in both runs as well, and thus the distributions of their outputs must be the same as well. 
Therefore if event $\overline{long}$ occurred, the probability that party $C$ outputs $0$  $\frac{1}{2}$ or less.
Note that event $\overline{long}$ takes place with probability $1-\epsilon'$ or more, which completes the lemma.

\end{proof}

Now assume that $\Pr\left[O_C=1\right]\leq\frac{1}{2}$. In that case:

\begin{restatable}{lemma}{Aattack}
\label{lem:Aattack}
If $\Pr\left[O_C=1\right]\leq\frac{1}{2}$, there exist an adversary controlling party $A$ and a scheduling such that with probability $1-\epsilon'$ or more the following things hold when a nonfaulty dealer $D$ shares the value $1$:
\begin{itemize}
    \item party $B$'s view during protocol $S$ is distributed according to $V^1_B|\overline{long}$,
    \item party $C$ outputs $1$ at the end of protocol $R$ with probability $\frac{1}{2}$ or less.
\end{itemize}
\end{restatable}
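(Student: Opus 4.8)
The plan is to mirror the proof of \lemmaref{lem:Battack} under the relabeling that swaps party $A$ with party $B$ and the shared secret $0$ with $1$; the only non-mechanical point will be checking that the $S$-phase of the attack of \claimref{claim:firstattack} is itself symmetric under this swap. First I would fix the adversary and scheduling: $D$ is a nonfaulty dealer sharing $1$, the adversary controls only party $A$ and makes it behave exactly as a nonfaulty party would throughout protocol $S$, all communication among $A,B,D$ is synchronous during $S$, and every message to and from $C$ is withheld until $A,B,D$ have completed $S$. Since $A$ acts honestly, parties $B$ and $D$ cannot distinguish this run from one in which $C$ is faulty and silent, so $A,B,D$ all complete $S$, and conditioned on $\overline{long}$ --- which occurs with probability at least $1-\epsilon'$ --- party $B$'s view of $S$ is distributed according to $V^1_B|\overline{long}$. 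This already gives the first bullet.

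Next I would describe $A$'s behaviour in protocol $R$. Writing $\hat{m}_{XY}$ for the $S$-transcript between $X$ and $Y$ and $\hat{r}_X$ for $X$'s randomness, after completing $S$ party $A$ simulates all runs with a nonfaulty dealer sharing $0$ in which $\overline{long}$ holds; if none exchanges the transcript $\hat{m}_{AB}$ between $A$ and $B$, then $A$ simply plays honestly in $R$; otherwise $A$ samples $\hat{s}_A\leftarrow R^0_A|m^0_{AB}=\hat{m}_{AB},\overline{long}$ and $\hat{s}_{AD}\leftarrow M^0_{AD}|m^0_{AB}=\hat{m}_{AB},r^0_A=\hat{s}_A,\overline{long}$ and from then on acts in $R$ as a nonfaulty party whose $S$-view consists of $\hat{m}_{AB},\hat{s}_{AD},\hat{s}_A$ --- a view consistent with the secret $0$. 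Once $A$ and $B$ complete $S$, all messages among $A,B,C$ (including the withheld ones) are delivered synchronously, while all messages to and from $D$ are delayed until every nonfaulty party has completed $R$. The sampled distributions are well defined by the analogues for party $B$ and for the secret $1$ of \lemmaref{lem:indlong}, \corollaryref{col:indmarginallong} and \corollaryref{col:samplable}, which the paper already states hold; in particular $\Pr[m^1_{AB}=\hat{m}_{AB}|\overline{long}]\neq 0$ implies $\Pr[m^0_{AB}=\hat{m}_{AB}|\overline{long}]\neq 0$, so the simulation step is meaningful.

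The heart of the proof, carried out exactly as the two displayed chain-rule computations in \lemmaref{lem:Battack}, is to show that, conditioned on $\overline{long}$, the joint distribution of (party $A$'s claimed $S$-view $\hat{m}_{AB},\hat{s}_{AD},\hat{s}_A$, party $B$'s genuine $S$-view) equals the joint distribution of (party $A$'s view, party $B$'s view) in the attack of \claimref{claim:firstattack} conditioned on the guessing event $G$. Both joints factor by the chain rule, and the factors match using \lemmaref{lem:distribution} and the definitions of the sampled variables. The one point that is not bare relabeling is that the first attack's joint $S$-view distribution --- defined there by sampling $A$'s randomness first and then $B$'s, conditioned on $m^1_{AB}$ --- can equally be obtained by sampling $B$'s randomness first and then $A$'s, conditioned on $m^0_{AB}$; this holds because the normalizing factor $\Pr[m_{AB}|\overline{long}]$ is the same whether computed with secret $0$ or secret $1$ (\corollaryref{col:indmarginallong}). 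I expect this symmetry check to be the main obstacle.

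Finally I would close as in \lemmaref{lem:Battack}: from the joint view of $B$ and $C$, protocol $R$ now unfolds exactly as in the attack of \claimref{claim:firstattack} conditioned on $G$; since there $A$ and $C$ output a value before receiving anything from $D$, they do so here too, and the delayed $D$-messages arrive afterwards so the schedule is valid. Hence $C$'s output is distributed as $O_C$, and since $\Pr[O_C=1]\leq\frac{1}{2}$, party $C$ outputs $1$ with probability at most $\frac{1}{2}$ whenever $\overline{long}$ holds --- the second bullet --- while $\overline{long}$ holds with probability at least $1-\epsilon'$, completing the lemma. Combined with \lemmaref{lem:Battack} and the Correctness property, which forces $C$ to output the shared value with probability at least $\frac{1}{2}+\epsilon$, taking $\epsilon'<\epsilon$ then yields \theoremref{thm:impossibility}.
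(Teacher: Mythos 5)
Your proposal is correct and follows essentially the same route as the paper's proof: honest behaviour in $S$, sampling $\hat{s}_A\leftarrow R^0_A|m^0_{AB}=\hat{m}_{AB},\overline{long}$ and $\hat{s}_{AD}$ to simulate a secret-$0$ view, matching the joint distribution with the first attack conditioned on $G$ via the chain rule, and invoking $\Pr[O_C=1]\leq\frac{1}{2}$; the ``symmetry check'' you flag as the main obstacle is exactly the Bayes-rule reversal $\Pr[s_A=r'_A|s_{AB}=m'_{AB}]=\Pr[r^0_A=r'_A|m^0_{AB}=m'_{AB},\overline{long}]$ that the paper carries out explicitly, so nothing further is missing (only note that in this attack it is parties $B$ and $C$, not $A$ and $C$, whose outputs precede $D$'s delayed messages).
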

\begin{proof}
The proof of this lemma is extremely similar to the proof of lemma~\ref{lem:Battack}, and is thus provided in the appendix.
\end{proof}

If $\Pr\left[O_C=0\right]\leq\frac{1}{2}$, lemma \ref{lem:Battack} shows that claim \ref{claim:secondattack} holds when $s=0$ with the adversary controlling $B$.
On the other hand, if $\Pr\left[O_C=1\right]\leq\frac{1}{2}$, lemma \ref{lem:Aattack} shows that claim \ref{claim:secondattack} holds when $s=1$ with the adversary controlling $A$.
Since either $\Pr\left[O_C=0\right]\leq\frac{1}{2}$ or $\Pr\left[O_C=1|G\right]\leq\frac{1}{2}$, claim \ref{claim:secondattack} must hold.
Now, assume w.l.o.g that $\Pr\left[O_C=0\right]\leq \frac{1}{2}$.
Then, if a nonfaulty dealer $D$ shares the value $0$, an adversary has a strategy controlling $B$ such that for any $1>\epsilon'>0$ party $C$ outputs $0$ with probability no greater than $\frac{1}{2}$ if an event occurs with probability $1-\epsilon'$ or more.
In addition to that if that event doesn't occur (with probability $\epsilon'$ or less), party $C$ might output $0$ with any probability.
So in total, the probability that $C$ outputs $0$ is no greater than $\left(1-\epsilon'\right)\cdot\frac{1}{2}+1\cdot\epsilon'$
All nonfaulty parties, including $C$, must output $0$ with probability $\frac{1}{2}+\epsilon$ or more.
Therefore, pick an $\epsilon'$ such that:
\begin{align*}
    \left(1-\epsilon'\right)\cdot\frac{1}{2}+1\cdot\epsilon' & < \frac{1}{2}+\epsilon\\
    \frac{1}{2}-\frac{1}{2}\cdot\epsilon'+\epsilon' & < \frac{1}{2}+\epsilon\\
    \frac{1}{2}+\frac{1}{2}\cdot\epsilon' & < \frac{1}{2}+\epsilon\\
    \epsilon' & < 2\epsilon
\end{align*}
which reaches a contradiction, completing our proof.
A short sketch of how to extend the proof to any $n$ such that $4t\geq n \geq 3t+1$ or to multivalued secrets is provided in the appendix.

\section{Strong Common Coin}

\gs{This comment is true in this part as well: I've said where things will only almost-surely happen in the protocol in a pretty clunky way. What is the right way to do that? Also, as Danny noted, I probably need to be more precise about the difference between completing the protocol, outputting a value, etc.}

The main goal of this section is to construct a strong common coin primitive.
This primitive is defined as follows:
\IA{maybe move this to top of section - at least talk about it? explain why this is new - its has a stronger correctness property}

\begin{definition}
Protocol $CC$ is an $\epsilon$-biased almost-surely terminating common coin protocol if the following properties hold:
\begin{enumerate}
    \item \textbf{Termination.} If all nonfaulty parties participate in the $CC$ protocol they almost-surely complete it. 
    Furthermore, if some nonfaulty party completes protocol $CC$, every nonfaulty party that begins the protocol almost-surely completes it as well.
    \item \textbf{Correctness.} For every value $b\in \left\{0,1\right\}$, there is at least a $\frac{1}{2}-\epsilon$ probability that every nonfaulty party that completes the protocol outputs $b$.
    Regardless, all nonfaulty parties that complete the protocol output the same value with probability $1$.
\end{enumerate}
\end{definition}
This definition has three natural desired properties of a common coin protocol: 
the protocol almost-surely terminates, it has an arbitrarily small bias (as a parameter of the protocol), and the output value is always agreed upon by all parties.
Previous works have achieved some subset of those properties, but not all three together.
For example, the protocol in \cite{CR93} doesn't always terminate and the parties don't always agree on the output value.
On the other hand, the protocol described in \cite{SVSS} always terminates, but can completely fail $O\left(n^2\right)$ times.
This also means that if just one common coin instance is required, there is no guarantee that protocol will yield the desired properties.

Throughout the following sections assume the number of nonfaulty parties is $t$ such that $3t+1\leq n$.
The following protocols use the protocols $SVSS$ and $BA$, which are resilient to this number of faulty parties.
The SVSS protocol, as defined in \cite{SVSS}, has a designated dealer with some input $s$ and it consists of two sub-protocols, $SVSS-Share$ and $SVSS-Rec$.
\begin{definition} An SVSS protocol has the following properties:
\begin{enumerate}
    \item \textbf{Validity of termination.} If a nonfaulty dealer initiates $SVSS-Share$ and all nonfaulty parties participate in the protocol, then every nonfaulty party eventually completes $SVSS-Share$.
    \item \textbf{Termination.} If a nonfaulty party completes either protocol $SVSS-Share$ or $SVSS-Rec$, then all nonfaulty parties that participate in the protocol eventually complete it. 
    \gs{This is slightly different from the claims in the original SVSS paper (the part about reconstruction isn't there). However, it is true because all messages in the rec protocol are broadcasted, and this change is necessary for the coin's termination property. Do we need to prove that the change is correct?} 
    Moreover, if all nonfaulty parties begin protocol $SVSS-Rec$, then all nonfaulty parties eventually complete protocol $SVSS-Rec$.
    \item \textbf{Binding.} Once the first nonfaulty party completes an invocation of $SVSS-Share$ with session id $\left(c,d\right)$, there is a value $r$ such that either:
    \begin{itemize}
        \item the output of each nonfaulty party that completes protocol $SVSS-Rec$ is $r$; or
        \item there exists a nonfaulty party $P_i$ and a faulty party $P_j$ such that $P_j$ is shunned by $P_i$ starting in session $\left(c,d\right)$.
    \end{itemize}
    \item \textbf{Validity.} If the dealer is nonfaulty with input $s$, then the binding property holds with $r=s$.
    \item \textbf{Hiding.} If the dealer is nonfautly and no nonfaulty party invokes protocol $SVSS-Rec$, then the faulty parties learn nothing about the dealer's value.
\end{enumerate}
\end{definition}

Party $P_i$ shuns party $P_j$ if it accepted messages from it in the current invocation, but won't accept any messages from it in future interactions.
For our purposes it is enough to note that fewer than $n^2$ shunning events can take place overall.
\gs{I don't want to go too much into shunning. Is this enough of an explanation?}

\begin{definition}
An almost-surely terminating binary Asynchronous Byzantine Agreement is a protocol in which each nonfaulty party has an input from $\left\{0,1\right\}$, and the following properties hold:
\begin{enumerate}
    \item \textbf{Termination.} If all nonfaulty parties participate in the protocol, all nonfaulty parties almost-surely eventually complete the protocol.
    Furthermore, if some nonfaulty party completes the protocol, all nonfaulty parties that participate in it do so as well.
    \item \textbf{Validity.} If all nonfaulty parties have the same input $\sigma\in\left\{0,1\right\}$, every nonfaulty party that completes the protocol outputs $\sigma$. 
    \item \textbf{Correctness.} All nonfaulty parties that complete the protocol output the same value $\sigma\in\left\{0,1\right\}$.
\end{enumerate}
\end{definition}

Let $SVSS$ be a protocol with the $SVSS$ properties, and $BA$ be an almost-surely terminating binary Asynchronous Byzantine Agreement protocol, as described in \cite{SVSS}.
Both of these protocols are resilient to $t$ faulty processors such that $3t+1\leq n$.

In addition to these two protocol, the common coin protocol requires a protocol for agreeing on a common subset of parties for which some condition holds.
In order to do that, in the protocol each party $P_i$ employs a "dynamic predicate" $Q_{ir}$ for each round $r$.
Intuitively $Q_{ir}\left(j\right)$ denotes whether $P_i$ saw that some irreversible condition holds with regard to $P_j$. 
For every value $j\in\left[n\right]$, $Q_{ir}\left(j\right)\in\left\{0,1\right\}$ at any given point in time.
Initially, $\forall j\in\left[n\right]$ $Q_{ir}\left(j\right)=0$, and for any such $j$, $Q_{ir}\left(j\right)$ can turn into 1, but not back to 0.
The idea of a dynamic predicate and for the protocol below are described in \cite{BKR94}.

\begin{definition}
Protocol $CS$ is a common subset protocol, with a dynamic predicate $Q_i$ and a number $k\leq n$ as input, if it has the following properties:
\begin{enumerate}
    \item \textbf{Termination.} If all nonfaulty parties invoke the protocol, and there exists a set $I\subseteq\left[n\right]$ such that:
    \begin{itemize}
        \item $\left|I\right|\geq k$, and
        \item for every nonfaulty party $P_i$, eventually $\forall j\in I\ Q_i\left(j\right)=1$,
    \end{itemize}
    then all nonfaulty parties almost-surely complete the invocation of $CS$. 
    
    Furthermore, if some nonfaulty party completes protocol $CS$ and if for every pair of nonfaulty parties $P_i,P_j$ that participate in the $CS$ protocol and value $k\in\left[n\right]$ if $Q_{i}\left(k\right)=1$ then eventually $Q_{j}\left(k\right)=1$, then every nonfaulty party that participates in the protocol almost-surely completes it as well. 
    
    \gs{We should be able to strengthen this to "will terminate" by amending BA's properties to something like "once the first nonfaulty party completes the BA protocol, all nonfaulty parties will complete it in constant time" (which is also true).}
    
    \item \textbf{Correctness.} All nonfaulty parties that complete an invocation of $CS$ output the same set $S\subseteq \left[n\right]$. 
    Furthermore, $\left|S\right|\geq k$ and for every $j\in S$ there exists a nonfaulty party $P_i$ such that $Q_i\left(j\right)=1$.
\end{enumerate}
\end{definition}

A construction of a common subset protocol resilient to $t$ faulty parties such that $3t+1\leq n$ is shown and proven with slight changes in \cite{BKR94}.
For completeness, another construction and proof with the aforementioned properties is shown in the appendix.
From this point on, assume the $CommonSubset$ protocol is a common subset protocol.

Using the previously discussed primitives, the rest of this section describes and proves the correctness of a common coin protocol.
Intuitively, in the protocol several weak coins are flipped using the $SVSS$ protocol.
These coins should behave as fully unbiased coin in most cases, but $n^2$ of these coins could fail because the $SVSS$ protocol could fail $n^2$ times.
In this context a coin failing means that it can be totally biased, or not agreed upon.
This means that enough weak coins need to be flipped so that the $n^2$ failures are not significant.
The number of weak coin flips is set to be proportionate to $n^4$ and to a function of the acceptable bias in the final coin, and the output is the value output in the majority of the rounds. 
From the properties of the binomial distribution the $n^2$ faulty coin flips should not significantly bias the result given that around $n^4$ coins are flipped.
\gs{I added this intuitive explanation which should also provide insight into what the proof will look like.
Is this helpful at all?}

\begin{algorithm}[H]
\caption{$CoinFlip\left(\epsilon\right)$}
\underline{Code for $P_i$}:
\begin{itemize}
\item Let $k=4\ceil{\left(\frac{e}{\epsilon\cdot\pi}\right)^2n^4}$
\item For $r$ = 1 to k:
\begin{enumerate}
\item Sample $b_{ir}\leftarrow\left\{0,1\right\}$ uniformly. Call $SVSS-Share_{ir}\left(b_{ir}\right)$ as dealer.
\item Participate in $SVSS-Share_{jr}$ with $P_j$ as dealer for every $j\in\left[n\right]$. 

Note this means the party begins participating in iteration $r$'s $SVSS-Share$ invocations only after completing iteration $r-1$.
\item Define the dynamic predicate $Q_{ir}$ as follows for every $j\in\left[n\right]$:

$Q_{ir}\left(j\right)=
\begin{cases}
1 & \text{if $SVSS-Share_{jr}$ has been completed} \\
0 & else
\end{cases}$
\item Continually participate in $CommonSubset_{r}\left(Q_{ir},n-t\right)$, denote its output as $S_{ir}$.

\item After $CommonSubset_{r}$ terminates, invoke $SVSS-Rec_{jr}$ for every $j\in S_{ir}$, let the reconstructed value be $b_{ijr}$.

\item For every $j\in S_{ir}$ compute $b'_{ijr}=b_{ijr}\mod 2$.

Compute $b'_{ir}=\bigoplus_{j\in S_{ir}}b'_{ijr}$ and continue to the next iteration.

\end{enumerate}

\item After completing the final iteration, compute $b'_i=\majority_{r\in \left[k\right]}\left\{b'_{ir}\right\}$.
\item Participate in a final $BA$ invocation with input $b'_i$.
After completing the $BA$ invocation, output its output.
In addition, continue participating in all relevant invocations of $BA$, $SVSS$ and $CommonSubset$ until they terminate.
\end{itemize}
\end{algorithm}
\gs{As Danny noted, the complexity doesn't go down with the number of shunning events and there is no way to parmaterize that here. It might be better to use $k$ as a parameter and try to work $\epsilon$ out as a function of $k$.}

\begin{theorem}
\label{thm:coinflip}
For every $\epsilon\in\left(0,\frac{1}{2}\right)$ and $t$ faulty processors such that $3t+1\leq n$, protocol $CoinFlip\left(\epsilon\right)$ is an $\epsilon$-biased almost-surely terminating common coin protocol.
\end{theorem}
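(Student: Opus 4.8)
The plan is to verify the two properties in the definition of an $\epsilon$-biased almost-surely terminating common coin separately, treating the bias bound as the main work. Throughout write $k=4\ceil{(e/(\epsilon\pi))^2 n^4}$ for the number of iterations and recall from the $SVSS$ discussion that at most $n^2$ shunning events occur over the whole execution.

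\textbf{Termination.} I would prove by induction on the iteration index $r$ that if every nonfaulty party participates in iteration $r$, then every nonfaulty party completes it. Each nonfaulty $P_j$ invokes $SVSS\text{-}Share_{jr}$ as dealer, so by the \emph{Validity of termination} property of $SVSS$ every nonfaulty party eventually completes $SVSS\text{-}Share_{jr}$; hence for every nonfaulty $P_i$ we eventually get $Q_{ir}(j)=1$ for all nonfaulty $j$, and the set $I$ of nonfaulty parties has $|I|\ge n-t$. The \emph{Termination} property of $CommonSubset$ then makes $CommonSubset_r(Q_{ir},n-t)$ complete at every nonfaulty party, with a common output $S_r$ of size $\ge n-t$; by \emph{Correctness} of $CommonSubset$ each $j\in S_r$ satisfies $Q_{i'r}(j)=1$ for some nonfaulty $P_{i'}$, i.e. some nonfaulty party completed $SVSS\text{-}Share_{jr}$, so by the \emph{Termination} property of $SVSS$ all nonfaulty parties complete $SVSS\text{-}Share_{jr}$ and, having all begun $SVSS\text{-}Rec_{jr}$, all complete it. Thus iteration $r$ terminates; after the last iteration all nonfaulty parties invoke the final $BA$ with inputs $b'_i$, which completes by $BA$'s \emph{Termination}. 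Since the protocol consists of finitely many $SVSS$, $CommonSubset$ and $BA$ invocations, each almost-surely terminating, the whole protocol almost-surely terminates, and the conditional ``furthermore'' clause follows from the corresponding conditional-termination clauses of $SVSS$, $CommonSubset$ and $BA$ together with the fact that each nonfaulty party keeps participating in all pending invocations.

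\textbf{Correctness.} Agreement is immediate: every nonfaulty party outputs the output of the final $BA$, so the \emph{Correctness} of $BA$ gives that all nonfaulty parties that complete output the same value, with probability $1$. For the bias, fix $b\in\{0,1\}$ (the two cases are symmetric). By the \emph{Validity} of $BA$ it suffices that $b'_i=b$ for every nonfaulty $P_i$, i.e. that a strict majority of the $k$ iteration values $b'_{ir}$ equals $b$ at every nonfaulty party. Call iteration $r$ \emph{good} if for every $j\in S_r$ the first alternative of the $SVSS$ \emph{Binding} property holds for session $(j,r)$ (all nonfaulty reconstructors of $SVSS\text{-}Rec_{jr}$ output a common value $\rho_{jr}$), and \emph{bad} otherwise. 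A bad iteration entails at least one shunning event, and distinct iterations use distinct such events (they are tied to distinct sessions), so at most $n^2$ iterations are bad and at least $k-n^2$ are good. In a good iteration every nonfaulty $P_i$ computes the same value $b'_r:=\bigoplus_{j\in S_r}(\rho_{jr}\bmod 2)$; since $|S_r|\ge n-t>t$, $S_r$ contains a nonfaulty dealer $P_j$, for which \emph{Validity} gives $\rho_{jr}=b_{jr}$, a uniform bit, while $S_r$ is fixed by $CommonSubset_r$ before any $SVSS\text{-}Rec$ is invoked, so by \emph{Hiding} the faulty parties' committed values $\rho_{\ell r}$ ($\ell\in S_r$ faulty) are independent of $b_{jr}$. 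Hence, conditioned on the entire execution up to the reconstructions of iteration $r$, $b'_r$ is a uniform bit, so listing good iterations in order yields a sequence of independent fair coins of (random) length $\ge k-n^2$, and each nonfaulty party's count of $b$-valued iterations is at least the number of good iterations with $b'_r=b$. It then remains to show $\Pr[\,\#\{r\text{ good}:b'_r=b\}>k/2\,]\ge \frac{1}{2}-\epsilon$; writing $k/2\le m/2+n^2/2$ for the number $m\ge k-n^2$ of good iterations, this reduces to a binomial anti-concentration estimate, where a Stirling-type bound on the central terms of $\mathrm{Bin}(m,\tfrac12)$ together with $\sqrt{k}=2\sqrt{\ceil{(e/(\epsilon\pi))^2 n^4}}\ge \tfrac{2e}{\epsilon\pi} n^2$ shows that exceeding the mean by $n^2/2$ costs at most $\epsilon$ in probability, with the constant $4$ providing slack. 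Combining agreement and this bound proves \theoremref{thm:coinflip}.

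\textbf{Main obstacle.} The delicate point is the bias argument against an \emph{adaptive} adversary: one must argue that each good iteration contributes a \emph{fresh} unbiased bit that is moreover \emph{common} to all nonfaulty parties, which simultaneously uses \emph{Binding}, \emph{Validity} and \emph{Hiding} of $SVSS$, the guarantee $|S_r|\ge n-t$ from $CommonSubset$, and the ordering that fixes $S_r$ before any reconstruction; and then one must push the up-to-$n^2$ adversarially controlled (bad) iterations through a sufficiently tight binomial anti-concentration bound — this is exactly what forces $k=\Theta(n^4/\epsilon^2)$. The termination and agreement parts are comparatively routine bookkeeping over the component protocols' guarantees.
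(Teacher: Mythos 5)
Your overall plan is the paper's proof: termination by composing the guarantees of $SVSS$, $CommonSubset$ and $BA$ iteration by iteration; agreement from the final $BA$; and the bias bound from Binding/Validity/Hiding, $\left|S_r\right|\geq n-t$, the global bound of fewer than $n^2$ shunning events, and a Stirling-type central-term estimate for the binomial with $k=\Theta(n^4/\epsilon^2)$. The one step that does not hold as stated is the claim that ``listing good iterations in order yields a sequence of independent fair coins of (random) length $\geq k-n^2$,'' together with the ensuing reduction to $\Pr\left[\mathrm{Bin}\left(m,\tfrac{1}{2}\right)>\tfrac{m}{2}+\tfrac{n^2}{2}\right]$. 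Whether iteration $r$ is good is decided by the adversary during the $SVSS\text{-}Rec$ invocations of iteration $r$, i.e.\ \emph{after} it has learned the reconstructed values, so it can spend its shunning budget selectively on iterations whose already-determined XOR equals $b$. Conditioning on ``$r$ is good'' therefore biases $b'_r$; the good subsequence is not i.i.d.\ fair, $\#\{r \text{ good}: b'_r=b\}$ is not $\mathrm{Bin}\left(m,\tfrac{1}{2}\right)$, and in the worst case it equals the number of $b$-valued iterations among all $k$ minus $n^2$, so the margin you must beat is $n^2$, not $n^2/2$.

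The repair is exactly the paper's bookkeeping, and it stays within your argument. By Binding, already at the moment the first nonfaulty party completes $CommonSubset_r$ a value $c'_r=\bigoplus_{j\in S_r}\left(s'_{jr}\bmod 2\right)$ is fixed for \emph{every} iteration $r$, good or bad; by Hiding and Validity (your own argument, using a nonfaulty $j\in S_r$ and the fact that $S_r$ is fixed before any reconstruction) these are uniform and independent across iterations, so $X=\#\{r: c'_r=b\}\sim\mathrm{Bin}\left(k,\tfrac{1}{2}\right)$ no matter where the adversary later causes shuns. In at least $k-n^2$ iterations every nonfaulty $P_i$ computes $b'_{ir}=c'_r$, hence $X>\tfrac{k}{2}+n^2$ forces a strict majority of $b$ at every nonfaulty party and, via $BA$ Validity, a common output $b$. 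Your central-term Stirling estimate should then be applied to the event $X>\tfrac{k}{2}+n^2$ (deviation $n^2$ over $k$ trials); this is precisely the computation in the paper's appendix, and the constant $4$ in $k=4\ceil{\left(\frac{e}{\epsilon\pi}\right)^2 n^4}$ gives enough slack. With that substitution the termination and agreement parts of your proposal coincide with the paper's proof.
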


\begin{proof}
Each property is proven individually. Throughout this proof let $k=4\ceil{\left(\frac{e}{\epsilon\cdot \pi}\right)^2n^4}$ as defined in the protocol.

\textbf{Termination}
First show that if all nonfaulty parties participate in the $CoinFlip\left(\epsilon\right)$ protocol they all almost-surely complete it.
In order to do that, we first show that if all nonfaulty parties start the $r$'th iteration of protocol $CoinFlip\left(\epsilon\right)$, then they all almost-surely complete it. 
Note that all nonfaulty parties continue participating in all $SVSS$ and $CommonSubset$ invocations even after completing the $CoinFlip$ protocol, so if all nonfaulty parties started participating in them, their termination properties continue to hold.
If all nonfaulty parties start the $r$'th iteration of $CoinFlip\left(\epsilon\right)$, every nonfaulty party $P_i$ samples a random value $b_{ir}$, invokes $SVSS-Share_{ir}$ as dealer, and participates in $SVSS-Share_{jr}$ with $P_j$ as dealer for every $j\in\left[n\right]$.
From the Termination property of $SVSS$, since all nonfaulty parties participate in $SVSS-Share_{jr}$ for every nonfaulty dealer $P_j$, all nonfaulty parties eventually complete $SVSS-Share_{jr}$.
Once party $P_i$ completes $SVSS-Share_{jr}$, $Q_{ir}\left(j\right)$ becomes 1.
This means that there exists a set $I\subseteq\left[n\right],\left|I\right|\geq n-t$ such that for every nonfaulty party $P_i$, eventually $\forall j\in I \  Q_{ir}\left(j\right)=1$.
In addition, all nonfaulty parties participate in $CommonSubset_r$ because they started iteration $r$ and continue participating in it even after completing $CoinFlip$ until $CommonSubset_r$ terminates locally.
From the Termination property of $CommonSubset$, all nonfaulty parties almost-surely complete $CommonSubset_r$.
From the Correctness property of $CommonSubset$, for every $j\in S_r$, $Q_{ir}\left(j\right)=1$ for at least one nonfaulty party $P_i$.
This means that for every $j\in S_r$ at least one nonfaulty party completed $SVSS-Share_{jr}$.
From the Termination property of $SVSS$, all other nonfaulty parties complete $SVSS-Share_{jr}$ as well.
After that, all nonfaulty parties reach step 5 of the iteration, and invoke $SVSS-Rec_{jr}$ for every $j\in S_r$.
Again, from the Termination property of $SVSS$, all nonfaulty parties complete $SVSS-Rec_{jr}$ for every $j\in S_r$.
After that, all nonfaulty parties perform local computations in step 6, and reach the end of the iteration.

Since all parties start with the same parameter $\epsilon$ they all compute the same value $k$.
Note that this means that all nonfaulty parties begin the first iteration, and won't stop before completing the $k$'th iteration.
Using a simple inductive argument, all nonfaulty parties almost-surely complete $k$ iterations.
After completing all $k$ iteration, every nonfaulty party then performs a local computation and participates in the last $BA$ invocation.
From the Termination property of the $BA$ protocol, all nonfaulty parties almost-surely complete that $BA$ invocation, and then output its value and complete the protocol.

For the second part of the property, assume some nonfaulty party $P_i$ completed the $CoinFlip$ protocol.
Before doing that, it must have completed the $CommonSubset_r$ protocol for every $r\in\left[k\right]$ and output some set $S_r$.
It must have also completed the $SVSS-Rec_{jr}$ protocol for every $r\in\left[k\right],j\in S_r$, and the final $BA$ protocol.
Now observe some other nonfaulty party $P_l$ that participates in the protocol.
For every nonfaulty party $P_k$ and value $j\in\left[n\right]$, if $Q_{kr}\left(j\right)=1$ it must have first completed the invocation of $SVSS-Share_{jr}$.
From the Termination property of $SVSS$, every other nonfaulty party $P_m$ that participates in $SVSS-Share_{jr}$ completes the protocol as well and sets $Q_{mr}\left(j\right)=1$.
Note that every nonfaulty party that participates in the $CommonSubset_{r}$ protocol also participates in each of the relevant $SVSS-Share$ invocations.
Therefore the conditions of the second part of the Termination property of $CommonSubset$ hold, and thus if $P_l$ participates in $CommonSubset_r$ it  almost-surely completes it as well, and from the Correctness property it outputs $S_r$ as well.
$P_l$ then calls $SVSS-Rec_{jr}$ for every $j\in S_r$ and since those are the same invocations that $P_i$ completed, $P_l$ completes them as well.
This mean that for every $r\in\left[k\right]$, $P_l$ almost-surely completes $CommonSubet_r$ and $SVSS-Rec_{jr}$ for every $j\in S_r$, after which it continues to the next iteration.
After completing all $k$ iterations, $P_l$ performs some local computations and participates in the $BA$ protocol as well.
Since $P_i$ completed the $BA$ protocol, $P_l$ must almost-surely complete the protocol as well, and then complete the protocol.

\textbf{Correctness}
Every nonfaulty party that completes the protocol outputs the value it output in the final $BA$ protocol.
From the Correctness property of $BA$, all nonfaulty parties output the same value in the $BA$ protocol, and thus they all output the same value in the $CoinFlip$ protocol.
This proves the second part of the property

We now turn to deal with the first part of the property.
Every nonfaulty party that completes the $CoinFlip$ protocol must have completed all iterations of the loop in protocol $CoinFlip$.
In each iteration, from the correctness property of $CommonSubset$ there exists some set $S_r$ such that every nonfaulty party that completes $CommonSubset_r$ outputs $S_r$.
From the Correctness property of $CommonSusbet$, at the time $CommonSubset_r$ is completed, for every $j\in S_r$ there exists some nonfaulty party $P_i$ such that $Q_{ir}\left(j\right)=1$.
$P_i$ only sets $Q_{ir}\left(j\right)=1$ if it has has already completed $SVSS-Share_{jr}$.
In other words, at the time some nonfaulty party completes $CommonSubset_r$ there exists some nonfaulty party that completes $SVSS-Share_{jr}$ for every $j\in S_r$.
From the binding property of $SVSS$, at that time some value $s'_{jr}$ is set such that every nonfaulty party that completes $SVSS-Rec_{jr}$ either outputs $s'_{jr}$, or some nonfaulty party shuns some faulty party starting in that $SVSS$ session. 
Denote $c'_{jr}=s'_{jr}\mod 2$, and $c'_r=\bigoplus_{j\in S_r} c'_{jr}$.
Note that $s'_{jr}$ is supposed to be either $0$ or $1$ but in the case of sharing over a large field, this cannot be enforced for faulty dealers.

For every $j\in S_r$, no nonfaulty party invokes $SVSS-Rec_{jr}$ before completing the  $CommonSubset_r$ invocation, at which time $S_r$ is set already.
From the hiding property of $SVSS$, before some nonfaulty party invokes $SVSS-Rec_{jr}$ for any nonfaulty dealer $P_j$, the faulty (and nonfaulty) parties' view is distributed independently of the value $b_{jr}$ shared by $P_j$.
This also means that the values shared by any nonfaulty party $P_j$ such that $j\in S_r$ are entirely independent of other values shared by all other parties in $S_r$.
From the validity property of $SVSS$, $c'_{jr}=b_{jr}$ for every nonfaulty $P_j$.
Since $\left|S_r\right|\geq n-t$, there exists at least one nonfaulty party $P_j$ such that $j\in S_r$.
Note that $c'_{r}=0$ if and only if $\bigoplus_{l\in S_r\setminus \left\{j\right\}} c'_{lr}=c'_{jr}=b_{jr}$. $b_{jr}$ is sampled uniformly from $\left\{0,1\right\}$ and entirely independently from the rest of the values, and thus the probability that $c'_r=0$ for any $r\in\left[k\right]$ is exactly $\frac{1}{2}$.
Using similar arguments it can also be shown that the values $c'_{r}$ are independent of values computed in all other iterations.

For each $r\in\left[k\right]$ either every nonfaulty party $P_i$ that completes the $r$'th iteration computes $b'_{ir}=c'_r$ or some nonfaulty party shuns some faulty party starting in iteration $r$.
Overall, there can occur fewer than $n^2$ shunning events, and thus for at least $k-n^2$ different iterations every nonfaulty party $P_i$ that completes the $r$'th iteration computes $b'_{ir}=c'_r$.
This means that if $\left|\left\{r|c'_r=1\right\}\right|>\frac {k}{2}+n^2$, then regardless of the faulty parties' actions, every nonfaulty party $P_i$ that completes all $k$ iterations outputs $b'_{ir}=c'_r=1$ for at least $\left\lfloor{\frac{k}{2}}\right\rfloor + 1$ of those iteration, and thus inputs 1 to the $BA$ invocation at the end of the protocol.
From the correctness property of $BA$, if every nonfaulty party that participates in a $BA$ invocation inputs the value $1$, then every nonfaulty party that completes the invocation outputs $1$.
In that case, all nonfaulty parties output $1$ in the end of the $CoinFlip$ protocol.
The exact same argument can be made stating that all nonfaulty parties output $0$.
The proof that these events take place with probability $\frac{1}{2}-\epsilon$ at the very least follows from well-known properties of the binomial distribution and is therefore moved to the appendix.
\end{proof}

\section{Fair Agreement}

\gs{First, I've said where things will only almost-surely happen in the protocol in a pretty clunky way. What is the right way to do that? Also, as Danny noted, I probably need to be more precise about the difference between completing the protocol, outputting a value, etc.}

This section deals with constructing a Byzantine Agreement protocol with strong properties.
First of all, the regular notions of Correctness (i.e. agreement) and Termination are preserved.
In addition to that, a stronger notion of Validity is achieved in the case of multivalued agreement. 
If all nonfaulty processors have the same input $\sigma$, they all output $\sigma$;
however, if that is not the case, the probability that all nonfaulty parties output some nonfaulty party's input is at least $\frac{1}{2}$.
This also nicely extends to natural notions of fairness in the case of a non-Byzantine adversary.

\begin{definition}
A Fair Byzantine Agreement protocol has the following properties:
\begin{enumerate}
    \item \textbf{Termination.} If all nonfaulty parties participate in the protocol, they almost-surely complete it.
    Furthermore, if some nonfaulty party completes the protocol, all other nonfaulty parties that participate in it almost-surely complete it as well.
    \item \textbf{Validity.} If all nonfaulty parties have the same input to the protocol, they output that value.
    Otherwise, with probability at least $\frac{1}{2}$, all nonfaulty parties output some nonfaulty party's input.
    \item \textbf{Correctness.} All nonfaulty parties that complete the protocol output the same value.
\end{enumerate}
\end{definition}

The goal in this section is to design a Fair Byzantine Agreement protocol.
In order to do so, a protocol for choosing one element out of $m$ elements in an almost fair way is described.

\begin{definition}
A Fair Choice protocol has the following properties if all nonfaulty parties that participate in it have the same input $m\geq 3$:
\begin{enumerate}
    \item \textbf{Termination.} If all nonfaulty parties participate in the protocol they all almost-surely complete it.
    Furthermore, if some nonfaulty party completes the protocol, all other nonfaulty parties that participate in it almost-surely complete it as well.
    \item \textbf{Validity.} For any set $G\subseteq\left\{0,\ldots,m-1\right\}$ such that $\left|G\right|> \frac{m}{2}$ the probability that all nonfaulty parties that complete the protocol output some $i\in G$ is at least $\frac{1}{2}$.
    \item \textbf{Correctness.} All nonfaulty parties that complete the protocol output the same value $i\in\left\{0,\ldots,m-1\right\}$.
\end{enumerate}
\end{definition}

\gs{What is the correct way to deal with the input $m$? Do I just say that this is a part of the protocol or do I deal with it as an input and condition upon having the same input?}

\begin{algorithm}[H]
\caption{$FairChoice(m)$}

\underline{Code for $P_i$}:
\begin{enumerate}
\item Set $N=2^l$ for the smallest $l\in\mathbb{N}$ such that $4m^2\geq N\geq 2m^2$ and set $\epsilon=\frac{1}{100m\log_2 m}$.
\item For every $i\in\left[l\right]$ participate in $CoinFlip_i\left(\epsilon\right)$ and let the $i$'th output be $b_i$.
\item Let $r$ be the number whose binary representation is $b_1b_2\ldots b_l$.
Output $r \mod m$.
\end{enumerate}
\end{algorithm}

\begin{restatable}{theorem}{fairchoice}
\label{thm:fairchoice}
FairChoice is a Fair Choice protocol for any number of faulty parties $t$ such that $3t+1\leq n$.
\end{restatable}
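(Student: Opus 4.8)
Since every nonfaulty party feeds the same input $m\ge 3$ into $FairChoice$, they all compute the same $N=2^l$ (which is well defined, as $\{2m^2,\dots,4m^2\}$ always contains a power of two), the same $\epsilon=\tfrac{1}{100\,m\log_2 m}$, and run the same $l$ invocations $CoinFlip_1(\epsilon),\dots,CoinFlip_l(\epsilon)$. I would get Termination and Correctness essentially for free from \theoremref{thm:coinflip}, and spend all the work on Validity, which I will reduce to: (i) the output integer $r=b_1\cdots b_l$ is close in total variation to the uniform distribution $U_N$ on $\{0,\dots,N-1\}$; and (ii) $U_N\bmod m$ is close enough to uniform on $\{0,\dots,m-1\}$ that any set $G$ with $|G|>m/2$ gets noticeably more than half the mass.

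\textbf{Correctness and Termination.} For Correctness: by the Correctness property of each $CoinFlip_i(\epsilon)$, all nonfaulty parties that complete it output the same bit $b_i$ with probability $1$; since a nonfaulty party produces its $FairChoice$ output only after completing all $l$ invocations, all nonfaulty parties that complete $FairChoice$ hold the same $b_1,\dots,b_l$, hence the same $r$ and the same $r\bmod m$. For Termination: if all nonfaulty parties participate in $FairChoice$ they participate in every $CoinFlip_i(\epsilon)$, so by the first clause of $CoinFlip$'s Termination they almost-surely complete each of the $l$ invocations and therefore $FairChoice$; and a nonfaulty party that completes $FairChoice$ has completed all $l$ invocations, so by the second clause of $CoinFlip$'s Termination every nonfaulty party that participates completes each invocation, hence $FairChoice$.

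\textbf{Validity --- near-uniformity of $r$.} First I would record that each coin is nearly unbiased: the Correctness property of $CoinFlip_i(\epsilon)$ gives $\Pr[b_i=0],\Pr[b_i=1]\ge \tfrac12-\epsilon$, and since $b_i$ is well defined with probability $1$ these sum to $1$, so each lies in $[\tfrac12-\epsilon,\tfrac12+\epsilon]$; moreover this remains true conditioned on any outcome of the other invocations, because such conditioning only gives extra side information to the adversary/scheduler, against which the $CoinFlip$ guarantee still holds. Using the chain rule together with subadditivity of total variation distance over adaptively chosen coordinates, this yields $d_{TV}\bigl(\mathrm{law}(r),\,U_N\bigr)\le l\,\epsilon$. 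Since the smallest $l$ with $2^l\ge 2m^2$ satisfies $2^{l-1}<2m^2$, we get $l\le 2+2\log_2 m\le 4\log_2 m$ for $m\ge 3$, and with $\epsilon=\tfrac{1}{100\,m\log_2 m}$ this gives $d_{TV}\bigl(\mathrm{law}(r),U_N\bigr)\le \tfrac{1}{25m}$.

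\textbf{Validity --- conclusion, and main obstacle.} For each residue $j$, the number of elements of $\{0,\dots,N-1\}$ congruent to $j$ is at least $\lfloor N/m\rfloor\ge N/m-1$, so $\Pr_{u\sim U_N}[u\equiv j\pmod m]\ge \tfrac1m-\tfrac1N\ge \tfrac1m-\tfrac1{2m^2}$ (using $N\ge 2m^2$). Hence for any $G\subseteq\{0,\dots,m-1\}$ with $|G|>m/2$, so $|G|\ge \tfrac{m+1}{2}$, we get $\Pr_{u\sim U_N}[u\bmod m\in G]\ge \tfrac{m+1}{2}\bigl(\tfrac1m-\tfrac1{2m^2}\bigr)=\tfrac12+\tfrac1{4m}-\tfrac1{4m^2}$. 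Combining with the total-variation bound, $\Pr[r\bmod m\in G]\ge \tfrac12+\tfrac1{4m}-\tfrac1{4m^2}-\tfrac1{25m}=\tfrac12+\tfrac{21}{100m}-\tfrac1{4m^2}>\tfrac12$ for all $m\ge 3$, which is the desired Validity statement. The step I expect to be the main obstacle is justifying rigorously that the possibly-concurrent invocations $CoinFlip_1(\epsilon),\dots,CoinFlip_l(\epsilon)$ behave like $\epsilon$-biased bits even conditioned on one another (i.e.\ that they do not share exploitable randomness); the clean fix is either a careful composition argument treating earlier outputs as adversarial advice, or simply running the $l$ invocations sequentially, at which point the chain-rule bound above applies directly.
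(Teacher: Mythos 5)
Your Termination and Correctness arguments coincide with the paper's. For Validity, however, you take a genuinely different route. The paper works multiplicatively: it defines $S=\cup_{i\in G}S_i$, lower-bounds $|S|\ge (N-m)(\tfrac12+\tfrac1{2m})$, bounds $\Pr[r=r']\ge(\tfrac12-\epsilon)^l$ for each $r'\in S$, and then shows $(\tfrac12+\tfrac1{4m}-\tfrac1{4m^2})\bigl(\tfrac{99}{100}e^{-1/50}\bigr)^{4/m}>\tfrac12$ for all $m\ge3$ via a limit at infinity, a numerical check at $m=3$, and a sign analysis of the derivative. Your argument replaces the multiplicative loss $(\tfrac12-\epsilon)^l$ by an additive total-variation loss: $d_{TV}(\mathrm{law}(r),U_N)\le l\epsilon\le\tfrac1{25m}$, combined with the same counting bound $\Pr_{U_N}[u\bmod m\in G]\ge\tfrac12+\tfrac1{4m}-\tfrac1{4m^2}$, giving $\tfrac12+\tfrac{21}{100m}-\tfrac1{4m^2}>\tfrac12$ by one line of arithmetic; your estimates ($l\le 4\log_2 m$ for $m\ge2$, $1/N\le 1/(2m^2)$) check out. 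This buys a much shorter and more transparent finish, eliminating the paper's calculus entirely, at the price of needing the hybrid/chain-rule step, which requires each $b_i$ to be $\epsilon$-close to uniform \emph{conditioned} on the other outcomes. You rightly flag this composition issue as the main obstacle; note, though, that the paper's own bound $\Pr[\bigwedge_{j=1}^l r_j=r'_j]\ge(\tfrac12-\epsilon)^l$ silently assumes exactly the same conditional guarantee (it only cites the marginal bound $\Pr[b_j=b]\ge\tfrac12-\epsilon$), so this is not a gap relative to the paper's own level of rigor, and your proposed fixes (sequential invocation, or a composition argument treating earlier outputs as adversarial advice) are the natural ways to make either proof fully precise.
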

The proof is provided in the appendix.
A Fair Byzantine Agreement protocol that uses the Fair Choice protocol is described below.
In this Fair Byzanting Agreement protocol, each party $P_i$ has some input $x_i$.
The construction makes use of a Broadcast protocol.

\begin{definition}
A Broadcast protocol is a protocol with a designated sender $P_i$ with some input $v$, which has the following properties:
\begin{enumerate}
    \item \textbf{Termination.} If $P_i$ is nonfaulty and all nonfaulty parties participate in the protocol, they all complete the protocol.
    Furthermore, if some nonfaulty party completes the protocol, every other nonfautly party that participates in it does so as well.
    \item \textbf{Validity.} If $P_i$ is nonfaulty, every nonfaulty party that completes the protocol outputs $v$.
    \item \textbf{Correctness.} All nonfaulty parties that complete the protocol output the same value.
\end{enumerate}
\end{definition}
\noindent Let A-Cast be a Broadcast protocol, for example as described in \cite{B87}.

\begin{algorithm}[H]
\caption{$FBA$}

\underline{Code for $P_i$ with input $x_i$}:
\begin{enumerate}
\item A-Cast $x_i$ and participate in every other party's $A-Cast$.
Denote the output of $P_j$'s A-Cast to be $x'_j$.
\item Define the dynamic predicate $Q_i$ as follows:

$Q_{i}\left(j\right)=
\begin{cases}
1 & \text{if $P_j$'s A-Cast has been completed} \\
0 & else
\end{cases}$
\item Continually participate in $CommonSubset\left(Q_i,n-t\right)$.
\item After completing the $CommonSubset$ protocol, let $S$ be the protocol's output and let $m=\left|S\right|$.
Wait to complete $P_j$'s A-Cast for every $j\in S$.
\item If there exists some value $x$ such that $\left|\left\{x'_j=x|j\in S\right\}\right|> \frac{m}{2}$, output $x$ and complete the protocol.
Otherwise, continue to the next step.
\item Participate in $FairChoice\left(m\right)$, and let the output be $k$.
\item Let $j$ be the $k$'th biggest value in $S$, with $0$ being understood as the biggest value, $1$ as the second biggest, etc.
\item Output $x'_j$.
\end{enumerate}
\end{algorithm}

\begin{restatable}{theorem}{fba}
\label{thm:fba}
Protocol $FBA$ is a Fair Byzantine Agreement protocol for any number of faulty parties $t$ such that $3t+1\leq n$.
\end{restatable}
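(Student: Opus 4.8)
The plan is to verify the three required properties of $FBA$---Termination, Correctness (agreement), and Validity---separately, each reduced to the corresponding properties of the building blocks A-Cast, $CommonSubset$ and $FairChoice$, all of which are resilient for $3t+1\le n$. Two elementary counting facts, both consequences of $n\ge 3t+1$, are used repeatedly: any set $S$ output by $CommonSubset(Q_i,n-t)$ has $|S|\ge n-t\ge 2t+1$, so at most $t$ of its members are faulty and strictly more than $|S|/2$ of them are nonfaulty; and $m=|S|\ge 2t+1\ge 3$ whenever $t\ge 1$, so the invocation $FairChoice(m)$ is well-formed (the degenerate case $t=0$ can be handled by a trivial deterministic protocol and is ignored).

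For \textbf{Termination}, assume all nonfaulty parties participate. Every nonfaulty party's A-Cast completes for all nonfaulty parties by A-Cast Termination, so writing $I$ for the set of nonfaulty indices (with $|I|\ge n-t$), eventually $Q_i(j)=1$ for all $j\in I$ and all nonfaulty $P_i$; hence $CommonSubset$ terminates and outputs a common $S$. For each $j\in S$, $CommonSubset$ Correctness yields a nonfaulty $P_i$ with $Q_i(j)=1$, i.e.\ some nonfaulty party completed $P_j$'s A-Cast, so by A-Cast Termination every nonfaulty party does, and step 4 completes; steps 5, 7, 8 are local, and if step 5 fails then $FairChoice(m)$ terminates by its Termination property. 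For the second clause, if a nonfaulty party completed $FBA$ it completed $CommonSubset$, the A-Casts of all $j\in S$, and possibly $FairChoice(m)$, and the ``some nonfaulty completed'' clauses of these three primitives propagate completion to every nonfaulty participant; to apply the second Termination clause of $CommonSubset$ one checks its hypothesis, namely that $Q_i(k)=1$ for a nonfaulty $P_i$ forces $P_i$ to have completed $P_k$'s A-Cast, so by A-Cast Termination every nonfaulty $P_j$ eventually has $Q_j(k)=1$. For \textbf{Correctness}, $CommonSubset$ Correctness gives all nonfaulty parties the same $S$ and A-Cast Correctness gives them the same $x'_j$ for every $j\in S$; thus they evaluate the test in step 5 identically and either all output the same $x$ or all proceed, and in the latter case $FairChoice$ Correctness hands them the same $k$, so with the common $S$ they select the same index $j$ and output the common value $x'_j$.

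For \textbf{Validity}, first suppose all nonfaulty parties hold the same input $\sigma$. Every nonfaulty $j\in S$ has $x'_j=\sigma$ by A-Cast Validity, and strictly more than $|S|/2$ indices of $S$ are nonfaulty, so $\sigma$ has a strict majority in $S$ and no other value can; hence step 5 outputs $\sigma$. Now suppose the nonfaulty inputs differ. If step 5 succeeds with value $x$, then more than $|S|/2>t$ indices $j\in S$ satisfy $x'_j=x$, so at least one such $j$ is nonfaulty and $x=x_j$ is a nonfaulty party's input, which all nonfaulty parties output with probability $1$. If step 5 fails, set $G=\{k\in\{0,\dots,m-1\}:\text{the }k\text{-th largest index of }S\text{ belongs to a nonfaulty party}\}$; since $S$ contains at most $t$ faulty indices, $|G|\ge m-t>m/2$, so by $FairChoice$ Validity, with probability at least $\tfrac12$ all nonfaulty parties output some $k\in G$, in which case the chosen index $j$ is nonfaulty and they all output the nonfaulty input $x_j$. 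In every case, with probability at least $\tfrac12$ all nonfaulty parties output some nonfaulty party's input.

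The main obstacle is not a single hard step but the bookkeeping in the Termination argument: one must chain the ``if some nonfaulty party completed, then all do'' clauses of A-Cast, $CommonSubset$, and $FairChoice$ in the correct order, and in particular verify the predicate-monotonicity hypothesis of $CommonSubset$'s second Termination clause. The Validity argument, though short, relies essentially on the resilience bound $n\ge 3t+1$ through $|S|\ge n-t\ge 2t+1>2t$, which is precisely what guarantees that a strict majority within $S$, or a target set $G$ of size exceeding $m/2$, must contain a nonfaulty party.
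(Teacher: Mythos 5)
Your proposal is correct and follows essentially the same route as the paper's own proof: Termination and Correctness are reduced to the corresponding properties of A-Cast, $CommonSubset$ (including its second Termination clause via the monotone-predicate check) and $FairChoice$, and Validity is split into the same two cases using $|S|\geq n-t>2t$ so that a strict majority of $S$, or the set of nonfaulty positions in $S$, exceeds $m/2$. Your explicit treatment of the degenerate $t=0$ case (where $m\geq 3$ need not hold) is a minor point the paper glosses over, but otherwise the arguments coincide.
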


Intuitively, each party A-Casts its input value, and the parties agree on a subset of parties of size $n-t$ at the very least whose values have been received using the $CommonSubset$ protocol.
If all nonfaulty parties have the same input, they will see that a majority of the parties sent the same value and output that value in line 5, achieving the first part of the Validity property.
Otherwise, the parties choose the value sent by one of those parties "almost fairly" using the $FairChoice$ Protocol.
Since more than half of the parties in the agreed upon subset are nonfaulty, the probability that a nonfaulty party will be chosen is at least $\frac{1}{2}$.
A formal proof is provided in the appendix.

\begin{acks}
The authors would like to thank the anonymous referees for
their valuable comments and helpful suggestions.
This work was supported by the \grantsponsor{RE133010}{HUJI Federnann Cyber Security Research Center}{https://www.gov.il/he/Departments/General/academicresearchcenter} in conjunction with the Israel National Cyber Directorate (INCD) in the Prime Minister's Office under Grant No.: \grantnum{RE133010}{3011004045}.
\end{acks}

\bibliographystyle{plain}
\bibliography{bibfile}

\newpage
\appendix
\section{Proofs of Technical Lemmas in Section \ref{sec:main}}
\indmarginallong*
\begin{proof}
Each of these equalities is shown individually.

\begin{align*}
& \Pr\left[m^0_{AD}=m'_{AD}|\overline{long}\right] =\\
&=\sum_{\substack{m'_{AB}\in M^0_{AB}|\overline{long},\\r'_A\in R^0_A|\overline{long}}} \Pr\left[m^0_{AD}=m'_{AD},m^0_{AB}=m'_{AB},r_A=r'_A|\overline{long}\right] \\
& = \sum_{\substack{m'_{AB}\in M^1_{AB}|\overline{long},\\r'_A\in R^1_A|\overline{long}}} \Pr\left[m^1_{AD}=m'_{AD},m^1_{AB}=m'_{AB},r_A=r'_A|\overline{long}\right] \\
& = \Pr\left[m^1_{AD}=m'_{AD}|\overline{long}\right]
\end{align*}

Note that summing over $m'_{AB}\in M^0_{AB}|\overline{long}$ is the same as summing over $m'_{AB}\in M^1_{AB}|\overline{long}$ because for every $m'_{AB}\in M^0_{AB}|\overline{long}$ there must exist $m'_{AD},r'_A$ such that $\Pr[m_{AD}^{0} = m'_{AD} , m_{AB}^{0} = m'_{AB} ,\newline r^0_A = r' | \overline{long}] \neq 0$. 
From previous observations, this means that  $\Pr[m_{AD}^{1}=m'_{AD},m_{AB}^{1}=m'_{AB},r^1_A=r'_A|\overline{long}]\neq0$ and thus $m'_{AB}\in M^0_{AB}|\overline{long}$ as well.
The same reasoning holds about $r'_A$.
The argument also clearly works in reverse.
This argument can also be made for any other subset of the three variables.

For the second property, note that $\Pr\left[m^0_{AB}=m'_{AB}|\overline{long}\right]\neq 0$ and thus the probability is well defined. In that case:
\begin{align*}
&\Pr\left[m^0_{AD}=m'_{AD}|m^0_{AB}=m'_{AB},\overline{long}\right] =\\
& = \frac{\Pr\left[m^0_{AD}=m'_{AD},m^0_{AB}=m'_{AB}|\overline{long}\right]}
{\Pr\left[m^0_{AB}=m'_{AB}|\overline{long}\right]} \\
& = \frac{\Pr\left[m^1_{AD}=m'_{AD},m^1_{AB}=m'_{AB}|\overline{long}\right]}
{\Pr\left[m^1_{AB}=m'_{AB}|\overline{long}\right]} \\
& = \Pr\left[m^1_{AD}=m'_{AD}|m^1_{AB}=m'_{AB},\overline{long}\right]
\end{align*}
It is important to notice that all of those arguments could have been made with any subset of the three random variables described in the lemma.
\end{proof}

\samplable*
\begin{proof}
We go through each sampled value and check if the distribution is well-defined.
First, $D$ samples $s_A\leftarrow R^0_A|\overline{long}$.
From the definitions of $\epsilon'$ and the corresponding $N$, the probability that $A$'s view throughout protocol $S$ is of length greater than $N$ is no greater than $\epsilon'$.
This means that the event $\overline{long}$ happens with probability $1-\epsilon' > 0$ at the very least, and thus there must also exist some value $r_A\in R^0_A|\overline{long}$.
$D$ then samples $s_{AB}\leftarrow M^0_{AB}|r^0_A=s_A,\overline{long}$. Since $\Pr\left[r^0_A=r_A|\overline{long}\right]\neq 0$, there must be some set of messages $m'_{AB}\in M^0_{AB}|\overline{long}$ such that $\Pr\left[m^0_{AB}=m'_{AB},r^0_A=s_A|\overline{long}\right]\neq 0$ and thus the distribution is well defined.
The argument for $s_{AD}$ is identical.
$D$ then samples $s_B\leftarrow R^1_B|m^1_{AB}=s_{AB},\overline{long}$.
Following similar arguments, $\Pr\left[m^0_{AB}=s_{AB}|\overline{long}\right]\neq 0$ and thus from corollary \ref{col:indmarginallong}, $\Pr\left[m^1_{AB}=s_{AB}|\overline{long}\right]\neq 0$.
Now, following similar arguments both $s_B$ and $s_{BD}$ are sampled from well-defined distributions.
$D$ can easily sample from these distributions by simulating all runs with parties $A,B$ and $D$ that take no more than $N$ rounds to terminate.
This is possible because of the assumption that the size of messages and randomness in each round is bounded.
If that is not the case, $D$ can simulate the protocol step by step and sample values that way.
\end{proof}

\distribution*
\begin{proof}
The random variable $v_{A}^{0}$ is defined to be party $A$'s view during protocol $S$ with a nonfaulty dealer $D$ sharing the value $s=0$, and a faulty $C$ which remains silent.
Since no messages are received from party $C$, $A$'s view consists of $m^0_{AB},m^0_{AD},r^0_A$ \gs{Technically, $A$ could also send $C$ some messages, but since no response is received, those messages are deterministic given the rest of the view}.
In the run described in the lemma no messages are sent or received from party $C$ either and thus party $A$'s
view consists of $s_{AB},s_{AD},r_{A}$.
Technically the ordering could also matter, but note that the scheduling is deterministic and looks identical in both runs, so the order in which messages are received is ignored.

Observe some $m'_{AB},m'_{AD},r'_{A}$ such that $\Pr[m^0_{AB}=m'_{AB},m^0_{AD}=m'_{AD},r_A=r'_A|\overline{long}]\neq 0$:

\begin{align*}
&\Pr\left[m_{AB}^{0}=m'_{AB},m_{AD}^{0}=m'_{AD},r^0_{A}=r'_{A}|\overline{long}\right] =\\
& =\Pr\left[m_{AD}^{0}=m'_{AD}|m_{AB}^{0}=m'_{AB},r^0_{A}=r'_{A},\overline{long}\right] \\
& \cdot \Pr\left[m_{AB}^{0}=m'_{AB}|r^0_{A}=r'_{A},\overline{long}\right]\cdot\Pr\left[r^0_{A}=r'_{A}|\overline{long}\right]
\end{align*}

On the other hand:

\begin{align*}
&\Pr\left[s_{AB}=m'_{AB},s_{AD}=m'_{AD},r_{A}=r'_{A}|G\right] =\\ 
&=\Pr\left[s_{AD}=m'_{AD}|s_{AB}=m'_{AB},r_{A}=r'_{A},G\right]\\
& \cdot \Pr\left[s_{AB}=m'_{AB}|r_{A}=r'_{A},G\right]\Pr\left[r_{A}=r'_{A}|G\right] \\
& =\Pr\left[s_{AD}=m'_{AD}|s_{AB}=m'_{AB},s_{A}=r'_{A}\right]\\
& \cdot \Pr\left[s_{AB}=m'_{AB}|s_{A}=r'_{A}\right]\Pr\left[s_{A}=r'_{A}\right] \\
& =\Pr\left[m^0_{AD}=m'_{AD}|m^0_{AB}=m'_{AB},r^0_{A}=r'_{A},\overline{long}\right]\\
& \cdot \Pr\left[m^0_{AB}=m'_{AB}|r^0_{A}=r'_{A},\overline{long}\right]\Pr\left[r^0_{A}=r'_{A}|\overline{long}\right] \\
& = \Pr\left[m^0_{AB}=m'_{AB},m^0_{AD}=m'_{AD},r^0_A=r'_A|\overline{long}\right]
\end{align*}

Where the second to last equality stems from the definitions of the random variables $s_A,s_{AB},s_{AD}$.

The analysis for $B$'s view can be done in a similar fashion, finding that:

\begin{align*}
&\Pr\left[s_{AB}=m'_{AB},s_{BD}=m'_{BD},r_{B}=r'_{B}|G\right] =\\
& =\Pr\left[s_{BD}=m'_{BD}|s_{AB}=m'_{AB},r_{B}=r'_{B},G\right]\\
& \cdot \Pr\left[r_{B}=r'_{B}|s_{AB}=m'_{AB},G\right]\Pr\left[s_{AB}=m'_{AB}|G\right] \\
& =\Pr\left[s_{BD}=m'_{BD}|s_{AB}=m'_{AB},s_{B}=r'_{B}\right]\\
& \cdot \Pr\left[s_{B}=r'_{B}|s_{AB}=m'_{AB}\right]\Pr\left[s_{AB}=m'_{AB}|G\right] \\ 
& =\Pr\left[m^1_{BD}=m'_{BD}|m^1_{AB}=m'_{AB},r^1_{B}=r'_{B},\overline{long}\right]\\
& \cdot \Pr\left[r^1_{B}=r'_{B}|m^1_{AB}=m'_{AB},\overline{long}\right]\Pr\left[s_{AB}=m'_{AB}|G\right]
\end{align*}
Where the final equality stems from the definition of the random variables $s_{BD},s_B$.
Now observe the messages between parties $A$ and $B$:

\begin{align*}
&\Pr\left[s_{AB}=m'_{AB}|G\right] =\\
& = \sum_{r'_A\in R^0_{A}|\overline{long}} \Pr\left[s_{AB}=m'_{AB}|r_{A}=r'_{A},G\right]\Pr\left[r_A=r'_A|G\right] \\
& = \sum_{r'_A\in R^0_{A}|\overline{long}} \Pr\left[s_{AB}=m'_{AB}|s_{A}=r'_{A}\right]\Pr\left[s_A=r'_A\right] \\
& = \sum_{r'_A\in R^0_{A}|\overline{long}} \Pr\left[m^0_{AB}=m'_{AB}|r^0_{A}=r'_{A},\overline{long}\right]\Pr\left[r^0_A=r'_A|\overline{long}\right] \\
& = \Pr\left[m^0_{AB}=m'_{AB}|\overline{long}\right] = \Pr\left[m^1_{AB}=m'_{AB}|\overline{long}\right]
\end{align*}
Where the third equality stems from the definitions of $s_A$ and $s_{AB}$, and the last equality stems from corollary \ref{col:indmarginallong}.
Completing the original analysis:

\begin{align*}
&\Pr\left[s_{AB}=m'_{AB},s_{BD}=m'_{BD},r_{B}=r'_{B}|G\right] =\\
& =\Pr\left[m^1_{BD}=m'_{BD}|m^1_{AB}=m'_{AB},r^1_{B}=r'_{B},\overline{long}\right]\\
& \cdot \Pr\left[r^1_{B}=r'_{B}|m^1_{AB}=m'_{AB},\overline{long}\right]\Pr\left[s_{AB}=m'_{AB}|G\right] \\
& =\Pr\left[m^1_{BD}=m'_{BD}|m^1_{AB}=m'_{AB},r^1_{B}=r'_{B},\overline{long}\right]\\
& \cdot \Pr\left[r^1_{B}=r'_{B}|m^1_{AB}=m'_{AB},\overline{long}\right]\Pr\left[m^1_{AB}=m'_{AB}|\overline{long}\right] \\
& = \Pr\left[m^1_{AB}=m'_{AB},m^1_{BD}=m'_{BD},r^1_B=r'_B|\overline{long}\right]
\end{align*}

Since an equality holds for every nonzero-probability event and both views must define probability spaces, the distributions must be the same.
\end{proof}

\Aattack*
\begin{proof}
The scheduling is identical to the scheduling described in the previous lemma, and party $A$ similarly acts as a nonfaulty party throughout all of protocol $S$. 
Following the exact same arguments, parties $A,B$ and $D$ must complete protocol $S$ without party $C$ sending or receiving any messages.
Similarly define $\hat{m}_{XY}$ to be the messages party $X$ and $Y$ exchanged throughout protocol $S$, and $\hat{r}_X$ to be party $X$'s randomness throughout the protocol.

After completing protocol $S$, party $A$ simulates all runs in which $\overline{long}$ takes place when a nonfaulty dealer shares the value $0$.
If there is no such run in which the messages $\hat{m}_{AB}$ are exchanged between parties $A$ and $B$, party $A$ acts as a nonfaulty processor throughout protocol $R$.
Otherwise, using those simulations, party $A$ samples random values $\hat{s}_A\leftarrow R^0_A|m^0_{AB}=\hat{m}_{AB},\overline{long}$ and messages $\hat{s}_{AD}\leftarrow M^0_{AD}|m^0_{AB}=\hat{m}_{AB},r^0_A=\hat{s}_A,\overline{long}$.
Note that in this case clearly $\Pr\left[m^1_{AB}=m_{AB}|\overline{long}\right]\neq 0$, and thus also $\Pr\left[m^0_{AB}=m_{AB}|\overline{long}\right]\neq 0$ from corollary \ref{col:indmarginallong}.
This means that the above distributions are well-defined.
From this point on, party $A$ acts as a nonfaulty party would act with a view consisting of $\hat{m}_{AB},\hat{s}_{AD},\hat{s}_A$.
The scheduling from this point on is identical to the scheduling described in the previous lemma.

Recall that $m_{XY}$ is defined as the messages exchange by parties $X$ and $Y$ and $r_x$ is defined as $X$'s randomness throughout the attack described in claim \ref{claim:firstattack}. 
Now observe a snapshot of the values party $A$ saw throughout protocol $S$ and the values party $B$ claims it saw throughout the protocol.
For any values $r'_A,r'_B,m'_{AB},m'_{BD},m'_{AD}$ such that
$\Pr[m_{AB}=m'_{AB},m_{AD}=m'_{AD},m_{BD}=m'_{BD},r_A=r'_A,r_B=r'_B|G]\neq 0$
first analyse the variable $s_A$:
\begin{align*}
    &\Pr\left[s_A=r'_A|s_{AB}=m'_{AB},s_{BD}=m'_{BD},s_B=r'_B\right]= \\
    &\Pr\left[s_A=r'_A|s_{AB}=m'_{AB}\right]= \\
    &\frac{\Pr\left[s_{AB}=m'_{AB}|s_A=r'_A\right]
    \Pr\left[s_A=r'_A\right]}
    {\sum_{\bar{r}_A\in R^0_A|\overline{long}} \Pr\left[s_{AB}=m'_{AB}|s_A=\bar{r}_A\right]\Pr\left[s_A=\bar{r}_A\right]}= \\
    &\frac{\Pr\left[m^0_{AB}=m'_{AB}|r^0_A=r'_A,\overline{long}\right]
    \Pr\left[r^0_A=r'_A|\overline{long}\right]}
    {\sum_{\bar{r}_A\in R^0_A|\overline{long}} \Pr\left[m^0_{AB}=m'_{AB}|r^0_A=\bar{r}_A,\overline{long}\right]\Pr\left[r^0_A=\bar{r}_A|\overline{long}\right]}=\\
    &\Pr\left[r^0_A=r'_A|m^0_{AB}=m'_{AB},\overline{long}\right]
\end{align*}
Where the first equality stems from the fact that given $s_A$, $s_{AB}$ is independent of $s_{BD},s_B$, from which the reverse also follows.
In addition, the third equality stems from the definition of $s_{AB}$.
Now continue the analysis in a similar fashion to before:
\begin{align*}
    &\Pr\left[m_{AB}=m'_{AB},m_{AD}=m'_{AD},m_{BD}=m'_{BD},r_A=r'_A,r_B=r'_B|G\right] \\
    &=\Pr\left[s_{AB}=m'_{AB},s_{AD}=m'_{AD},s_{BD}=m'_{BD},r_A=r'_A,r_B=r'_B|G\right] \\
    &=\Pr\left[s_{AB}=m'_{AB},s_{BD}=m'_{BD},r_B=r'_B|G\right] \\
    &\cdot\Pr\left[s_A=r'_A|s_{AB}=m'_{AB},s_{BD}=m'_{BD},s_B=r'_B\right]  \\
    &\cdot\Pr\left[s_{AD}=m'_{AD}|s_{AB}=m'_{AB},s_{BD}=m'_{BD},s_A=r'_A,s_B=r'_B\right] \\
    &=\Pr\left[m^1_{AB}=m'_{AB},m^1_{BD}=m'_{BD},r^1_B=r'_B|\overline{long}\right] \\
    &\cdot\Pr\left[r^0_A=r'_A|m^0_{AB}=m'_{AB},\overline{long}\right] \\
    &\cdot\Pr\left[m^0_{AD}=m'_{AD}|m^0_{AB}=m'_{AB},r_A=r_A,\overline{long}\right]
\end{align*}
Where the last equality stems from several facts.
From lemma~\ref{lem:distribution}, $\Pr[s_{AB}=m'_{AB},s_{BD}=m'_{BD},r_B=r'_B|G]=\Pr [ m^1_{AB} = m'_{AB} , m^1_{BD} = m'_{BD} , r^1_B = r'_B | \overline{long}]$.
From the definition of the random variable $s_{AD}$, given $s_{AB}$ and $s_A$, the variable $s_{AD}$ is independent of the variables $s_{BD},s_B$, 
and then the equality stems from the definition of $s_{AD}$ and from the previous analysis.

On the other hand, note that since parties $A,B$ and $D$ are acting as nonfaulty parties throughout $S$, their actions are distributed identically to the setting in which $C$ is faulty and silent.
In this setting, the event $\overline{long}$ takes place with probability $1-\epsilon'$ at the very least.
Note that if this event takes place, then processor $A$ sees that the messages $\hat{m}_{AB}$ can be exchanged in some run in which the event $\overline{long}$ takes place, and thus sample some values.
Therefore, conditioned upon the event $\overline{long}$:
\begin{align*}
    &\Pr\left[\hat{m}_{AB}{=}m'_{AB},\hat{s}_{AD}=m'_{AD},\hat{m}_{BD}=m'_{BD},\hat{s}_A=r'_A,\hat{r}_B=r'_B|\overline{long}\right] \\
    &=\Pr\left[\hat{m}_{AB}=m'_{AB},\hat{m}_{BD}=m'_{BD},\hat{r}_B=r'_B|\overline{long}\right] \\
    &{\cdot}\Pr\left[\hat{s}_{A}=r'_A|\hat{m}_{AB}=m'_{AB},\hat{m}_{BD}=m'_{BD},\hat{r}_B=r'_B,\overline{long}\right]  \\
    &{\cdot}\Pr\left[\hat{s}_{AD}{=}m'_{AD}|\hat{m}_{AB}{=}m'_{AB},\hat{m}_{BD}=m'_{BD},\hat{s}_A=r'_A,\hat{r}_B=r'_B,\overline{long}\right] \\
    &=\Pr\left[m^1_{AB}=m'_{AB},m^1_{BD}=m'_{BD},r^1_B=r'_B|\overline{long}\right] \\
    &{\cdot}\Pr\left[r^0_A=r'_A|m^0_{AB}=m'_{AB},\overline{long}\right]  \\
    &{\cdot}\Pr\left[m^0_{AD}=m'_{AD}|m^0_{AB}=m'_{AB},r^0_A=r'_A,\overline{long}\right]
\end{align*}
Where the last equality stems from similar arguments.
First of all note that from $B$'s point of view, party $C$ is acting like a faulty party which is staying silent throughout protocol $S$ 
and parties $A,D$ are acting as nonfaulty parties with $D$ sharing the value $1$.
Therefore, $\Pr[\hat{m}_{AB}=m'_{AB},\hat{m}_{BD}=m'_{BD},\hat{r}_B=r'_B|\overline{long}]=\Pr[m^1_{AB}=m'_{AB},m^1_{BD}=m'_{BD},r^1_B=r'_B|\overline{long}]$.
This also means that if event $\overline{long}$ occurs, party $B$'s view is distributed according to $V^1_B|\overline{long}$.
From the way $\hat{s}_A$ is sampled, given $\hat{s}_{AB}$, the random variable $\hat{s}_A$ is entirely independent of $\hat{m}_{BD},\hat{r}_B$.
Taking that fact into consideration, and looking at the definition of $\hat{s}_A$, $\Pr[\hat{s}_A=r'_A|\hat{m}_{AB}=m'_{AB},\hat{m}_{BD}=m'_{BD},\hat{r}_B=r'_B,\overline{long}]=\Pr[r^0_A=r'_A|m^0_{AB}=m'_{AB},\overline{long}]$.
A similar argument can be made for $\hat{s}_{AD}$.

From this point on the rest of the argument is identical to the argument in the previous lemma, 
finding that if event $\overline{long}$ occurs, the probability that party $B$ outputs $0$ is $\frac{1}{2}$ or less.
Since event $\overline{long}$ occurs with probability $1-\epsilon'$ at the very least, this completes the proof.
\end{proof}

\section{Extending the Impossibility Result}
In order to extend the proof to a multivalued secret, it is enough to note that any protocol in which the dealer can share values from some set $V$ can be used for sharing binary values.
For example, this can be done by mapping the possible values to the values $0$ and $1$ in some predetermined fashion.
Extending the result to any $n$ such that $4t\geq n\geq 3t+1$ requires a more intricate simulation.
If there exists a terminating $\left(\frac{1}{2}+\epsilon\right)$-correct $t$-resilient Byzantine AVSS protocol for some $4t\geq n\geq 3t+1$, then there must also exist such a protocol for $n=4,t=1$.
A sketch for this reduction follows:
\begin{itemize}
    \item Parties $A,B,C$ each simulate $t$ parties running the protocol for the case that $4t\geq n\geq 3t+1$, and party $D$ simulates $n-3t$ parties.
    The dealer must be one of the parties $D$ simulates.
    \item Every time some party needs to send a message between two parties it is simulating, the simulating party just "delivers" the message.
    \item If some message is sent between simulated parties controlled by different parties, the message is sent between the simulating parties, including the ids of the sending and receiving parties.
    When the message is received, the relevant simulating party "delivers" the message to the correct party and continues the simulation accordingly.
    \item Finally, when some simulating party sees that all of the parties it controls completed protocol $R$ and output some value, the simulating party outputs the value which was output by most of the parties it simulated.
\end{itemize}

This is a standard technique.
Note that the message scheduling in the simulation could also take place in the case that there actually are $n$ parties.
In addition, the adversary can only control up to $t$ parties.
In order to give the adversary full control over which parties it controls, this argument can be made with each possible allocation of simulated parties.
Clearly, in every case in which all nonfaulty parties complete protocol $R$ and all output some value, every nonfaulty party will also output the same value, from which all of the properties follow.

\gs{This is very bare-bones and ugly. Is this the level of depth you wanted to go into?}

\section{Construction and Proof of a Common Subset Protocol}
\begin{algorithm}[H]
\caption{$CommonSubset_{r}\left(Q_{ir},k\right)$}

\underline{Code for $P_i$}:
\begin{enumerate}
\item Initialize $c_{ir}=0$.
\item For every $j\in \left[n\right]$, once $Q_{ir}\left(j\right)$ becomes 1, if $c_{ir}<k$, begin participating in $BA_{jr}$ with input 1.
\item If at any point $BA_{jr}$ terminates with output 1 for any $j\in\left[n\right]$, set $c_{ir}=c_{ir}+1$.
\item Once $c_{ir}\geq k$, begin participating in $BA_{jr}$ with input 0 for every $j\in\left[n\right]$ such that $Q_{ir}\left(j\right)=0$ at this point in time.
\item Denote $b_{jr}$ to be the output of $BA_{jr}$.
Output $\left\{j|b_{jr}=1\right\}$.
\item Continue participating in $BA_{jr}$ for every $j\in\left[n\right]$ until they terminate even after completing this invocation of $CommonSubset_r$. \gs{For example, this is weirdly worded and not explicitly defined. What is the right way to deal with this?}
\end{enumerate}
\end{algorithm}

Note that throughout this discussion we assume $k\leq n$ and $3t+1\leq n$.

\begin{lemma}\label{lem:cs_size}
If there exists a set $I\subseteq\left[n\right],\left|I\right|\geq k$ such that for every nonfaulty party $P_i$, eventually $\forall j\in I \  Q_{ir}\left(j\right)=1$ and all nonfaulty parties invoke $CommonSubset_{r}$ for a given r, then at least $k$ invocations of $BA_{jr}$ almost-surely terminate with output 1.
\end{lemma}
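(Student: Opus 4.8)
The plan is to argue by contradiction. Suppose that with positive probability fewer than $k$ of the invocations $BA_{jr}$ terminate with output $1$; call this event $E$ and work conditioned on it (which is legitimate since $\Pr[E]>0$, and conditioning on a positive-probability event preserves probability-$1$ events). First I would make a counting observation about the counters: for a nonfaulty party $P_i$ the value $c_{ir}$ is incremented in step~3 only when $P_i$ completes some $BA_{jr}$ with output $1$, each index $j$ contributing at most once, and since $P_i$ is nonfaulty each such $j$ is one for which $BA_{jr}$ terminates with output $1$. Hence on $E$ we have $c_{ir}<k$ at all times for every nonfaulty $P_i$, so no nonfaulty party ever executes step~4 and every nonfaulty party joins the instances $BA_{jr}$ only through step~2.

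Next I would invoke the hypothesis on $I$ and the monotonicity of $Q_{ir}$ (it only flips from $0$ to $1$). For every nonfaulty $P_i$ and every $j\in I$, eventually $Q_{ir}(j)=1$; at that moment $c_{ir}<k$ on $E$, so step~2 makes $P_i$ participate in $BA_{jr}$ with input $1$. Thus, on $E$, for every $j\in I$ all nonfaulty parties participate in $BA_{jr}$ with input $1$. By the Termination property of $BA$, every nonfaulty party almost-surely completes each such $BA_{jr}$, and since $E$ is contained in the event ``all nonfaulty parties participate in $BA_{jr}$'' this almost-sure completion also holds conditioned on $E$; by the Validity property of $BA$ the common output is $1$. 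Intersecting over the at most $|I|$ indices $j\in I$ still leaves a probability-$1$ event, so conditioned on $E$ at least $|I|\geq k$ of the invocations $BA_{jr}$ almost-surely terminate with output $1$ --- contradicting the definition of $E$. Therefore $\Pr[E]=0$, which is the claim.

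I expect the routine parts to be the bookkeeping for the counter (its monotonicity, that each index is counted at most once, and that step~4 is unreachable while $c_{ir}<k$) and for $Q_{ir}$. The main obstacle is the care required in the probabilistic step: one must confirm that the almost-sure Termination guarantee of $BA$ survives conditioning on the positive-probability event $E$, which hinges on the inclusion $E\subseteq\{\text{no nonfaulty party's counter reaches }k\}\subseteq\{\text{all nonfaulty parties participate in }BA_{jr}\text{ with input }1\text{ for every }j\in I\}$. Equivalently, one can avoid the contradiction by splitting on whether some nonfaulty party's counter reaches $k$: if it does, the Correctness property of $BA$ immediately gives $k$ instances whose output is $1$; if it does not, the argument above shows that all instances $BA_{jr}$ with $j\in I$ almost surely decide $1$.
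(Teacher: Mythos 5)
Your proposal is correct and follows essentially the same route as the paper: the paper argues directly by the case split you mention at the end (if some nonfaulty party ever inputs $0$, its counter already witnesses $k$ instances deciding $1$ via BA Correctness; otherwise all nonfaulty parties input $1$ to $BA_{jr}$ for every $j\in I$, and Validity plus Termination give $|I|\geq k$ instances almost-surely deciding $1$). Your contradiction framing with the event $E$ is just a repackaging of that same argument, with the counter bookkeeping and the conditioning step handled correctly.
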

\begin{proof}
Note that $c_{ir}$ is incremented only when $BA_{jr}$ terminates with output 1.
In addition, every nonfaulty party $P_i$ inputs 0 to any $BA_{jr}$ invocation only after having $c_{ir}\geq k$.
This means that if some nonfaulty party inputs 0 to some invocation of $BA_{jr}$ then it must have completed at least $k$ prior invocations with output 1.
From the correctness property of protocol $BA$, every other nonfaulty party also outputs 1 for the same invocations of $BA$, which proves our lemma. 
Thus, assume no nonfaulty party inputs the value $0$ to any invocation of $BA_{jr}$ ever for any $j\in\left[n\right]$.
In that case, every nonfaulty party $P_i$ invokes $CommonSubset_r$, and eventually for every $j\in I$ $Q_{ir}\left(j\right)=1$.
Thus every nonfaulty party begins participating in $BA_{jr}$ with input 1 for every $j\in I$.
From the Validity and Termination properties of $BA$ all nonfaulty parties almost-surely complete those invocations of $BA_{jr}$ with output 1.
Since $\left|I\right|\geq k$, this completes the proof.
\end{proof}

\begin{theorem}
Protocol $CommonSubset$ is a common subset protocol for any number of faulty parties $t$ such that $3t+1\leq n$.
\end{theorem}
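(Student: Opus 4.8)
The plan is to establish the two Correctness clauses and the two Termination clauses of the common subset definition, using the Correctness, Validity and Termination properties of the underlying $BA$ protocol together with Lemma~\ref{lem:cs_size}, which already supplies the ``at least $k$ of the $BA_{jr}$ terminate with output $1$'' guarantee under the appropriate hypothesis. The central quantity to track is each party's counter $c_{ir}$, and the key structural fact is that a nonfaulty party inputs $0$ to an instance $BA_{jr}$ only after its own counter has reached $k$ (step~4), and reaches step~5 (where it outputs) only after passing step~4.

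For \textbf{Correctness}, I would first note that any nonfaulty party completing $CommonSubset_r$ has obtained the output $b_{jr}$ of $BA_{jr}$ for every $j\in[n]$; by the Correctness property of $BA$ these outputs agree across nonfaulty parties, so they all output the same set $S=\{j\mid b_{jr}=1\}$. For $|S|\geq k$: reaching step~5 requires passing step~4, i.e.\ having $c_{ir}\geq k$, which by construction means $P_i$ completed at least $k$ instances $BA_{jr}$ with output $1$; by $BA$ Correctness those same instances output $1$ for every nonfaulty party, so $|S|\geq k$. Finally, if $j\in S$ then $b_{jr}=1$, so by $BA$ Validity not every nonfaulty party input $0$ to $BA_{jr}$; a nonfaulty party that input $1$ did so only after step~2 set $Q_{ir}(j)=1$, which stays $1$ by monotonicity, giving the required witness.

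For \textbf{Termination}, first clause: assume all nonfaulty parties invoke the protocol and the set $I$ of the hypothesis exists. Lemma~\ref{lem:cs_size} gives that, almost surely, at least $k$ of the $BA_{jr}$ terminate with output~$1$. The crux is to show every nonfaulty party $P_i$ almost surely reaches $c_{ir}\geq k$: since each $j\in I$ eventually has $Q_{ir}(j)=1$, while $c_{ir}<k$ the party $P_i$ inputs $1$ to $BA_{jr}$ for all $j\in I$, and similarly for every other nonfaulty party; so either some nonfaulty party reaches $c\geq k$ --- in which case, by $BA$ Correctness, at least $k$ instances already output $1$ for all nonfaulty parties --- or no nonfaulty party ever inputs $0$, in which case $BA$ Validity and Termination force all $|I|\geq k$ instances $BA_{jr}$, $j\in I$, to terminate with output~$1$ for every nonfaulty party; either way $c_{ir}$ reaches $k$. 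Once $c_{ir}\geq k$, step~4 makes $P_i$ participate in every remaining $BA_{jr}$, so all $n$ instances are invoked by all nonfaulty parties and hence terminate by $BA$ Termination, and $P_i$ completes step~5. For the second clause, if a nonfaulty party $P_i$ completed $CS$ then it completed at least $k$ instances $BA_{jr}$ with output $1$; for each such $j$, $BA$ Validity gives a nonfaulty party with $Q(j)=1$, so under the $Q$-spreading hypothesis every participating nonfaulty party eventually has $Q(j)=1$, inputs $1$, and --- by $BA$ Termination, since $P_i$ completed that instance --- completes $BA_{jr}$ with output $1$; hence it too reaches $c\geq k$, participates in all $BA_{jr}$, completes each of them (again since $P_i$ did), and completes $CS$.

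The main obstacle is the termination analysis --- in particular, showing that each nonfaulty party's counter provably reaches $k$ and that, afterwards, no $BA_{jr}$ instance is left with too few nonfaulty participants to terminate. This requires care about exactly when each nonfaulty party begins participating in each instance (step~2 with input $1$ versus step~4 with input $0$) and about propagating the facts ``this instance output $1$'' and ``this instance terminated'' between parties via $BA$ Correctness and the second Termination clause of $BA$. The Correctness clauses, by contrast, are essentially immediate from $BA$ Correctness and Validity together with the monotonicity of the dynamic predicate.
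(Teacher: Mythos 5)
Your proposal follows essentially the same route as the paper: agreement on the output set via the Correctness of $BA$, the witness property via $BA$ Validity and the monotonicity of $Q$, $\left|S\right|\geq k$ via the counter, and Termination via Lemma~\ref{lem:cs_size}, the counters reaching $k$, and step~4 forcing every party into all $n$ instances; your second Termination clause is argued exactly as in the paper. However, one step is genuinely weaker than the paper's treatment. Your proof of $\left|S\right|\geq k$ rests on the claim that ``reaching step~5 requires passing step~4, i.e.\ having $c_{ir}\geq k$.'' Steps 2--4 are event-driven rules, not sequential phases: a party for which $Q_{ir}\left(j\right)$ eventually becomes $1$ for every $j\in\left[n\right]$ joins all $n$ instances through step~2 alone, and nothing in the protocol text prevents it from obtaining all outputs $b_{jr}$ (and hence outputting at step~5) while its counter is still below $k$. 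The paper therefore avoids this shortcut and argues by contradiction: if $\left|S_r\right|<k$, then since outputs agree across nonfaulty parties every counter stays below $k$, so (as $k\leq n$) some $BA_{jr}$ output $0$ even though every nonfaulty party that participated in it input $1$, contradicting $BA$ Validity. Your conclusion is correct, but it needs this argument (or an explicit sequential-execution assumption that the paper does not make).

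A second, smaller looseness: in the witness argument you pass from $b_{jr}=1$ to ``some nonfaulty party input $1$'' via Validity, but ``not all nonfaulty parties input $0$'' also covers the case that some nonfaulty parties never joined $BA_{jr}$ at all, so Validity as stated does not directly apply. The paper closes this with an indistinguishability remark: to the nonfaulty parties that did participate (all with input $0$), the run is identical to one in which every nonfaulty party has input $0$ and some are merely slow, so the output would be $0$. The same remark is needed to make the case analysis inside your Termination argument fully rigorous, and, as you yourself anticipate, transferring ``this instance output $1$'' to the other nonfaulty parties requires the second Termination clause of $BA$ together with the guarantee that they actually participate, which the paper secures via the set $I$ and step~4.
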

\begin{proof}
Each property is proven separately.

\textbf{Correctness.}
If two nonfaulty parties $P_i,P_l$ complete $Common\-Subset_r$ then they must have completed $BA_{jr}$ for every $j\in\left[n\right]$.
From the correctness property of $BA$, they completed each of those invocation with the same output $b_{jr}$ and thus both output $S_r=\left\{j|b_{jr}=1\right\}$.
Next, show that for every $j\in S_r$, $Q_{ir}\left(j\right)=1$ for at least one nonfaulty party $P_i$.
Assume by way of contradiction $Q_{ir}\left(j\right) = 0$ for every nonfaulty party $P_i$ for some $j\in S_r$.
If that is the case, and some nonfaulty party completed $CommonSubset_r$, every nonfaulty party that participated in $BA_{jr}$ at that point must have input 0. 
From those parties' point of view, this run is identical to one in which all nonfaulty parties' inputs are 0, and some might be slow.
From the Validity property of $BA$, all nonfaulty parties must have then output 0 in $BA_{jr}$. 
However, in that case $b_{jr}\neq 1$, and thus $j\notin S_r$ reaching a contradiction.
Finally, show that $\left|S_r\right|\geq k$.
Assume by way of contradiction $\left|S_r\right|<k$. In that case, all parties completed all invocations of $BA_{jr}$, with at most $k-1$ terminating with output 1. 
Since nonfaulty parties increment $c_{ir}$ exactly once for every $BA$ session that outputs the value $1$, this means that for every nonfaulty party $P_i$, $c_{ir}<k$. 
Since $k\leq n$, $BA_{jr}$ terminated with output 0 for at least one $j\in\left[n\right]$. 
Observe $BA_{jr}$ for that $j$.
Nonfaulty parties participate in any $BA_{jr}$ session only if either $Q_{ir}\left(j\right) = 1$ or $c_i\geq k$.
Since $c_i<k$, $Q_{ir}\left(j\right)$ must equal $1$ at the time of invoking $BA_{jr}$ for every nonfaulty party $P_i$.
From the Validity property of $BA_{jr}$, all nonfaulty parties must output 1 in $BA_{jr}$ reaching a contradiction.

\textbf{Termination.}
First assume that all nonfaulty parties participate in the protocol, and that there exists some set $I\subseteq\left[n\right]$ such that $\left|I\right|\geq k$, and that for every nonfaulty party $P_i$ and $j\in I$ eventually $Q_{ir}\left(j\right)=1$ almost-surely. 
From lemma~\ref{lem:cs_size}, all nonfaulty parties almost-surely eventually complete at least k invocations of $BA_{jr}$ with output 1.
At that point, $c_{ir}\geq k$ holds for every nonfaulty party $P_i$.
Because of line 4, every nonfaulty party $P_i$ participates in $BA_{jr}$ for every $j\in\left[n\right]$ such that $Q_{ir}\left(j\right)=0$ at that point in time.
It is important to note that if $Q_{ir}\left(j\right)\neq 0$ then it must equal 1, which means that $P_i$ has already invoked $BA_{jr}$ with input 1 previously.
In other words, all nonfaulty parties have invoked $BA_{jr}$ for every $j\in\left[n\right]$, so from the Termination property of $BA$ they almost-surely complete all of those invocations.
At that point they reach line 6 of the protocol, and complete $CommonSubset_r$.

For the second part of the property observe some nonfaulty party $P_l$ that participates in the $CommonSubset$ protocol. 
If some nonfaulty party $P_i$ completed the $CommonSubset_r$ protocol, it must have completed the $BA_{jr}$ invocation for every $j\in\left[n\right]$.
Let $S_r$ be $P_i$'s output in this invocation of the $CommonSubset_r$ protocol.
From the Correctness property of $CommonSubset_r$, for every $j\in S_r$, $Q_{kr}\left(j\right)=1$ for some nonfaulty party $P_k$.
Since for some nonfaulty party $P_k$ $Q_{kr}\left(j\right)=1$, by assumption eventually $Q_{lr}\left(j\right)=1$ as well.
At that point, if $P_l$ hasn't started participating in $BA_{jr}$ with input $0$, it starts participating in it with input $1$.
Since $P_i$ completed each of those $BA$ invocations, from the Termination property of $BA$, $P_l$ almost-surely completes them as well.
Note that after completing the $CommonSubset_r$ invocation, all nonfaulty parties continue participating in all relevant $BA$ invocations until they terminate.
From the Correctness property of $BA$, party $P_l$ outputs $1$ in every $BA_{jr}$ invocation such that $j\in S_r$ because $P_i$ must have output $1$ in that invocation as well.
From the Correctness Property of $CommonSubset_r$, $\left|S\right|\geq k$ and thus at that point $c_{lr}\geq k$.
At that point, $P_l$ inputs $0$ to every $BA$ invocation it hasn't started participating in yet.
Following similar arguments, from the Termination property of $BA$ $P_l$ almost-surely completes all of those invocations and then completes the protocol.

\end{proof}

\section{Completion of the Proof of Theorem~\ref{thm:coinflip}}
In order to complete the proof of the Correctness property, it is left to show that $\Pr\left[\left|\left\{r|c'_r=1\right\}\right|> \frac {k}{2}+n^2\right]\geq \frac{1}{2} - \epsilon$.
If that is the case, every nonfaulty party that completes the protocol outputs 1.
Since for every $r\in\left[k\right]$, $\Pr\left[c'_r=1\right]=\frac{1}{2}=\Pr\left[c'_r=0\right]$, the case for 0 is entirely symmetric.
Define the random variable $X=\left|\left\{r|c'_r=1\right\}\right|$. Each $c'_r$ is an independent Bernoulli variable with probability $\frac{1}{2}$ of being 1, and thus $X\sim Bin\left(k,\frac{1}{2}\right)$.
In this analysis we use the fact that:
\begin{align*}
    n! & \leq e\cdot n^{n+\frac{1}{2}}\cdot e^{-n} \\ 
    n! & \geq \sqrt{2\pi}\cdot n^{n+\frac{1}{2}}\cdot e^{-n}
\end{align*}

Start by bounding the size of $\binom{2n}{n}$ for any $n$:

\begin{align*}
    \binom{2n}{n} & = \frac{\left(2n\right)!}{\left(n!\right)^2} \\ 
    & \leq \frac{e\left(2n\right)^{2n+\frac{1}{2}}e^{-2n}}
    {\left(\sqrt{2\pi}\left(n\right)^{n+\frac{1}{2}}e^{-n}\right)^2} \\
    & = \frac{e}{2\pi}\cdot\frac{\left(2n\right)^{2n+\frac{1}{2}}}{\left(n\right)^{2n+1}} \\
    & = \frac{e}{2\pi}\cdot 2^{2n+\frac{1}{2}}\cdot \frac{1}{\sqrt{n}}
\end{align*}

Denote $k=4\ceil{c^2n^4}$ with $c=\frac{e}{\epsilon\cdot\pi}$, and $\mu=\frac{k}{2}=2\ceil{c^2n^4}$. 
Now bound the probability that $X$ is very close to $\mu$:
\begin{align*}
    \Pr\left[\mu -n^2\leq X \leq \mu + n^2\right] & = \sum_{\mu -n^2\leq l \leq \mu + n^2} \binom{2\mu}{l}\left(\frac{1}{2}\right)^{2\mu} \\
    & \leq \left(2n^2+1\right) \binom{2\mu}{\mu}\left(\frac{1}{2}\right)^{2\mu} \\
    & \leq \left(2n^2+1\right) \frac{e}{2\pi}\cdot 2^{2\mu+\frac{1}{2}}\cdot \frac{1}{\sqrt{\mu}}
    \left(\frac{1}{2}\right)^{2\mu} \\
    & = \left(2n^2+1\right) \cdot \frac{e}{2\pi}\cdot \frac{1}{\sqrt{\mu}}
    \cdot\sqrt{2}
\end{align*}

Substituting back $\mu=2\ceil{c^2n^4}$:
\begin{align*}
    \Pr\left[\mu -n^2\leq X \leq \mu + n^2\right] & \leq 
    \left(2n^2+1\right) \cdot \frac{e}{2\pi}\cdot \frac{1}{\sqrt{\mu}}
    \cdot\sqrt{2} \\
    & = \left(2n^2+1\right) \cdot \frac{e}{2\pi}\cdot \frac{1}{\sqrt{2\ceil{c^2n^4}}} \cdot\sqrt{2}\\
    & \leq \left(2n^2+1\right) \cdot \frac{e}{2\pi}\cdot \frac{1}{cn^2} \\
    & = \frac{2n^2 + 1}{n^2}\cdot \frac{e}{2\pi}\cdot \frac{1}{c} \\
    & \leq \frac{2e}{\pi}\cdot\frac{1}{c}
\end{align*}

Since the cases that $X>\mu+n^2$ and $X<\mu-n^2$ are entirely symmetric:

\begin{align*}
    \Pr\left[X>\mu + n^2\right] & = \frac{1}{2}\left(1-\Pr\left[\mu -n^2\leq X \leq \mu + n^2\right]\right) \\
    & \geq \frac{1}{2}\left(1-\frac{2e}{\pi}\cdot\frac{1}{c}\right) \\
    & = \frac{1}{2} - \frac{e}{\pi}\cdot \frac{1}{c}
\end{align*}

Finally, substituting $c = \frac{e}{\epsilon\cdot\pi}$ and $\mu=\frac{k}{2}$:

\begin{align*}
    \Pr\left[X>\frac{k}{2} + n^2\right] 
    & \geq  \frac{1}{2} - \frac{e}{\pi}\cdot \frac{1}{c} \\
    & = \frac{1}{2} - \frac{e}{\pi}\cdot \frac{\epsilon\cdot\pi}{e} \\
    & = \frac{1}{2}-\epsilon
\end{align*}
which completes the proof.

\section{Proof of Theorem~\ref{thm:fairchoice}}
\fairchoice*
\begin{proof}
Each property is proven individually.
Throughout the analysis, unless explicitly stated differently all logarithms are treated as logarithms with base 2.

\textbf{Termination.}
If all nonfaulty parties participate in the protocol and have the same input $m$, they all compute the same values $l$ and $\epsilon$.
They then all participate in the $CoinFlip$ protocol $l$ times with the same parameter $\epsilon$ and from the Termination property of the $CoinFlip$ protocol, they all almost-surely complete each of those invocations.
Afterwards every nonfaulty party performs some local computations and completes the protocol.
On the other hand, if some nonfaulty party completes the $FairChoice$ protocol, it must have first completed all $l$ invocations of the $CoinFlip$ protocol with parameter $\epsilon$.
Observe some other nonfaulty party $P_i$ that participates in the $Fair Choice$ protocol with the same input $m$.
It must have computed the same values $l$ and $\epsilon$, and then participated in $l$ invocations of the $CoinFlip$ protocol with the same parameter $\epsilon$.
Since some nonfaulty party completed all $l$ of those invocations, from the Termination property of the $CoinFlip$ protocol, $P_i$ almost-surely completes them as well.
Afterwards $P_i$ performs some local computations and completes the protocol.

\textbf{Correctness.} 
Observe two nonfaulty parties that complete the protocol.
Since they both have the same input $m$, they must have computed the same value $l$, and participated in $l$ invocations of the $CoinFlip$ protocol.
From the Correctness property of the $CoinFlip$ protocol, for every $i\in\left[l\right]$ they must have output the same value $b_i\in\left\{0,1\right\}$ in the $i$'th invocation of the $CoinFlip$ protocol.
This means that they compute the same number $r$, and then both output output $r \mod m\in\left\{0,\ldots,m-1\right\}$.

\textbf{Validity.}
Intuitively, there are more values in $G$ than values not in $G$ and each value $i\in G$ has almost the same number of numbers $k\in\left[l\right]$ such that $k \equiv i \mod m$.
Furthermore, each number in $\left[l\right]$ has nearly the same probability of being sampled.
If every number had the exact same probability of being sampled, and each value $i\in G$ had exactly the same number of numbers $k\in\left[l\right]$ such that $k \equiv i \mod m$ it is clear that the property holds.
It is only left to show that these slight differences aren't big enough for the property not to hold.

Consider the case in which all nonfaulty parties that participate in the protocol have the same input $m$.
Let $N,l,\epsilon$ be defined as they are in the protocol.
Consider some $G\subseteq\left\{0,\ldots,m-1\right\}$ such that $\left|G\right|>\frac{m}{2}$.
For every $i\in\left\{0,\ldots,m-1\right\}$ define the set $S_i=\left\{j\in\left\{0,\ldots,N-1\right\}|j\equiv i \mod m\right\}$.
Define $S=\cup_{i\in G} S_i$.
First, bound the size of $S$.
For every $i\in\left\{0,\ldots,m-1\right\}$, 
$\left|S_i\right| \geq \lfloor \frac{N}{m} \rfloor \geq \frac{N}{m}-1$.

Since $\left|G\right|,m\in\mathbb{N}$:
\begin{align*}
    \left|G\right| &> \frac{m}{2} \\
    2\left|G\right| &> m \\
    2\left|G\right| &\geq m+1 \\
    \left|G\right| &\geq \frac{m}{2}+\frac{1}{2} \\
\end{align*}
Note that for every $i\neq j$ $S_i\cap S_j=\emptyset$ and thus:
\begin{align*}
    \left|S\right| &=\sum_{i\in G} \left|S_i\right|\\
    &\geq \left(\frac{N}{m}-1\right)\left|G\right|\\
    &\geq\left(\frac{N}{m}-1\right)\left(\frac{m}{2}+\frac{1}{2}\right) \\
    &=\left(N-m\right)\left(\frac{1}{2}+\frac{1}{2m}\right)
\end{align*}

As shown in the proof of the Correctness property, all nonfaulty parties that complete the protocol first complete $l$ invocations of the $CoinFlip$ protocol, output the same bits $b_i$ for every $i\in\left[l\right]$, then compute the same value $r$ and output $r\mod m$.
In that case, all nonfaulty parties output some $i\in G$ if and only if $r\in S$.
From the Correctness property of the $CoinFlip$ protocol, for every $j\in\left[l\right]$ and $b\in\left\{0,1\right\}$, $\Pr\left[b_j=b\right]\geq \frac{1}{2}-\epsilon$ regardless of the adversary's actions.
For every number $r$ denote $r_i$ to be the $i'th$ bit in its binary representation.
Therefore:
\begin{align*}
    \Pr\left[i\in G\right] & = \Pr\left[r\in S\right] \\
    & =\sum_{r'\in S} \Pr\left[r=r'\right] \\
    & =\sum_{r'\in S} \Pr\left[\bigwedge_{j=1}^l r_j=r'_j\right] \\
    & \geq \sum_{r'\in S} \left(\frac{1}{2}-\epsilon\right)^l \\
    & =\left|S\right|\left(\frac{1}{2}-\epsilon\right)^l \\
    & \geq \left(N-m\right)
    \left(\frac{1}{2}+\frac{1}{2m}\right)
    \left(\frac{1}{2}-\epsilon\right)^{\log N} \\
    & = \left(N-m\right)
    \left(\frac{1}{2}+\frac{1}{2m}\right)
    \left(\frac{1}{2}\right)^{\log N}
    \left(1-2\epsilon\right)^{\log N} \\
    & = \left(1-\frac{m}{N}\right)
    \left(\frac{1}{2}+\frac{1}{2m}\right)
    \left(1-\frac{2}{100m\log m}\right)^{\log N} \\
    &\geq \left(1-\frac{m}{2m^2}\right)
    \left(\frac{1}{2}+\frac{1}{2m}\right)
    \left(1-\frac{1}{50m\log m}\right)^{\log 4m^2} \\
    & = \left(\frac{1}{2}+\frac{1}{2m}-\frac{1}{4m}-\frac{1}{4m^2}\right) \\
    & \cdot\left(\left(1-\frac{1}{50m\log m}\right)^{m\log m}\right)^{\frac{2\log m+2 }{m\log m}}
\end{align*}

At this point recall that $m\geq 3$.
First, clearly $\frac{2\log m +2}{m\log m}\leq \frac{4\log m }{m\log m}=\frac{4}{m}$ for any $m\geq 2$.
Secondly, note that the expression $\left(1-\frac{x}{n}\right)^n$ approaches $e^{-x}$ from below in a monotonously increasing manner for $1>x>0$.
Plugging in $m=3$, $\left(1-\frac{1}{50\cdot 3\log 3}\right)^{3\log 3}\geq \frac{99}{100}e^{-\frac{1}{50}}$, and from the previous observation this means $\left(1-\frac{1}{50 m\log m}\right)^{m\log m}\geq \frac{99}{100}e^{-\frac{1}{50}}$ for every $m\geq 3$.
Combining these observations:
\begin{align*}
    &\Pr\left[i\in G\right] \geq\\ 
    & \geq \left(\frac{1}{2}+\frac{1}{2m}-\frac{1}{4m}-\frac{1}{4m^2}\right)
    \left(\left(1-\frac{1}{50m\log m}\right)^{m\log m}\right)^{\frac{2\log m+2 }{m\log m}} \\
    & \geq \left(\frac{1}{2}+\frac{1}{4m}-\frac{1}{4m^2}\right)\left(\frac{99}{100}e^{-\frac{1}{50}}\right)^\frac{4}{m}
\end{align*}
First, note that clearly:
\begin{align*}
    \lim_{m\to\infty}\left(\frac{1}{2}+\frac{1}{4m}-\frac{1}{4m^2}\right)\left(\frac{99}{100}e^{-\frac{1}{50}}\right)^\frac{4}{m}=\left(\frac{1}{2}\right)\left(1\right)=\frac{1}{2}
\end{align*}
In addition, setting $m=3$ and checking numerically:
\begin{align*}
    \left(\frac{1}{2}+\frac{1}{4\cdot3}-\frac{1}{4\cdot 3^2}\right)\left(\frac{99}{100}e^{-\frac{1}{50}}\right)^\frac{4}{3}\approx 0.534 > 0.5
\end{align*}
Next observe the derivative of the expression with respect to $m$ and check when it is negative.
\begin{align*}
    &\frac{d}{d m}\left(\frac{1}{2}+\frac{1}{4m}-\frac{1}{4m^2}\right)
    \left(\frac{99}{100}e^{-\frac{1}{50}}\right)^{\frac{4}{m}} =\\
    &=\left(-\frac{1}{4m^2}+\frac{1}{2m^3}\right)
    \left(\frac{99}{100}e^{-\frac{1}{50}}\right)^{\frac{4}{m}} \\
    &+ \left(\frac{1}{2}+\frac{1}{4m}-\frac{1}{4m^2}\right)
    \left(\frac{99}{100}e^{-\frac{1}{50}}\right)^{\frac{4}{m}}
    \left(\frac{-4\ln\left(\frac{99}{100}e^{-\frac{1}{50}}\right)}{m^2}\right) \\
    & = \left(\frac{99}{100}e^{-\frac{1}{50}}\right)^{\frac{4}{m}}
    \left(\frac{2-m}{4m^3}-
    \frac{16m\left(\frac{1}{2}+\frac{1}{4m}-\frac{1}{4m^2}\right)\ln\left(\frac{99}{100}e^{-\frac{1}{50}}\right)}{4m^3}\right) \\
    & = \frac{1}{4m^3}
    \left(\frac{99}{100}e^{-\frac{1}{50}}\right)^{\frac{4}{m}}
    \left(2-m-\left(8m+4-\frac{4}{m}\right)\ln\left(\frac{99}{100}e^{-\frac{1}{50}}\right)\right)
\end{align*}
Now note that for any $m\geq 3$:
\begin{align*}
    \frac{1}{4m^3}
    \left(\frac{99}{100}e^{-\frac{1}{50}}\right)^{\frac{4}{m}} > 0
\end{align*}
and thus the whole expression is negative if:
\begin{align*}
    0 &> 2-m-\left(8m+4-\frac{4}{m}\right)\ln\left(\frac{99}{100}e^{-\frac{1}{50}}\right) \\
    & = 2-m+\left(8m+4-\frac{4}{m}\right)
    \ln\left(\frac{100}{99}e^{\frac{1}{50}}\right)
\end{align*}
Numerically we can find that $0.031\geq\ln\left(\frac{100}{99}e^{\frac{1}{50}}\right)>0$ and thus:
\begin{align*}
    & 2-m+\left(8m+4-\frac{4}{m}\right)
    \ln\left(\frac{100}{99}e^{\frac{1}{50}}\right) \leq\\
    &\leq 2-m+\left(8m+4\right)
    \ln\left(\frac{100}{99}e^{\frac{1}{50}}\right) \\
    &\leq 2-m+\left(8m+4\right)0.031 \\
    & = 2-m+0.248m+0.124\\
    & = 2.124-0.752m
\end{align*}
Finally check if this term is negative:
\begin{align*}
    2.124-0.752m&<0\\
    2.124&<0.752m \\
    \frac{2.124}{0.752}\approx 2.824 &< m
\end{align*}
Since $m\geq 3$:
\begin{align*}
    2-m+\left(8m+4-\frac{4}{m}\right)
    \ln\left(\frac{100}{99}e^{\frac{1}{50}}\right)<0
\end{align*}
and thus for every $m\geq 3$:
\begin{align*}
    \frac{d}{d m}\left(\frac{1}{2}+\frac{1}{4m}-\frac{1}{4m^2}\right)
    \left(\frac{99}{100}e^{-\frac{1}{50}}\right)^{\frac{4}{m}}
    < 0
\end{align*}

Combining the fact that at $m=3$ the expression is greater than $\frac{1}{2}$, that the derivative is negative for any $m\geq 3$ and that the expression approaches $\frac{1}{2}$ as $m$ approaches infinity, for any $m\geq 3$:

\begin{align*}
    \Pr\left[i\in G\right] & \geq \left(\frac{1}{2}+\frac{1}{4m}-\frac{1}{4m^2}\right)\left(\frac{99}{100}e^{-\frac{1}{50}}\right)^\frac{4}{m} > \frac{1}{2}
\end{align*}
\end{proof}

\section{Proof of Theorem~\ref{thm:fba}}
\fba*
\begin{proof}
Again, each property is proven individually.

\textbf{Termination.}
If all nonfaulty parties participate in the $FBA$ protocol, they all A-Cast some values in step 1 and participate in each other's A-Casts.
Since all of the senders in those A-Casts are nonfaulty and all nonfaulty parties participate in all of those A-Casts, from the Termination property of A-Cast they all complete each of those invocations.
This means that for every pair of nonfaulty parties $P_i,P_j$ eventually $Q_i\left(j\right)=1$.
In other words, since there are at least $n-t$ nonfaulty parties there exists a set $I\subseteq\left[n\right]$ such that for every nonfaulty party $P_i$, eventually $\forall j\in I \ Q_i\left(j\right)=1$.
From the Termination property of $CommonSubset$, all nonfaulty parties almost-surely eventually complete the protocol.
From the Correctness property of $CommonSubset$, all nonfaulty parties output the same $S\subseteq\left[n\right]$ and for every $j\in S$ there exists some nonfaulty party $P_i$ such that $Q_i\left(j\right)=1$.
A nonfaulty party sets $Q_i\left(j\right)=1$ only if it completed $P_j$'s A-Cast, and from the Termination property of A-Cast, all nonfaulty parties that participate in that A-Cast complete it as well.
This means that all nonfaulty parties complete all relevant A-Cast invocations and then finish step 4 of the protocol.
From the Correctness property of A-Cast, all nonfaulty parties receive the same value $x'_k$ in $P_k$'s A-Cast for every $k\in S$.
If some nonfaulty party completes the protocol in step 5, then there exists some $x$ such that $\left|\left\{x'_j=x|j\in S\right\}\right| > \frac{m}{2}$.
Every other nonfaulty party sees that this holds as well and completes the protocol in step 5.
Otherwise, all nonfaulty parties participate in $FairChoice\left(m\right)$ with the same $m=\left|S\right|$.
Note that from the correctness property of the $CommonSubset$ protocol, they all output some set $S$ such that $\left|S\right|\geq n-t\geq 3$.
From the Termination property of the $FairChoice$ protocol, they all almost-surely complete the $FairChoice$ protocol as well.
Afterwards they perform some local computations and complete the protocol.

For the second part of the property, assume some nonfaulty party $P_i$ completed the $FBA$ protocol.
This means it must have completed both the $CommonSubset$ protocol and the $FairChoice$ protocol if it didn't complete the protocol in step 5.
Observe some other nonfaulty party $P_j$ that participates in the protocol.
First, $P_j$ A-Casts some value and participates in every other party's A-Cast.
$P_j$ then participates in the $CommonSubset$ protocol.
Note that every nonfaulty party that participates in the $CommonSubset$ protocol also participates in each of the A-Cast invocations.
For every pair of nonfaulty parties $P_k,P_l$ that participate in $CommonSubset$ and value $m\in\left[n\right]$, if $Q_{k}\left(m\right)=1$ $P_k$ must have completed $P_m$'s A-Cast.
From the Termination property of A-Cast, $P_l$ will eventually complete $P_m$'s too A-Cast and set $Q_{l}\left(m\right)=1$.
Therefore, the conditions of the second part of the Termination property of the $CommonSubset$ protocol hold, and thus since $P_i$ completed the $CommonSubset$ protocol $P_j$ almost-surely completes it as well with some output $S$.
From the Correctness property of $CommonSubset$, for every $k\in S$, for some nonfaulty party $P_l$, $Q_{l}\left(k\right)=1$.
This means that $P_l$ completed $P_k$'s A-Cast, which means $P_j$ does so as well.
If $P_j$ completes the protocol in step $5$ after completing all of the A-Cast invocations we are done.
Otherwise, after completing all of the relevant A-Casts, $P_j$ participates in the $FairChoice$ protocol. 
In that case there must not exist any $x$ such that $\left|\left\{x'_k=x|k\in S\right\}\right|>\frac{m}{2}$.
From the Correctness property of A-Cast, $P_i$ must have output the same value in each of those A-Casts, seen that there does not exist any $x$ such that $\left|\left\{x'_k=x|k\in S\right\}\right|>\frac{m}{2}$, and then invoked and completed the $FairChoice$ protocol.
From the Termination property of the $FairChoice$ protocol, $P_j$ almost-surely completes it as well, performs some local computations, and then finally completes the $FBA$ protocol.

\textbf{Correctness.} 
Let $P_i,P_j$ be two nonfaulty parties that completed the protocol.
They must have both first completed the $Common\-Subset$ protocol and from its Correctness property output the same set $S\subseteq\left[n\right]$.
They then completed $P_k$'s A-Cast for every $k\in S$.
From the Correctness property of A-Cast, they both received the same value $x'_k$ for every $k\in S$.
If there exists some $x$ such that $\left|\left\{x'_k=x|k\in S\right\}\right| > \frac{m}{2}$ then they both must have output that value and completed the protocol.
Note that clearly there cannot be more than one such value.
If there isn't any such value $x$, then they both participated in the $FairChoice$ protocol and from the Correctness property of the protocol output the same value $k\in\left\{0,\ldots,m-1\right\}$.
They then both took the $k$'th biggest value in $S$ and output the value corresponding to that party's A-Cast.
Again, from the Correctness property of A-Cast $P_i,P_j$ must have received the same value in that A-Cast and thus output the same value.

\textbf{Validity.} 
First assume that all nonfaulty parties have the same input $x$.
In that case, in the beginning of the protocol each nonfaulty party that participates in the protocol  A-Casts $x$.
Let $P_i$ be some nonfaulty party that completed the protocol.
It must have first participated in all relevant A-Casts and in the $CommonSubset$ protocol, and completed the $CommonSubset$ protocol with some output $S$.
From the Correctness property of $CommonSubset$, $\left|S\right|\geq n-t$.
$P_i$ then completed $P_j$'s A-Cast for every $j\in S$.
From the Validity property of A-Cast, for every nonfaulty party $P_j$ such that $j\in S$, $P_i$ received the value $x'_j=x$.
Let $G$ be the set of all $j\in S$ such that $P_j$ is nonfaulty.
Since there are at most $t$ faulty parties $P_k$ such that $k\in S$, $\left|G\right|\geq \left|S\right|-t=m-t$.
Note that $m\geq n-t > 2t$ and thus:
\begin{align*}
    m&>2t \\
    \frac{m}{2}&>t \\
    m-t&>\frac{m}{2}
\end{align*}
Since $\left|G\right|\geq m-t>\frac{m}{2}$, and for every $j\in G$, $P_i$ received the value $x'_j=x$, $P_i$ sees that $\left|\left\{x'_j=x|j\in S\right\}\right|>\frac{m}{2}$.
This means that in step 5 $P_i$ outputs $x$ and completes the protocol.

On the other hand, if it is not the case that all nonfaulty parties had the same input, for every nonfaulty party $P_j$ let $x_j$ be its input.
Observe some nonfaulty party $P_i$ that completed the protocol.
Following the exact same arguments as above, $P_i$ must have participated in all A-Casts, completed the $CommonSubset$ protocol with some output $S$ such that $m=\left|S\right|\geq n-t$, and completed $P_j$'s A-Cast for every $j\in S$.
Note that from the Validity property of A-Cast, for every nonfaulty party $P_j$, $P_i$ received the value $x_j=x'_j$ in $P_j$'s A-Cast.
If $P_i$ output some value in step 5, it must have found some value $x$ such that $\left|\left\{x'_j=x|j\in S\right\}\right|>\frac{m}{2}$.
As previously shown $\frac{m}{2}>t$, and thus $\left|\left\{x'_j=x|j\in S\right\}\right|\geq t+1$.
There are $t$ faulty parties at most, which means that there must be some nonfaulty party $P_j$ such that $x_j=x'_j=x$.
In other words, if $P_i$ completed the protocol in step 5 the property holds.
Otherwise, $P_i$ must have invoked and completed protocol $FairChoice$ before completing the $FBA$ protocol.
Define $G$ as defined above.
As previously shown $\left|G\right|>\frac{m}{2}$.
Let $S_G\subseteq \left\{0,\ldots,m-1\right\}$ be all of the numbers $k\in\left\{0,\ldots,m-1\right\}$ such that the $k$'th biggest value in $S$ (as defined in the protocol) is in $G$.
Note that each $k\in\left\{0,\ldots,m-1\right\}$ corresponds to a unique value $j\in S$, and thus $\left|S_G\right|=\left|G\right|>\frac{m}{2}$.
From the Correctness property of $FairChoice$, with probability $\frac{1}{2}$ at the very least $P_i$ outputs some $k\in S_G$.
$P_i$ then finds the corresponding $j\in G$ and outputs $x'_j=x_j$.
Since by definition $P_j$ is a nonfaulty party, $P_i$ output some nonfaulty party's input, completing the proof.
\gs{This whole argument is pretty ugly. I'm inclined to be much less formal and just say why it's true (even though $S_G$ isn't formally defined here either).}
\end{proof}
\end{document}